\newcommand\B{{\bf B}}
\newcommand\bO{{\bf O}}
\newcommand\ba{{\bf a}}
\newcommand\bu{{\bf u}}
\newcommand\x{{\bf x}}
\newcommand\bc{{\bf c}}
\newcommand\e{{\bf e}}
\newcommand\bb{{\bf b}}
\newcommand\y{{\bf y}}
\newcommand\z{{\bf z}}
\newcommand\bv{{\bf v}}
\newcommand\bz{{\bf 0}}
\newtheorem{corollary}{Corollary}
\newtheorem{definition}{Definition}
\newtheorem{example}{Example}
\newtheorem{lemma}{Lemma}
\newtheorem{theorem}{Theorem}
\newtheorem{algorithm}{Algorithm}[section]
\definecolor{Red}{rgb}{1,0,0}
\definecolor{Blue}{rgb}{0,0,1}
\def\N={{\cal N} \rm}
\begin{document}

\title{\bf Computations and Complexities of Tarski's Fixed Points and Supermodular Games
}

\author{Chuangyin Dang
\\Dept. of Systems Engineering \& Engineering Management\\
City University of Hong Kong\\ Kowloon, Hong Kong SAR, China\\
E-Mail: mecdang@cityu.edu.hk\\
\\
Qi Qi\\
Dept. of Industrial Engineering \& Decision Analytics \\
The Hong Kong University of Science and Technology\\
 Kowloon, Hong Kong SAR, China
\\E-Mail: kaylaqi@ust.hk\\
\\
Yinyu Ye\\
Dept. of Management Science \& Engineering \\
Stanford University\\
 Stanford, CA  94305-4026
\\E-Mail: yinyu-ye@stanford.edu}

\date{ }

\maketitle

\begin{abstract}
We consider two models of computation for Tarski's order preserving function $f$
related to fixed points in
a complete lattice: the oracle function model and the polynomial function model. In both models, we find the first polynomial time algorithm for finding a Tarski's fixed point. In addition, we provide a matching oracle bound for determining the uniqueness in the oracle function model and prove it is Co-NP hard in the polynomial function model. The existence of the pure Nash equilibrium in supermodular games is proved by Tarski's fixed point theorem 
Exploring the difference between
supermodular games and Tarski's fixed point, we also develop the computational results for finding one pure Nash equilibrium and determining the uniqueness of the equilibrium in supermodular games.

%
\end{abstract}

{\bf Keywords:} 
Fixed Point Theorem, Equilibrium Computation, Supermodular Game, Order Preserving Mapping,  
 Co-NP Hardness

\begin{section}{Introduction}

Supermodular games, also known as the games of strategic complements,  are formalized by Topkis in 1979 \cite{t1} and have been extensively studied in the literature,  such as Bernstein and Federgruen \cite{b2}\cite{b1}, Cachon \cite{c1}, Cachon and Lariviere \cite{c2}, Fudenberg and Tirole \cite{fudenberg}, Lippman and McCardle \cite{l1}, Milgrom and Roberts
\cite{milgrom0}\cite{milgrom1}, Milgrom and Shannon \cite{milgrom2}, Topkis \cite{topkis1}, and Vives
\cite{vives0}\cite{vives2}. In supermodular games, the utility function of every player has increasing differences.  Then the best response of a player is a nondecreasing function of other players' strategies. For example, if firm A's competing firm B starts spending more money on research it becomes more advisable for firm A to do the same.

Supermodular games arise in many applied models. They cover most static market models. For example, the investment games, Bertrand oligopoly, Cournot oligopoly all can be modeled as supermodular games. Many models in operations research have also been analyzed as supermodular games. For example, supply chain analysis, revenue management games, price and service competition, inventory competition etc. Recently, the problem of power control in cellular CDMA wireless network is also modeled as a supermodular game.

The existence of a pure Nash equilibrium in any supermodular game is proved by Tarski's fixed point theorem \cite{tarski1}. The well-known Tarski's fixed point theorem (Tarski) asserts that, if $(L, \preceq)$ is a complete lattice and $f$ is order-preserving from $L$ into itself, then there exists some $\x^*\in L$ such that
$f(\x^*)=\x^*$.

%

This theorem plays a crucial role in the study
of supermodular games for economic analysis and has other important applications.
To compute a Nash equilibrium of a supermodular game,
a generic approach is to convert it into the computation of a fixed
point of an order preserving mapping. Recently, an algorithm has been
proposed in Echenique \cite{e1} to find all pure strategy Nash
equilibria of a supermodular game, which motivated to the study in this
paper.

An efficient computational algorithm for finding a Nash equilibrium has been a recognized important technical advantage in applications.
Further, it is sometimes desirable to know if an already-found equilibrium for such applications is unique or not, for the decision whether additional resource should be spent to improve the already found solution.
There were some interesting complexity results in algorithmic game theory
research along this line, on determining whether or not a game has a unique equilibrium point.
For the bimatrix game, Gilboa and Zemel \cite{gilboa} showed that
it is NP-hard to determine whether or not there is a second Nash equilibrium.
For this problem, computing even one
equilibrium (which is know to exist), is already difficult and no polynomial time
algorithms are known: Nash equilibrium for the bimatrix game is known to be PPAD-complete~\cite{CDT}.
Similar cases are known for other problems such as the
market equilibrium computation (Codenotti et al.)\cite{CSVY}.

In this work, we first consider the fixed point computation of order preserving functions over a complete lattice, both for finding a solution and for determining the uniqueness of an already-found solution. Then we study the computational problems for finding one pure Nash equilibrium and determining the uniqueness of the equilibrium in supermodular games.
We are interested in both the oracle function model and the polynomial function model.  For both the fixed point problem and supermodular games, the domain space can be huge. Most interesting discussions consider a succinct representation (see Section \ref{succinct}) of the lattice $(L,\preceq)$ such that the input size is related to $\log |L|$. It is enough for the representation of a variable in a lattice of size $|L|$. Both the oracle function model and the polynomial time function model return the function value $f(x)$ on a lattice node $x$ where $x$ is of size $\log |L|$. They differ in the ways the functions are computed. The polynomial time function model computes $f(x)$ by an explicitly given algorithm, in time polynomial of $\log |L|$. The oracle model, on the other hand, always returns the value in one oracle step. More details comparing those two models can be found in Section \ref{two function}.

\begin{subsection}{Main Results and Related Work}

A partially order set $L$ is defined
 with $\preceq$ as a binary relation on the set $L$ such that
 $\preceq$ is reflexive,
transitive, and anti-symmetric.
A lattice is a partially ordered set
$(L,\preceq)$, in which any two elements $\x$ and $\y$ have a least
upper bound (supremum), $\sup_L(\x,\y)=\inf\{\z\in L\;|\;\x\preceq
\z\mbox{ and }\y\preceq \z\}$, and a greatest lower bound (infimum),
$\inf_L(\x,\y)=\sup\{\z\in L\;|\;\z\preceq \x\mbox{ and }\z\preceq \y\}$,
in the set. A lattice $(L,\preceq)$ is complete if every nonempty
subset of $L$ has a supremum and an infimum in $L$.
Let $f$ be a
mapping from $L$ to itself. $f$ is order-preserving if
$f(\x)\preceq
 f(\y)$ for any $\x$ and $\y$ of $L$ with $\x\preceq \y$.

We focus on the componentwise ordering and lexicographic ordering finite lattices.
Let $L_d=\{\x\in Z^d\;|\;\ba\leq \x\leq \bb\}$, where $\ba$ and $\bb$
are two finite vectors of $Z^d$ with $\ba<\bb$. We denote the componentwise ordering and the lexicographic ordering as $\leq_c$ and $\leq_l$ respectively. Clearly,  $(L_d, \leq_c)$ is a finite lattice with componentwise ordering and $(L_d,\leq_l)$ is a finite lattice with lexicographic ordering.

Let $f_c$ and $f_l$ be an order preserving mapping from $L_d$ into
itself under the componentwise ordering and the lexicographic ordering respectively.

\begin{subsubsection}{Tarski's Fixed Points: Oracle Function Model}

When $f_l(\cdot)$ and $f_c(\cdot)$ are given as oracle functions, we develop a complete understand for finding a Tarski's
fixed point as well as determining uniqueness of the Tarski's fixed point
in both the lexicographic ordering and the componentwise ordering lattices.

We develop an algorithm of time complexity $O((\log^d |L|))$
to find a Tarski's fixed point on the componentwise ordering lattice $(L,\le_c)$, for any constant dimension $d$. This algorithm is based on the binary search method. We first present the algorithm when $d=2$. Follows the similar principle, this algorithm can be generalized to any constant dimension. This is the first known polynomial time algorithm for finding the Tarski's fixed point in terms of the componentwise ordering. In literature, we only have a polynomial time algorithm for the total order lattices (Chang et al.)~\cite{chang1}.

Recently, Mihalis, Kusha and Papadimitriou stated in a private communication that they proved a lower bound of $\Omega(\log^2|L|)$ in the oracle function model for finding a TarskiÕs fixed point in the two dimensional case, and conjectured a lower bound of $\Omega(\log^d|L|)$ for general $d$  (Christos H. Papadimitriou, private communication, March, 2019). Together with our upper bound results, they establish a matching bound of ${\Theta(\log^2|L|)}$ for finding a TarskiÕs fixed point in the two dimensional case. 

On the other hand, given a general lattice $(L,\preceq)$ with one already known fixed point, to find out whether it is unique will take $\Omega(|L|)$ time for any algorithm. For componentwise ordering lattice, we derive a $\Theta(N_1+N_2+\cdots+N_d)$ matching bound for determining the uniqueness of the fixed point, where $L=\{\x\in Z^d\;|\;\ba\leq \x\leq \bb\}$ and $N_i=b_i-a_i$. In addition, we prove this matching bound for both deterministic algorithm and randomized algorithm.

For a lexicographic ordering lattice, it can be viewed as a componentwise ordering lattice with dimension one by an appropriate polynomial time transformation to  change the oracle function for the $d$-dimension space to an oracle function on the $1$-dimension space.
All the above results can be transplanted onto the lexicographic ordering lattice with a set of related parameters.

In literature, a polynomial time algorithm is known only for the total order lattices.
When the lattice $(L,\preceq)$ has a total order, i.e., all the point in the lattice is comparable, there is a matching bound of $\theta(\log |L|)$, where an $\Omega(|L|)$ lower bound is known for general lattices (when the lattice is given as an oracle) in Chang et al.\cite{chang1}.

\end{subsubsection}

\begin{subsubsection}{Tarski's Fixed Points: Polynomial Function Model}

Under the polynomial time function model, our polynomial time algorithm applies when the dimension is any finite constant.
When the dimension is used as a part of the input size in unary, we first
present a polynomial-time reduction of a 3-SAT problem to an order preserving mapping $f$ from a componentwise ordering lattice $L$ into itself. As a
result of this reduction, we obtain that, given $f$ as a polynomial time
function, determining whether $f$ has a unique fixed point in $L$ is a Co-NP
hard problem. Furthermore, even when the dimension is one, we also show that determining the uniqueness of Tarski's fixed point in a lexicographic lattice is Co-NP hard though there exists a polynomial-time algorithm for
computing a Tarski's fixed point in a lexicographic lattice in any dimension.

Our main results for Tarki's fixed point computation are summarized in Table \ref{one Tarski result} and Table \ref{unitarski result}.
\end{subsubsection}

\begin{subsubsection}{Supermodular Games}
For supermodular games, we develop an algorithm to find a pure Nash equilbirum in polynomial time $O(\log N_1\cdots\log N_{d-1})$ in the oracle function model, where $d$ is the total number of players, $N_i$ is the number of strategies of player $i$ and $N_1\le N_2\cdots\le N_d$. It is the first polynomial time algorithm when $d$ is a constant. Thus a pure Nash equilibirum can be found in time $O(poly(\log|L|)\cdot(\log N_1\cdots\log N_{d-1})$ in the polynomial function model, where $|L|=N_1\times N_2\cdots\times N_d$. 
In the polynomial function model, we prove determining the uniqueness is Co-NP-hard. 

In literature, Robinson(1951) \cite{robinson} introduce the iterative method to solve a game and Topkis(1979) \cite{t1} use this method to find a pure Nash equilibrium in supermodular game which takes time $O(N_1+N_2+\cdots+N_d)$. The first non-trivial algorithm for finding pure Nash equilibria is proposed by Echenique in 2007\cite{e1}. However, the algorithm takes expenontial time $O(N1\times N2\times\cdots\times N_d)$ to find the first pure equilibrium in the worst case.

\end{subsubsection}

\begin{table}[H]
\centering
\begin{tabular}{|l|l|l|}
\hline
& Polynomial Function&Oracle Function\\
\hline
Componentwise&$O(poly(\log|L|)\cdot(\log N_1\cdots \log N_d))$ &$O((\log N_1\cdots \log N_d)$ \\
\hline
Lexicographic &$O(poly(\log|L|)\cdot\log|L|)$ & $O(\log|L|)$\\

\hline
\end{tabular}
\caption{Main Results for Finding one Tarski's Fixed Point}\label{one Tarski result}
\end{table}

\begin{table}[H]
\centering
\begin{tabular}{|l|l|l|}
\hline
& Polynomial Function&Oracle Function\\
\hline
Componentwise&Co-NP-Complete &$\Theta(N_1+N_2+\cdots+N_d)$ \\
\hline
Lexicographic &Co-NP-Complete & $\Theta(|L|)$\\

\hline
\end{tabular}
\caption{Main Results for Determining the Uniqueness of Tarski's Fixed Points}\label{unitarski result}
\end{table}


\end{subsection}

%

\begin{subsection}{Organization}
The rest of the paper is organized as follows.
First, in Section 2, we present definitions as well as the difference of the polynomial function model and the oracle function model.
We develop polynomial time algorithms in oracle function model for componentwise ordering and lexicographic ordering in Section 3.
In Section 4, we derive the matching bound for determining the uniqueness of Tarski's fixed point under the oracle function model.
We prove co-NP hardness for determining the uniqueness of Tarski's fixed point under the polynomial function model in Section 5. In Section 6, we develop the computational results for finding one pure Nash equilibrium and determining the uniqueness of the equilibrium in supermdular games.
We conclude with discussion and remarks on our results and open problems in Section 7.
\end{subsection}
\end{section}

\begin{section}{Preliminaries}
In this section, we first introduce the formal definitions of the related concepts as well as the Tarski's fixed point theorem. We next compare the difference between the oracle function model and the polynomial function model.
\begin{subsection}{The Lattice and Tarski's Fixed Point Theorem}

\begin{definition}(Partial Order vs. Total Order)
A relationship $\preceq$ on a set $L$ is a partial order if it satisfies reflexivity
($\forall \ba\in L: \ba\preceq \ba$); antisymmetry ($\ba\preceq \bb$ and $\bb\preceq \ba$ implies
$\ba=\bb$); transitivity ($\ba\preceq \bb$ and $\bb\preceq \bc$ implies $\ba\preceq \bc$).
It is a total order if $\forall \ba, \bb\in L$: either $\ba\preceq \bb$ or $\bb\preceq \ba$.
\end{definition}

\begin{definition}(Lattice)
$(L, \preceq)$ is a lattice if
\begin{enumerate}
\item $L$ is a partial ordered set;
\item There are two operations: meet $\wedge$ and join $\vee$ on any pair of elements $\ba,\bb$ of $L$
such that $\ba,\bb\preceq \ba\vee \bb$ and $\ba\wedge \bb\preceq \ba,\bb$
\end{enumerate}
The lattice is complete lattice if for any subset $A=\{\ba_1,\ba_2,\cdots, \ba_k\}\subseteq L$, there is a unique meet and
a unique join: $\bigwedge A = (\ba_1\wedge \ba_2\wedge\cdots\wedge \ba_k)$
and
$\bigvee A = (\ba_1\vee \ba_2\vee\cdots\vee \ba_k)$.
\end{definition}
For simplicity, we use $L$ for a lattice when no ambiguity exists on $\preceq$.
We should specify $\preceq$ whenever it is necessary.

\begin{definition}(Order Preserving Function)
A function $f$ on a lattice $(L,\preceq)$ is order preserving if
$\ba\preceq \bb$ implies $f(\ba)\preceq f(\bb)$.
\end{definition}

\begin{theorem}(Tarski's Fixed Point Theorem)\cite{tarski1}.
If $L$ is a complete lattice and $f$ an increasing from $L$ to itself, there exists some $\x^*\in L$ such that $f(\x^*)=\x^*$, which is a fixed point of $f$.
\end{theorem}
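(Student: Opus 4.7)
The plan is to construct a candidate fixed point directly from the completeness of $L$ and then verify, via two applications of the order-preserving property of $f$, that the candidate is indeed fixed. Specifically, I would introduce the set
\[
S = \{\,\x \in L : \x \preceq f(\x)\,\}
\]
of so-called post-fixed points, and show that its supremum is the desired $\x^*$.

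First I would verify that $S$ is nonempty. Since $L$ is a complete lattice, the infimum $\bot = \inf L$ exists and is the least element of $L$. For any $\x \in L$ we have $\bot \preceq \x$, in particular $\bot \preceq f(\bot)$, so $\bot \in S$. Next, by completeness, the supremum $\x^* := \sup S$ exists in $L$.

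The main step, and what I expect to be the conceptual crux of the argument, is to show that $\x^* \in S$, i.e.\ that $\x^* \preceq f(\x^*)$. For each $\x \in S$ we have $\x \preceq \x^*$, so by order preservation $f(\x) \preceq f(\x^*)$, and combined with $\x \preceq f(\x)$ this gives $\x \preceq f(\x^*)$ for every $\x \in S$. Thus $f(\x^*)$ is an upper bound of $S$, and since $\x^*$ is the least such upper bound, $\x^* \preceq f(\x^*)$. Hence $\x^* \in S$.

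Finally, to upgrade this inequality to an equality, I would apply the order-preserving property once more: from $\x^* \preceq f(\x^*)$ we obtain $f(\x^*) \preceq f(f(\x^*))$, which says exactly that $f(\x^*) \in S$. Consequently $f(\x^*) \preceq \sup S = \x^*$. Together with $\x^* \preceq f(\x^*)$, antisymmetry of $\preceq$ yields $f(\x^*) = \x^*$, completing the proof. The argument is non-constructive in nature: it produces $\x^*$ as the supremum of a potentially very large set, which is precisely why the algorithmic questions studied in this paper are nontrivial despite the simplicity of the existence proof.
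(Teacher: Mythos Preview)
Your argument is the classical Knaster--Tarski proof and is correct. Note, however, that the paper does not supply its own proof of this theorem: it is stated with a citation to Tarski~\cite{tarski1} as a known result and used as background for the algorithmic questions that follow. So there is no ``paper's proof'' to compare against; your write-up simply fills in the standard existence argument that the paper takes for granted.
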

This theorem guarantees the existence of fixed points of any order-preserving
function $f:L\rightarrow L$ on any nonempty complete lattice.

\begin{definition}(Lexicographic Ordering Function).
Given a set of points on a $d$-dimensional space $R^d$, the lexicographic ordering function $\leq_l$ is defined as:

$\forall \x,\y\in R^d$, $\x\leq_l \y$ if
either $\x=\y$ or $x_i=y_i$, for $i=1,2,\ldots,k-1$, and $x_k<y_k$ for
some $k\le d$.
\end{definition}

\begin{definition}(Componentwise Ordering Function).
Given a set of points on a $d$-dimensional space, the componentwise ordering function $\leq_c$ is defined as:

$\forall \x,\y\in R^d$, $\x\leq_c \y$ if $\forall i\in\{1,2,\cdots, d\}: x_i\leq y_i$.
\end{definition}
\end{subsection}

\subsection{Big Input Data and Succinct Representation}\label{succinct}

For the problems we consider in this work, there are usually $2^{d*n}$ nodes where $d$ is a constant and $n$
is an input parameter. Therefore, the input size is exponential in the
input parameter $n$. We need to represent such input data succinctly.
As an example,
for the set $N=\{0,1,2,\cdots, 2^n-1 \}$, the input can be described
as all the integers $i$: $0\leq i\leq 2^n-1$.
Each such integer $i$ can be written by up to $n$ bits.

When a computational problem involved a function such as $f:N\rightarrow N$.
There is always a question how this function is given as an input?
As an exmaple, let $f(i)$ represent the parity of the integer $i\in N$.
Then as an input to the computational problem, $f$ can be an circuit that
takes the last bit of the input $i$.
Therefore, the size of $f$ is a polynomial in the number of bits, $n$, of the
input data.
In general, however, the input functions are not that simple.
We should define two models of functions for succinct representation
with input involved with functions on big dataset in our computational
problems.

\subsection{The Oracle Function Model Versus the Polynomial Time Function Model}\label{two function}

The two succinctly represented function models are the
oracle functions and the polynomial functions.

For the oracle model, we treat the function as a black box that
outputs the function value for every domain variable once a request
is sent in to the oracle. The output of the oracle is arbitrary
on the first query but it cannot change a function value after a query
is already made to the oracle on the same variable.
For exmaple, let $N=\{0,1\}$. Let $f:N\rightarrow N$ be an
oracle function.
When we ask for $f(0)$, the oracle could answer anything,
either $0$ or $1$.
Suppose the oracle answers $f(0)=1$ in the first query in
one run of our algorithm. Later, if we need to use $f(0)$
again in the same run of the algorithm, it must be the same
$1$. Equivalently, we may assume that the function values
are stored in the harddisk. After a query, it is saved in the
memory cache. Later uses of the same query will be the value
in the memory cache and there is no need to check with the harddisk
again.
It is important to note that, the oracle funciton model contains
all the functions $f: N\rightarrow M$ where $N$ is its domain
and $M$ is its range. This is very different from the polynomial
function we are going to introduce next.

For the polynomial function model, the input function
is an algorithm that gives the answer for the function value
on the input data.
The algorithm returns the answer in time polynomial in the input parameter $n$.
Alternatively, the polynomial time algorithm can be replaced
by a polynomial size logical circuits consisting of
gates $\{AND, NOT, OR\}$ of Boolean variables.

Clearly oracle function admits much more functions than those computable
in polynomial time. Therefore a problem is usually much harder
under the oracle function model than under the polynomial time
function model.

\end{section}

\begin{section}{Polynomial Time Algorithm under Oracle Function Model}

In this section, we consider the complexity of finding a Tarski's fixed point in any constant dimension $d$ with the function value $f$ given by an oracle. Chang et al. \cite{chang1} proved that a fixed point can be found in time polynomial when the given lattice is total order.

Define $L=\{\x\in Z^d\;|\;\ba\leq \x\leq \bb\}$, where $\ba$ and $\bb$
are two finite vectors of $Z^d$ with $\ba<\bb$.

\begin{theorem}(Chang et al.)\cite{chang1}
When $(L,\preceq)$ is given as an input and the order preserving function $f$ is given as an oracle, a Tarski's fixed point can be found in time $O(\log |L|)$ on a finite lattice when $\preceq$ is a total order on $L$.
\end{theorem}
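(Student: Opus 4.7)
Since $\preceq$ is a total order on the finite lattice $L$, we may enumerate the elements of $L$ as a chain $x_1 \prec x_2 \prec \cdots \prec x_N$ with $N = |L|$. The plan is to run a binary search on this chain, using a single oracle call per iteration, and to certify correctness via an invariant that a Tarski fixed point always lies in the current search interval.

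First I would establish the following sublattice invariant. For any $i \le j$, the closed interval $[x_i, x_j] = \{y \in L : x_i \preceq y \preceq x_j\}$ is itself a complete lattice (a chain is automatically so). Moreover, if $f(x_i) \succeq x_i$ and $f(x_j) \preceq x_j$, then $f$ restricted to $[x_i, x_j]$ maps this interval into itself: for any $y \in [x_i, x_j]$, order preservation of $f$ gives $f(y) \succeq f(x_i) \succeq x_i$ and $f(y) \preceq f(x_j) \preceq x_j$. By Tarski's fixed point theorem applied to the sublattice $[x_i, x_j]$, a fixed point exists inside it.

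Next I would describe the algorithm. Initialize $i := 1$, $j := N$; the endpoint conditions $f(x_1) \succeq x_1$ and $f(x_N) \preceq x_N$ hold trivially because $x_1$ is the minimum and $x_N$ the maximum of $L$. At each iteration, let $m := \lfloor (i+j)/2 \rfloor$ and query the oracle for $f(x_m)$. There are three cases resolved by the totality of $\preceq$: if $f(x_m) = x_m$, output $x_m$ and stop; if $f(x_m) \succ x_m$, update $i := m$, so the invariant $f(x_i) \succeq x_i$, $f(x_j) \preceq x_j$ is preserved; if $f(x_m) \prec x_m$, update $j := m$, preserving the invariant symmetrically. Each iteration halves the length $j - i$, so after at most $\lceil \log_2 N \rceil = O(\log |L|)$ oracle calls the interval collapses to a single element, which by the invariant and Tarski's theorem must be a fixed point.

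The only genuinely delicate point is the sublattice argument above; once that is in place, correctness and the $O(\log |L|)$ oracle-query bound both follow immediately from the standard binary-search analysis. There is no real obstacle beyond verifying that $f$ leaves the shrinking interval invariant, which is where the order-preserving hypothesis is used.
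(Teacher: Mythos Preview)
Your proposal is correct and matches what the paper indicates: the paper does not give its own proof of this theorem but cites it from Chang et al.\ and remarks only that ``the proof is rather standard utilizing the total order property,'' which is precisely the binary-search-on-a-chain argument you supply. One minor nitpick: as written, when $j-i=1$ your midpoint is $m=i$ and the update $i:=m$ makes no progress, so you should handle the base case $j-i\le 1$ separately (at most two extra queries, which does not affect the $O(\log|L|)$ bound).
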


Since any two vectors in the lexicographic ordering is comparable, the lexicographic ordering is a total order. We have
\begin{corollary}
When $(L,\preceq)$ is given as an input and the order preserving function $f$ is given as an oracle, a Tarski's fixed point can be found in time $O(\log |L|)$ on a finite lattice when $\preceq$ is a lexicographic ordering in $L$.
\end{corollary}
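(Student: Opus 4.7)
The plan is to reduce the corollary directly to the Chang et al.\ theorem cited immediately above by verifying that the lexicographic ordering $\leq_l$ is a total order on $L$; once totality is established, the corollary is immediate, with no additional algorithmic work required.

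The argument for totality has a short explicit form. I would fix arbitrary $\x, \y \in L$ and examine their first coordinate of disagreement. If $\x = \y$, then both $\x \leq_l \y$ and $\y \leq_l \x$ hold by the first clause of Definition~5. Otherwise, let $k$ be the smallest index with $x_k \neq y_k$; since $x_k, y_k \in \mathbb{Z}$, either $x_k < y_k$, giving $\x \leq_l \y$, or $y_k < x_k$, giving $\y \leq_l \x$. Hence for every pair of elements of $L$ one of the two comparisons holds, which is the defining property of a total order. Reflexivity, antisymmetry, and transitivity of $\leq_l$ (needed only to confirm $\leq_l$ is a partial order before one can speak of totality) follow from the same coordinate-of-disagreement bookkeeping applied to the standard integer ordering.

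Having established that $(L, \leq_l)$ is a totally ordered finite lattice and that $f$ is order-preserving with respect to $\leq_l$ by hypothesis, I would simply invoke the Chang et al.\ theorem as a black box. The oracle presented for $f$ on $L$ is the same oracle regardless of which ordering we use to reason about monotonicity, so no reformulation, re-indexing, or wrapping overhead is required. The resulting procedure terminates in $O(\log |L|)$ oracle queries, matching the bound claimed in the corollary.

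I do not foresee any real obstacle: the single point deserving care is that the totality check must be verified for every pair in $L$, not just those encountered during the search, but the coordinate-of-disagreement argument dispatches this uniformly. Everything else is a direct application of the preceding theorem.
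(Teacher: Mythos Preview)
Your proposal is correct and matches the paper's own approach: the paper simply notes that the lexicographic ordering is a total order and invokes the Chang et al.\ theorem, which is exactly what you do (with the added detail of spelling out the coordinate-of-disagreement argument for totality).
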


  The proof is rather standard utilizing the total order property of the lexicographic ordering. As the componentwise ordering lattice cannot be modelled as a total order, it leaves open the oracle complexity of finding a fixed point in componentwise ordering lattice. Here we show that this problem is also polynomial time solvable, by designing a polynomial algorithm to find a fixed point of $f$ in time $O((\log |L|)^d)$ given componentwise ordering lattice $L$.

The algorithm exploits the order properties of the componentwise lattice and applying the binary search method with a dimension reduction technique. To illustrate the main ideas, we first consider the 2D case before moving on to the general case.

WLOG, we assume $L$ is a $N\times N$ square centred at point $(0,0)$. The componentwise ordering is denoted as $\le_c$.

\begin{algorithm}Point\_check() (A polynomial algorithm for 2D lattice)\label{2D}

\begin{itemize}
\item Input:
\begin{description}
\item
2-dimensional lattice $(L,\leq_c)$, $|L|=N^2$ (Input size to the oracle is $2\log N$ since the input size for both dimensions to the oracle is $\log N$. )
\item
Oracle function $f$. $f$ is a order preserving function. $\forall \x\in L,f(\x)\in L$ and $f(\x)\leq_c f(\y)$ if $\x\leq_c \y, \forall \x,\y\in L$

\end{description}
\item Point\_check($L,f$)
\begin{description}
\item Let $\x^0$ be the center point in $L$. Let $\x^L$ be the left most point in $L$ such that $\x^L_2=\x^0_2$. Let $\x^R$ be the right most point in $L$ such that $\x^R_2=\x^0_2$.
\begin{enumerate}
\item If $f(\x^0)=\x^0$,return($\x^0$);end;
\item If $f(\x^0)\geq_c \x^0$,$L'=\{\x|\x\geq_c \x^0,\x\in L\}$. Point\_check($L',f$);
\item If $f(\x^0)\leq_c\x^0$, $L''=\{\x|\x\leq_c \x^0,\x\in L\}$. Point\_check($L'',f$);
\item If $f(\x^0)_1<x^0_1$ and $f(\x^0)_2>x^0_2$, Binary\_Search($\x^L,\x^0$);
\item If $f(\x^0)_1>x^0_1$ and $f(\x^0)_2<x^0_2$, Binary\_Search($\x^0,\x^R$);
\end{enumerate}
\end{description}

\item Binary\_Search($\x,\y$)
\begin{description}
\item Let $\x^m=\lfloor{1/2(\x+\y)}\rfloor$
\begin{enumerate}
\item If $f(\x^m)=\x^m$,return($\x^m$);end;
\item If $f(\x^m)\geq_c \x^m$, $L'=\{\x|\x\geq_c \x^m,\x\in L\}$. Point\_check($L',f$);
\item If $f(\x^m)\leq_c\x^m$, $L''=\{\x|\x\leq_c \x^m,\x\in L\}$. Point\_check($L'',f$);
\item If $f(\x^m)_1<x^m_1$ and $f(\x^m)_2>x^m_2$, Binary\_Search($\x,\x^m$);
\item If $f(\x^m)_1>x^m_1$ and $f(\x^m)_2<x^m_2$, Binary\_Search($\x^m,\y$);
\end{enumerate}
\end{description}

\end{itemize}

\end{algorithm}

\begin{theorem}
When the order preserving function $f$ is given as an oracle, a Tarski's fixed point can be found in time $O(\log^2N)$ on a finite 2D lattice formed by integer points of a box with side length $N$ by using Algorithm \ref{2D} Point\_check.
\end{theorem}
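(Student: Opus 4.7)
My plan is to establish the theorem in three parts: correctness, termination, and complexity. The central observation driving correctness is a sublattice-reduction lemma. If $\x^0 \in L$ satisfies $f(\x^0) \geq_c \x^0$, then $L' = \{\x \in L : \x \geq_c \x^0\}$ is a nonempty complete sublattice closed under $f$: for any $\x \in L'$, order-preservation gives $f(\x) \geq_c f(\x^0) \geq_c \x^0$, so $f(\x) \in L'$. Applying Tarski's theorem to $f|_{L'}$ then forces a fixed point inside $L'$; the dual argument with $\leq_c$ handles the lower quadrant. These two statements immediately justify the recursive calls in cases 2 and 3 of Point\_check and in cases 2' and 3' of Binary\_Search.

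The delicate cases are 4 and 5 of Point\_check (and the analogous 4', 5' of Binary\_Search), where $f(\x^0)$ is incomparable to $\x^0$. For these, I would show that Binary\_Search along the horizontal segment through $\x^0$ must, within $O(\log N)$ halvings, hit a point at which $f$ is comparable to that point, thereby triggering case 1', 2', or 3'. The plan is to carry a two-sided invariant: once both endpoints of the segment have been updated at least once, its left endpoint $\x$ satisfies $f(\x)_1 > x_1$, $f(\x)_2 < x_2$ (``lower-right''), while its right endpoint $\y$ satisfies $f(\y)_1 < y_1$, $f(\y)_2 > y_2$ (``upper-left''). If the segment ever collapsed to a single point $\x = \y = \x^m$, these two constraints on $f(\x^m)_1$ would be mutually contradictory, so a comparable case must fire before that happens. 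The remaining one-sided configurations, in which the segment still contains the extremal point $\x^L$ or $\x^R$ of the horizontal line, are handled by the lattice-boundary inequality $f(\x^L)_1 \geq a_1 = x^L_1$ (and symmetrically at $\x^R$), which rules out the incomparable branch at the extreme endpoint.

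For complexity, each Point\_check call spawns at most one recursive Point\_check on a quadrant of side at most $N/2$, preceded by at most $O(\log N)$ Binary\_Search iterations, each using a single oracle query. This yields the recurrence $T(N) \leq T(N/2) + O(\log N)$, whose solution is $T(N) = O(\log^2 N)$, matching the claimed bound. I expect the main obstacle to be the invariant bookkeeping inside Binary\_Search---tracking which endpoint carries which type of $f$-information across case transitions, and carefully ruling out degenerate single-point segments at the boundary; everything else reduces to standard consequences of Tarski's theorem applied to the appropriate sublattices and a direct recurrence calculation.
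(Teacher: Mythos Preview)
Your sublattice-reduction argument for the comparable cases (2, 3, 2$'$, 3$'$) and the overall recurrence $T(N)\le T(N/2)+O(\log N)$ match the paper. The gap is in your termination argument for Binary\_Search in the incomparable cases.

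Your plan is: maintain the two-sided invariant, then argue that if the segment collapsed to $\x=\y$ the invariant would be self-contradictory, so a comparable case must fire first. But the segment need not collapse to a single point. With the floor midpoint, when the segment has length~$1$ (i.e.\ $y_1=x_1+1$) we have $\x^m=\x$; if the left endpoint $\x$ already carries your ``lower-right'' label $f(\x)_1>x_1$, $f(\x)_2<x_2$ (because it was set by an earlier Case~5$'$), then querying $\x^m=\x$ fires Case~5$'$ again and recurses on the \emph{same} segment $[\x,\y]$ --- no progress. Your collapse argument does not rule this out, because $\x\neq\y$ there.

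What closes the gap is exactly the paper's key observation, which you never invoke: for adjacent points $\bu\leq_c\bv$ on the horizontal line, order preservation forbids $f(\bu)_1>u_1$ together with $f(\bv)_1<v_1$, since $f(\bv)_1\ge f(\bu)_1\ge u_1+1=v_1$. Hence your two-sided invariant is already impossible at length~$1$, not only at length~$0$. The paper phrases this as a discrete intermediate-value property of $g_1(\cdot)=f(\cdot)_1-(\cdot)_1$: its sign cannot jump from positive to negative between neighbours, so between $\x^L$ (where $g_1\ge 0$ by the boundary) and $\x^0$ (where $g_1<0$) there is a point $\x'$ with $g_1(\x')=0$, and at any such $\x'$ the algorithm is forced into Case~1$'$, 2$'$, or~3$'$. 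Add this one-line lemma and your argument goes through.
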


\begin{proof}
Start from a lattice of size $|L|$, we first prove that in at most $O(\log N)$ steps the above algorithm either finds the fixed point or reduces the input lattice to size $|L|/2$.

\begin{Figure}[H]{A polynomial algorithm for 2D Lattice}
    \graphfile[50]{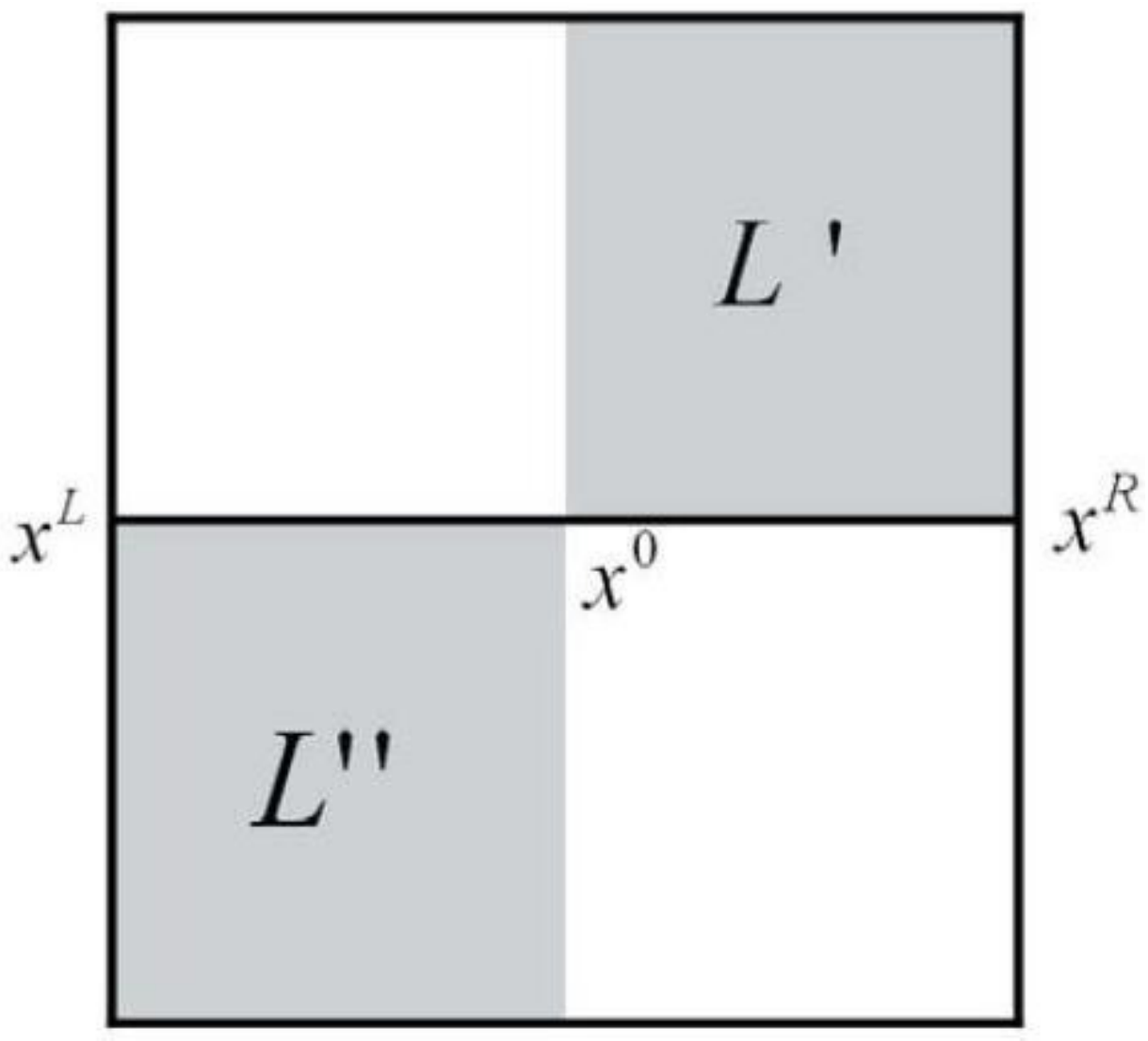}[Point\_check()]
    \graphfile[50]{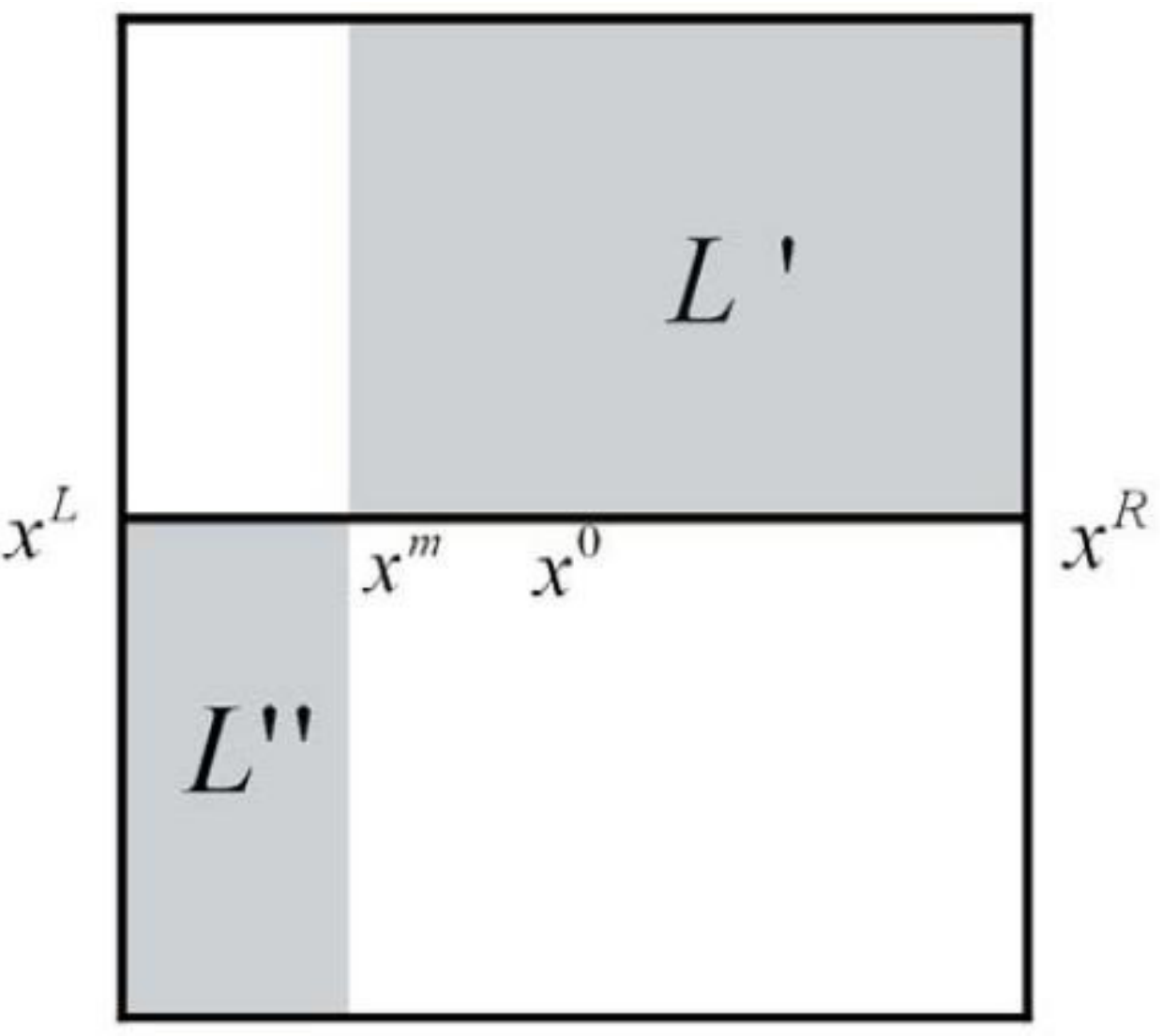}[Binary\_Search()]\label{alg1}
\end{Figure}
Consider the algorithm Point\_check($L,f$).
\begin{enumerate}
\item Case I: If $f(\x^0)=\x^0$, $\x^0$ is the fixed point which is found in 1 step.
\item Case II: If $f(\x^0)\geq_c \x^0$, since $f$ is a order preserving function, $\forall \y\geq_c \x^0$, we have $f(\y)\geq_c f(\x^0)\geq_c \x^0$. Let $L'=\{\x|\x\geq_c \x^0,\x\in L\}$. Define $f'(\x)=f(\x),\forall \x\in L'$. Then $f': L'\rightarrow L'$ is a order preserving function on the complete lattice $L'$. By Tarski's fixed point theorem, there must exist a fixed point in $L'$. Next we only need to check $L'$ which is only $1/4$ size of $|L|$.
\item Case III: If $f(\x^0)\leq_c\x^0$, similar to the analysis in Case II, we only need to consider $L''=\{\x|\x\leq_c \x^0,\x\in L\}$ which is only $1/4$ size of $|L|$ in the next step.
\item Case IV: If $f(\x^0)_1<x^0_1$ and $f(\x^0)_2>x^0_2$, we prove that Binary\_Search($\x^L,\x^0$) finds a fixed point or reduce the size of the lattice by half in $\log {\frac{N}{2}}$ steps. Since $f$ is a order preserving function, $\forall$ adjacent points $\bu\leq_c\bv\in L$, it is impossible that $f(\bu)_1>u_1$ and $f(\bv)_1<v_1$. Thus, on a line segment $[\x,\y]$ where $x_2=y_2$, if $f(\x)_1\ge x_1$ and $f(\y)_1< y_1$,  there must exist a point $\z$ such that $f(\z)_1= z_1$.  On the other hand, we have $f(\x^0)_1<x^0_1$ and by the boundary condition $f(\x^L)_1\ge x^L_1$, therefore, there must exist a point $\x'\in[\x^l,x^0)$ such that $f(\x')_1=x'_1$. This point $\x'$ can be found in time $\log {\frac{N}{2}}$ by using binary search. If $f(\x')_2>x'_2$, similar to the analysis in Case II, we only need to consider $L'=\{\x|\x\geq_c \x',\x\in L\}$ which is at most $1/2$ size of $|L|$ in the next step. If $f(\x')_2<x'_2$, we only need to consider $L''=\{\x|\x\leq_c \x',\x\in L\}$ which is at most $1/4$ size of $|L|$ in the next step. If $f(\x')_2=x'_2$, then $\x'$ is the fixed point.

\item Case V: If $f(\x^0)_1>x^0_1$ and $f(\x^0)_2<x^0_2$, similarly, we can prove that Binary\_Search($\x^0,\x^R$)finds a fixed point or reduce the size of the lattice by half in $\log {\frac{N}{2}}$ steps.
\end{enumerate}

The size of the lattice is reduced by half in every $O(\log N)$ steps. Therefore, the algorithm finds a fixed point in at most $O(\log N\times\log L)=O(\log^2 N)$ steps.\qed

\end{proof}

The above algorithm can be generalized to any constant dimensional lattice with $L=\{\x\in Z^d\;|\;\ba\leq \x\leq \bb\}$, where $\ba$ and $\bb$
are two finite vectors of $Z^d$ with $\ba<\bb$.
We reduce a $(d+1)$-dimension problem to a $d$-dimension one.
Assume we have an algorithm for a $d$-dimensional problem with time complexity $O(\log^d |L|)$. Let the algorithm be $A_d(L, f)$.

Consider a $d+1$-dimensional lattice $(L,\leq_c)$. Choose the central point in $L$, and denote it by $\bO=(O_1,O_2,\cdots,O_{d+1})^T$.
Take the section of $L$ by a hyperplane parallel to $x_{d+1}=0$ passing through $\bO$. Denote it as $L^d$.
Clearly, it is a $d$-dimensional lattice. We define a new oracle function $f_d$ on $L^d$, based on the oracle function $f$ on $L$.
Define $f_d(x_1,x_2,\cdots, x_d)=(y_1,y_2,\cdots, y_d)$, if $f(x_1,x_2,\cdots, x_d, O_{d+1})=(y_1,y_2,\cdots, y_d, y_{d+1})$.
We apply the algorithm $A_d(L^d, f_d)$ to obtain a Tarski's fixed point in time $(\log |L|)^d$. Let the fixed point be denoted by $\x^*$.
Therefore, $f(\x^*) = (\x^*, O_{d+1}) + a\e_{d+1}$ or $f(\x^*)=  (\x^*, O_{d+1}) - a\e_{d+1}$, where $a$ is some constant, $\e_{d+1}$ is a $d+1$ dimensional unit vector with 1 on its $d+1$th position.

In either case, we obtain a new box $\B$ with size no more than half of the original box defined by $[\ba,\bb]$, such that $f(\cdot)$ maps
all points in $\B$ into $\B$ and is order preserving. We can apply the algorithm recursively on $\B$.
The base case can be handle easily.
Therefore the total time is

$$T( |L|^{d+1}) \leq T({|L|^{d+1}\over 2}) + O (\log^d |L|).$$

It follows that $T(|L|^{d+1}) = O(\log^d |L|)$.

Formally, the polynomial time algorithm for finding a Tarski's fixed point in a d-dimensional componentwise ordering lattice is described as follows.
\begin{algorithm}Fixed\_point() (A polynomial algorithm for any constant dimensional lattice)\label{dD}

\begin{itemize}
\item Input:
\begin{description}
\item
A $d$ dimensional lattice $L^d$, WLOG, $|L^d|=N^d$ (Input size to the oracle is $d\log N$ since the input size for both dimensions to the oracle is $\log N$.).
\item
An oracle function $f^d$. $f^d$ is a order preserving function. $\forall \x\in L^d,f^d(\x)\in L^d$ and $f^d(\x)\leq_c f^d(\y)$ if $\x\leq_c \y, \forall \x,\y\in L^d$.

\end{description}
\item Fixed\_point($L^d$)
\begin{enumerate}
\item If $d>1$
\begin{enumerate}
\item Let $\x^0$ be the center point in $L^d$.
\item Let $L^{d-1}=\{\x=(x_1,x_2,\cdots,x_{d-1})|(\x,x^0_d)\in L^d\}$.
\item Let $f^{d-1}(\x)=(f^{d}(\x,x^0_d)_1,f^{d}(\x,x^0_d)_2,\cdots,f^{d}(\x,x^0_d)_{d-1})$.
\item $\x^*=$Fixed\_point($L^{d-1}$).
\item If $f^d(\x^*,x^0_d)_d>x^0_d$, $L^d=\{\x|\x\ge (\x^*,x^0_d)\}$; Fixed\_point($L^d$);
\item If $f^d(\x^*,x^0_d)_d<x^0_d$, $L^d=\{\x|\x\le (\x^*,x^0_d)\}$; Fixed\_point($L^d$);
\item If $f^d(\x^*,x^0_d)_d=x^0_d$, return $(\x^*,x^0_d)$; end;
\end{enumerate}

\item If $d=1$, let $\x^L$ be the left end point and $\x^R$ be the right end point. binary\_search($\x^L,\x^R,f^d$).
\end{enumerate}

\item binary\_search($\x,\y,f$)

\begin{enumerate}
\item If $f(\x^L)=0$, output $\x^L$;
\item else if $f(\x^R)=0$, output $\x^R$;
\item else
\begin{enumerate}
\item If $f(\lfloor1/2(\x^L+\x^R)\rfloor)<\lfloor1/2(\x^L+\x^R)\rfloor$, binary\_search($\x^L,\lfloor1/2(\x^L+\x^R)\rfloor,f$);
\item If $f(\lfloor1/2(\x^L+\x^R)\rfloor)>\lfloor1/2(\x^L+\x^R)\rfloor$, binary\_search($\lfloor1/2(\x^L+\x^R)\rfloor,\x^R,f$);
\item else output $\x^*$.
\end{enumerate}
\end{enumerate}

\end{itemize}

\end{algorithm}

\begin{theorem}\label{poly-tarski}
When the order preserving function $f$ is given as an oracle, a Tarski's fixed point can be found in time $O(\log^d|L|)$ on a componentwise ordering lattice $(L,\leq_c)$.
\end{theorem}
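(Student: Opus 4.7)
The plan is to prove the theorem by induction on the dimension $d$, formalizing the recursive reduction sketched just before the statement. For the base case $d=1$, the componentwise lattice is a finite chain of integers, so the \texttt{binary\_search} subroutine suffices: at the midpoint $\x^m$ of the current interval $[\x^L,\x^R]$, either $f(\x^m)=\x^m$ and we are done, or $f(\x^m)>\x^m$ (in which case order preservation together with $f(\x^R)\le \x^R$ shows $f$ maps $[\x^m,\x^R]$ into itself, so a fixed point lies there by Tarski's theorem), or symmetrically $f(\x^m)<\x^m$ forces the fixed point into $[\x^L,\x^m]$. Each iteration halves the interval, giving $O(\log|L|)$ queries.

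For the inductive step, assume any $d$-dimensional instance of maximum side length at most $N$ is solved in $O((\log N)^d)$ queries, and consider a $(d{+}1)$-dimensional box $L$ with side lengths at most $N$. Fix $O_{d+1}$ to be the median value of the last coordinate, form the slice $L^d=\{\x:(\x,O_{d+1})\in L\}$, and define $f^d:L^d\to L^d$ by projecting onto the first $d$ coordinates of $f(\cdot,O_{d+1})$. Two facts need checking: that $f^d$ lands in $L^d$, which holds because $L$ is a product box so the slice equals the full projection of $L\cap\{x_{d+1}=O_{d+1}\}$; and that $f^d$ is order preserving, which is immediate from order preservation of $f$. Applying the inductive hypothesis yields a fixed point $\x^*$ of $f^d$, whence $f(\x^*,O_{d+1})=(\x^*,y_{d+1})$ for some $y_{d+1}$.

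Now trichotomize on $y_{d+1}$ versus $O_{d+1}$. If they agree, $(\x^*,O_{d+1})$ is a fixed point of $f$ and the algorithm stops. If $y_{d+1}>O_{d+1}$ (the other case is symmetric), consider $L^+:=\{\z\in L:\z\ge_c(\x^*,O_{d+1})\}$. Order preservation then forces $f$ to map $L^+$ into itself: any $\z\in L^+$ satisfies $f(\z)\ge_c f(\x^*,O_{d+1})=(\x^*,y_{d+1})\ge_c(\x^*,O_{d+1})$. Critically, $L^+$ is itself a rectangular box whose last-coordinate side length is at most $\lceil N/2\rceil$, so the same $(d{+}1)$-dimensional algorithm reapplies to a strictly smaller instance. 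This gives the recurrence $T_{d+1}(N)\le T_{d+1}(\lceil N/2\rceil)+T_d(N)+O(1)$, which unrolls to $T_{d+1}(N)=O(\log N)\cdot T_d(N)=O((\log N)^{d+1})$, completing the induction; since $\log N\le\log|L|$ this yields the announced $O((\log|L|)^d)$ bound for constant $d$.

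The main obstacle is bookkeeping: one must verify that every sublattice produced by either the slicing or the half-cone recursion remains a rectangular box, so that the inductive hypothesis and the recursion reapply cleanly, and that each outer recursive call indeed shrinks the last-coordinate side length by a factor of two. Both reduce to elementary observations, namely that the slice of an axis-aligned box at a fixed coordinate value is a box of one lower dimension, and that the upper or lower cone of an interior point of a box is again a box. Since the order-preserving property is inherited by restriction to any invariant sublattice, no further subtlety arises.
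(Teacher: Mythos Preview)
Your proposal is correct and follows essentially the same approach as the paper: induction on the dimension, with the $1$-dimensional base case handled by binary search, and the inductive step carried out by slicing at the median value of the last coordinate, applying the $d$-dimensional algorithm to the projected map on that slice, and then recursing on the half-box above or below the resulting partial fixed point. Your write-up is in fact somewhat more careful than the paper's in spelling out why $f^d$ is order preserving and why the half-cone $L^+$ is an $f$-invariant axis-aligned box; the one place to tidy is the recurrence $T_{d+1}(N)\le T_{d+1}(\lceil N/2\rceil)+T_d(N)$, where the parameter being halved is specifically the side length in the last coordinate (the first $d$ side lengths need not shrink), so it is cleanest to track that coordinate separately and conclude $T_{d+1}\le O(\log N_{d+1})\cdot T_d$, which still yields $O(\prod_i\log N_i)=O((\log|L|)^d)$ for constant $d$.
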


\end{section}

\begin{section}{Determining Uniqueness under Oracle Function Model}

It has been a natural question to check whether there is another fixed point after finding the first one, such as in the applications for finding all Nash
equilibria (Echenique)\cite{e1}. In this section we develop a lower bound that,
given a general lattice $L$ with one already known fixed point, finding whether it is unique will take $\Omega(|L|)$ time for any
algorithm. Even for the componentwise ordering lattice, we also derive a $\Theta(N_1+N_2+\cdots+N_d)$ matching bounds for determining the uniqueness of the fixed point even for randomized algorithms. The technique builds on and further reveals crucial properties of mathematical structures for fixed points.

\begin{theorem}
Given a lattice $(L,\preceq)$, an order preserving function $f$ and a fixed point $\x^0$, it takes time $\Omega(|L|)$ for
any deterministic algorithm to decide whether there is a unique fixed point.
\end{theorem}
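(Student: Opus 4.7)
\emph{Plan.} I would prove this lower bound by an oracle adversary argument on the simplest non-trivial complete lattice, a finite chain. Even though a chain is a total order, the uniqueness question admits a clean ``needle in a haystack'' reduction because a second fixed point can hide at any of a linear number of interior positions, and the adversary can keep that position invisible by answering along a single baseline function.

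\emph{Construction.} Take $L = \{0, 1, \ldots, N-1\}$ with the usual linear order, so $|L| = N$ and $L$ is a complete lattice, and set $\x^0 = 0$. Define the baseline function by $f_0(0) = 0$ and $f_0(i) = i - 1$ for $i \geq 1$. It is nondecreasing, hence order-preserving, and $f_0(i) = i$ forces $i = 0$, so $0$ is its unique fixed point. For each $p \in \{1, \ldots, N-1\}$, let $f_p$ agree with $f_0$ at every point except $p$, and set $f_p(p) = p$. The only monotonicity constraints that have changed are $f_p(p-1) \leq f_p(p)$ and $f_p(p) \leq f_p(p+1)$; these read $\max(p-2,0) \leq p$ and $p \leq p$ (with the obvious boundary convention when $p = N-1$), both of which hold. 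So each $f_p$ is order-preserving, with fixed-point set exactly $\{0, p\}$.

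\emph{Adversary.} I would have the adversary reply $f_0(i)$ to every query $i$. Because $f_p$ and $f_0$ differ only at the single point $p$, this answer is simultaneously consistent with $f_0$ (unique fixed point) and with every $f_p$ for $p \neq i$ (two fixed points). Hence each query eliminates at most one member of the ``non-unique'' family $\{f_1, \ldots, f_{N-1}\}$, and $f_0$ is never eliminated. After fewer than $N - 1$ queries the consistent set still contains $f_0$ together with some surviving $f_p$, so no deterministic algorithm can decide whether $\x^0$ is the only fixed point, which yields an $\Omega(N) = \Omega(|L|)$ query, and hence time, lower bound.

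\emph{Main obstacle.} The real design choice is picking $f_0$ so that $\Theta(|L|)$ independent single-point perturbations each remain order-preserving and each introduce a genuine new fixed point. The choice $f_0(i) = \max(i-1,0)$ sits exactly one step below the diagonal, which is the minimum slack that allows lifting any $f_0(p) = p - 1$ up to $p$ while staying sandwiched between the unchanged values $f_0(p-1)$ and $f_0(p+1)$. A ``flatter'' $f_0$, such as the constant map, would kill these perturbations by violating monotonicity elsewhere, and a ``steeper'' $f_0$ would itself already have extra fixed points. Once this calibrated baseline is in place, the remainder is routine needle-in-haystack counting.
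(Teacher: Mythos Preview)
Your proposal is correct and is essentially the same argument as the paper's: both use the chain $\{0,1,\ldots,N-1\}$ with the known fixed point at $0$, the baseline map $f_0(i)=\max(i-1,0)$, and an adversary that always answers along $f_0$ so that any unqueried position $p\ge 1$ could still harbor a second fixed point via the single-point perturbation $f_p(p)=p$. The only difference is presentational: you spell out the monotonicity check for each $f_p$ and the counting of eliminated candidates, which the paper leaves implicit.
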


\begin{proof} Consider the lattice on a real line: $0\prec 1\prec 2\prec\cdots\prec L-1$. Let $x^0=0$, define $f(0)=0$ and $f(x)
=x-1$ for all $x\geq 1$ except a possible fixed point $x^*$.  $f(x^*)=x^*$ or $f(x^*)=x^*-1$ which is not known until we query $x^*$. Given a deterministic algorithm
$A$, define $y_j$ be the $j$-th item $A$ queried in its effort to find $x^*$. Our adversary will answer $x-1$ whenever $A$ asks for $f(x)$ until the last item when
the adversary answers $x$. Clearly this derives a lower bound of $L$.

\qed
\end{proof}

For a randomized algorithm $R$, let $p_{ij}$ be the probability $R$ queries $x=i$ on its $j$-th query. Let $k$ be the total number of queries $R$ makes.
We have:
$$\sum_{j=1}^k\sum_{i=0}^{|L|-1} p_{ij}=k.$$
Therefore, there exist $i^*$ such that
$$\sum_{j=1}^k p_{i^*j}\le\frac{k}{|L|}.$$
The adversary will place $f(i^*)=i^*$, which is queried with probability $\frac{k}{|L|}<1/2$ when we choose $k={|L|-1\over 2}$.
Therefore, we have
\begin{theorem}
Given a lattice $(L,\preceq)$, an order preserving function $f$ and a fixed point $\x^0$, it takes time $\Omega(|L|)$ for
any randomized algorithm to decide whether there is a unique fixed point with probability at least $1/2$.
\end{theorem}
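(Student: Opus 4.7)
The plan is to extend the adversary argument from the deterministic case via a Yao-style averaging argument. I would reuse the same one-dimensional lattice $0 \prec 1 \prec \cdots \prec |L|-1$ and the same ``baseline'' order-preserving map $f(0)=0$, $f(x)=x-1$ for $x \ge 1$, whose only fixed point is $x^0 = 0$. The adversary will plant a second fixed point at some carefully chosen position $i^*$ by redefining $f(i^*) = i^*$; this modification preserves monotonicity, since the value at $i^*$ is only raised from $i^*-1$ to $i^*$ and all other values are unchanged.

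First I would fix an arbitrary randomized algorithm $R$ making at most $k$ queries, and let $p_{ij}$ denote the probability that $R$'s $j$-th query is the point $i$ when run against the baseline instance (over $R$'s internal randomness). Summing gives $\sum_{i=0}^{|L|-1}\sum_{j=1}^{k} p_{ij} = k$, so by averaging there is some $i^* \in \{0,\ldots,|L|-1\}$ with $\sum_{j=1}^{k} p_{i^*j} \le k/|L|$. This identifies a position that is queried with small total probability.

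The key step is an indistinguishability claim: on the planted instance where $f(i^*) = i^*$, as long as $R$ never queries $i^*$, every oracle response it sees is identical to what it would see on the baseline instance, since the two functions agree off $i^*$. Consequently the joint distribution of $R$'s transcript and output is the same on both instances, conditional on the event that $i^*$ is not queried. Since $R$ must answer ``unique'' on the baseline and ``not unique'' on the planted instance, it can succeed on the latter only on the event that $i^*$ is actually queried at some step, which by a union bound has probability at most $\sum_{j=1}^{k} p_{i^*j} \le k/|L|$. Setting $k = \lfloor(|L|-1)/2\rfloor$ drives this below $1/2$, so $R$ errs with probability strictly greater than $1/2$ on the planted instance, giving the $\Omega(|L|)$ lower bound.

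The main subtlety I expect is justifying that the probabilities $p_{ij}$ are well-defined against an adaptive algorithm and that the indistinguishability passes query by query. The cleanest formalization is to view $R$ as a distribution over deterministic decision trees: fixing $R$'s coin flips and the baseline oracle yields a deterministic sequence of queries, and $p_{ij}$ is simply the marginal probability that the $j$-th query equals $i$. The indistinguishability then follows by induction on the query index: until the first time $i^*$ is queried, $R$'s internal state and next query are identical functions of the same random bits and the same oracle replies on both instances.
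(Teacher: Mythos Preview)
Your proposal is correct and follows essentially the same argument as the paper: define the query probabilities $p_{ij}$ against the baseline instance, average to locate a rarely-queried point $i^*$, and plant the second fixed point there so that the algorithm cannot distinguish the two instances except with probability at most $k/|L|$. Your write-up is in fact more careful than the paper's (you spell out the indistinguishability coupling and why $p_{ij}$ is well-defined for adaptive $R$); the only cosmetic wrinkle, shared with the paper, is that the averaging should exclude $i=0$ since planting there does not create a \emph{second} fixed point, but this does not affect the $\Omega(|L|)$ bound.
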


As we noted before, for a lexicographic ordering lattice, it can be viewed as a total ordering lattice or componentwise ordering lattice with dimension one by an appropriate polynomial time transformation to change the oracle function for the $d$-dimension space to an oracle function on the $1$-dimension space.
Therefore,
\begin{corollary}
Given a lattice $(L,\leq_l)$, an order preserving function $f$ and given a fixed point $\x^0$, it takes time $\Omega(|L|)$ both for any deterministic algorithm and for any randomized algorithm to decide whether there is a unique fixed point with probability at least $1/2$.
\end{corollary}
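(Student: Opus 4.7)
The plan is to reduce the uniqueness problem on a lexicographic lattice to the uniqueness problem on a totally ordered chain of length $|L|$, for which the two preceding theorems already give $\Omega(|L|)$ lower bounds in both the deterministic and the randomized setting. The key observation is that lexicographic ordering on $L_d=\{\x\in Z^d\;|\;\ba\le \x\le \bb\}$ is itself a total order with exactly $|L|$ elements, so there is a canonical order isomorphism $\phi$ between the chain $\{0,1,\ldots,|L|-1\}$ (under the usual $\le$) and $(L_d,\le_l)$. I would implement $\phi$ and $\phi^{-1}$ via mixed-radix positional arithmetic driven by the side-lengths $N_i=b_i-a_i+1$; both directions run in time polynomial in $\log|L|$, so no overhead is introduced into the query count.

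The next step is to use $\phi$ to port any algorithm $A$ that solves lex uniqueness back to an algorithm $A'$ on the chain. Whenever $A$ queries its lex oracle at $\x\in L_d$, $A'$ instead queries its chain oracle $g$ at the integer $\phi^{-1}(\x)$ and returns $\phi(g(\phi^{-1}(\x)))$ to $A$. Because $\phi$ is an order isomorphism in both directions, the conjugated mapping $\phi\circ g\circ\phi^{-1}$ is order preserving on $(L_d,\le_l)$ exactly when $g$ is order preserving on the chain, and the set of fixed points transports bijectively: $\phi(k)$ is a fixed point of the conjugate iff $k$ is a fixed point of $g$. Hence $A'$ decides uniqueness of the chain fixed point $\phi^{-1}(\x^0)$ using the same number of oracle calls that $A$ makes on the lex instance.

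Feeding this simulation into the deterministic lower bound theorem yields an $\Omega(|L|)$ deterministic bound for lex uniqueness, and feeding it into the randomized version yields the randomized $\Omega(|L|)$ bound with success probability at least $1/2$, which is exactly the claim of the corollary. The only mildly delicate point will be verifying that the adversary's answers constructed on the chain remain valid (i.e., land in $L_d$) after being pushed forward by $\phi$, but this is automatic since $\phi$ is a bijection onto $L_d$ and the chain adversary's answers all lie in $\{0,1,\ldots,|L|-1\}$. I do not foresee a genuine obstacle here; the corollary is essentially a change of coordinates applied to the two preceding theorems.
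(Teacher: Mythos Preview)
Your proposal is correct and follows essentially the same approach as the paper: observe that $(L_d,\le_l)$ is a total order of size $|L|$, exhibit a polynomial-time order isomorphism to the chain $\{0,\ldots,|L|-1\}$, and then invoke the two preceding theorems. The paper states this reduction in a single sentence without spelling out the mixed-radix map or the conjugation argument, but your more explicit treatment is the same idea carried out in detail.
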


Next we consider a componentwise lattice.
\begin{theorem}
Given the componentwise lattice $L=N_1\times N_2 \times\cdots\times N_d$ of $d$ dimensions, an order preserving function $f$
and a fixed point $\x^0$,
the deterministic oracle complexity is $\theta(N_1+N_2+\cdots + N_d)$ to
decide whether there is a unique fixed point.
\end{theorem}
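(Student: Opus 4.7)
The strategy is to establish matching upper and lower bounds separately.

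For the upper bound $O(N_1+\cdots+N_d)$, I would invoke Tarski's theorem: the fixed-point set of $f$ on $L$ is itself a non-empty complete lattice, hence has a greatest element $\bar{\x}$ and a least element $\underline{\x}$. Since every fixed point lies in $[\underline{\x},\bar{\x}]$ and $\x^0$ is a fixed point, $\x^0$ is unique iff $\bar{\x}=\underline{\x}$. I compute $\bar{\x}$ by the downward Kleene iteration $\z_0:=\bb$, $\z_{k+1}:=f(\z_k)$: since $f(\bb)\leq_c\bb$ and $f$ is order preserving, $(\z_k)$ is componentwise non-increasing and stabilises at $\bar{\x}$. Each non-terminating step strictly decreases at least one coordinate, and the total coordinate decrement summed over the entire run is bounded by $\sum_i (N_i-1)$, so the iteration uses at most $O(\sum_i N_i)$ oracle calls; the symmetric upward iteration from $\ba$ produces $\underline{\x}$ within the same budget, after which a single comparison settles uniqueness.

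For the lower bound $\Omega(N_1+\cdots+N_d)$ I plan an adversary argument. After relabelling, assume $\x^0=\ba=\bz$ and $\bb=(N_1-1,\ldots,N_d-1)$. Define $g:L\to L$ by $g(\bz):=\bz$ and, for $\x\neq\bz$, $g(\x):=\x-\e_{i^*(\x)}$, where $i^*(\x):=\max\{i:x_i>0\}$. A short case split on whether $i^*(\x)$ and $i^*(\y)$ coincide when $\x\leq_c\y$ shows $g$ is order preserving, and one verifies directly that $\bz$ is its only fixed point. The orbit of $\bb$ under $g$ traces a deterministic zigzag path $P=\{p_0=\bb,p_1,\ldots,p_T=\bz\}$ of length $T=\sum_i (N_i-1)$ in which each $p_k$ has the saturated shape $(N_1-1,\ldots,N_{m-1}-1,j_m,0,\ldots,0)$ for some active level $m\in\{1,\ldots,d\}$ and entry $j_m$.

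The key lemma is: for every $p^*\in P\setminus\{\bz\}$, overriding $g$ at the single point $p^*$ to obtain $f_{p^*}$ with $f_{p^*}(p^*):=p^*$ preserves order preservation, so $f_{p^*}$ has the two distinct fixed points $\bz$ and $p^*$. The only non-routine check is that $g(\y)\geq_c p^*$ for every $\y>_c p^*$ with $\y\neq p^*$. Coordinates strictly below level $m$ are saturated at $N_i-1$ in both $\y$ and $p^*$ and are unchanged by $g$; the coordinate $i^*(\y)$ that $g$ decrements is either the active level $m$ itself (and the strictness $y_m>j_m$ absorbs the single decrement) or a strictly higher index whose $p^*$-entry is already $0$ (so the decrement cannot push the image below $p^*$).

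Given the lemma, the adversary uses $g$ as the nominal oracle and answers every query $\x$ with $g(\x)$. If an algorithm halts after querying a set $Q$ and some $p^*\in (P\setminus\{\bz\})\setminus Q$ exists, the transcript is equally consistent with the true function being $g$ (for which $\bz$ is unique) or $f_{p^*}$ (for which it is not), so any deterministic algorithm that is correct on both instances must in fact have queried every point of $P\setminus\{\bz\}$; this forces at least $|P|-1=\sum_i N_i-d=\Omega(\sum_i N_i)$ queries, matching the upper bound. The main obstacle is precisely the order-preservation lemma, since re-pointing a single value of a monotone function generically breaks monotonicity; the argument goes through only because the canonical path $P$ produced by $g$ has exactly the coordinate-saturation pattern needed to absorb each permitted local modification.
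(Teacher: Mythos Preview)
Your proof is correct and shares the paper's core ideas: the same baseline adversary map $\x\mapsto \x-\e_{\max\{i:x_i>0\}}$ for the lower bound, and iteration from the lattice extremes for the upper bound. The packaging differs in two minor ways worth noting. For the upper bound, the paper repeatedly queries the current top point and shrinks the box by one unit in a coordinate where $f$ strictly decreases, whereas you run full Kleene iteration from $\bb$ and from $\ba$ and compare the greatest and least fixed points; both arguments terminate in $O(\sum_i N_i)$ steps via the same coordinate-sum potential. For the lower bound, the paper proceeds by induction on $d$: for each level $x_d=i>0$ it observes that the entire hyperplane $\{\x:x_d=i\}$ can be made to consist of fixed points (consistently with $auxi$ elsewhere) whenever no point of that hyperplane has been queried, and then recurses into the slice $x_d=0$. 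You instead isolate the single orbit $P$ of $\bb$ under $g$ and show that each individual point of $P\setminus\{\bz\}$ can be flipped to a fixed point without breaking monotonicity. Your path-based argument is cleaner for the deterministic bound and avoids the induction; the paper's hyperplane version has the compensating advantage that the hiding locations are equivalence classes rather than single points, which is what the paper then leverages for the randomized lower bound in the subsequent corollary.
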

\begin{proof}
For dimension $d\geq 2$, let $L=\{\x\in Z: \bz\leq_c \x\leq_c (N_1,N_2,\cdots,N_d)\}$.
For $\x=(x_1, x_2, \cdots, x_d)$, let $maxindex(\x)=\max\{i: x_i>0\}$ for any non-zero vector $\x$.
Define $auxi(\x) = -\e_{maxindex(\x)}$ where $\e_i$ is a unit vector in $i$-th coordinate.
Therefore, $auxi(\cdot)$ is well defined on nonzero vectors in the lattice $L$.  One example of two dimension case is demonstrated in Fig.\ref{Fig3}. The fixed point is denoted by the red color. The direction of all the other points are defined by the function $auxi(\cdot)$.
\begin{figure}[H]
\centering
\includegraphics[scale=0.4]{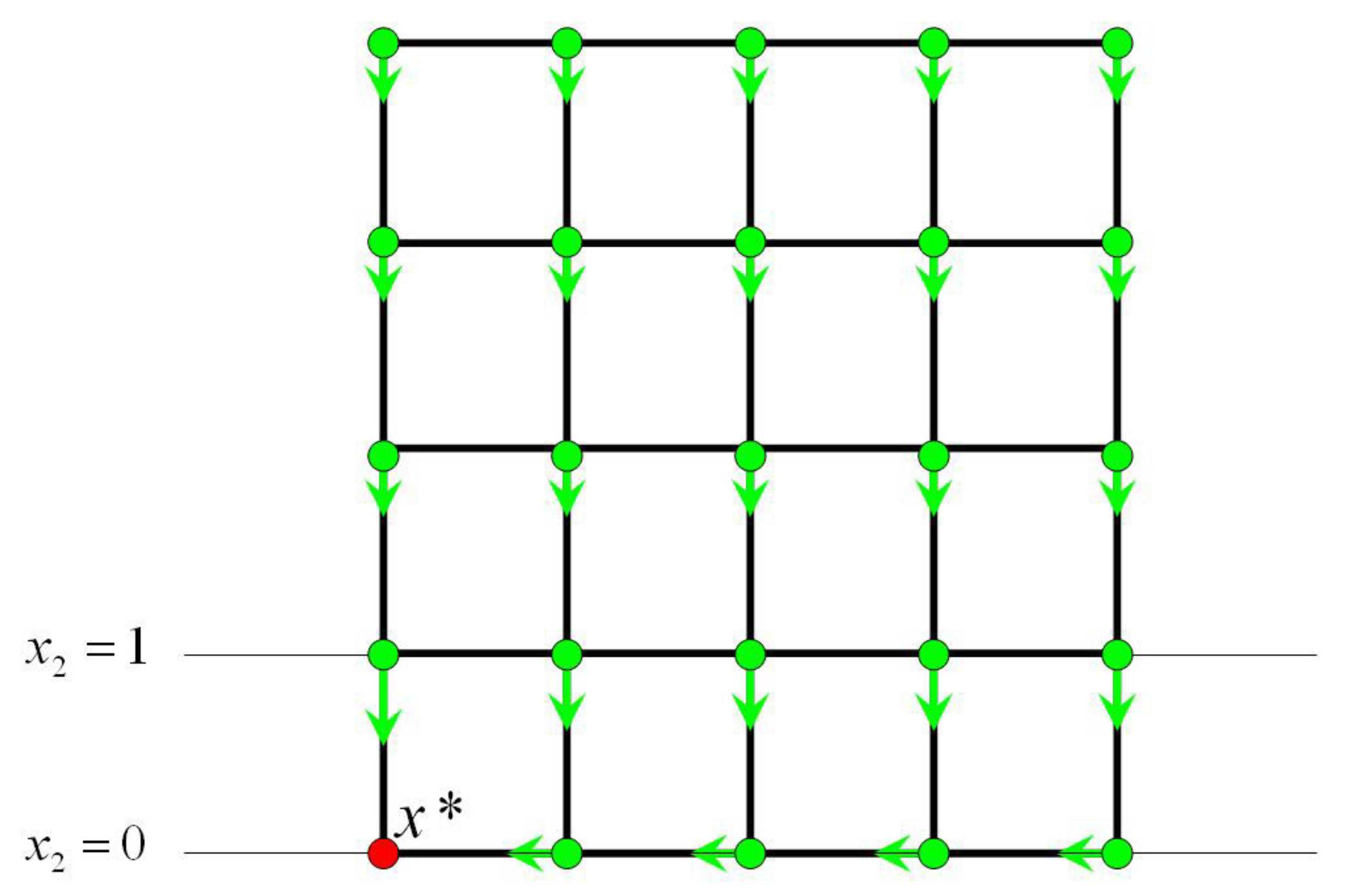}
\caption{$auxi(\x)$\label{Fig3}}
\end{figure}

The adversary will set $g(\x)=f(\x)-\x$ to be $auxi(\x)$ except at certain points (to be decided according to the algorithm) where it may hide a zero point.

\begin{enumerate}
\item Proof of the Lower Bound:

First consider $\x$ such that $x_d=0$. It constitute a solution of $d-1$ dimension. By inductive
hypothesis, it requires time $N_1+N_2+\cdots+N_{d-1}$ to decide whether or not there is one zero point
at $x_d=0$.

Second, when there is no such zero point, we need to decide if there is a zero point at
$\x$ with $x_d>0$. Fixing any $i>0$, we will set, for all $\x$ with $x_d=i$, $g(\x)=0$ whenever none of such
$\x$ is queried, and set $g(\x)=-e_d$ otherwise.
This will take $N_d$ queries.

One may note that the adversary always answers a non-zero value. In fact, for any pair
$i=maxindex(\x)$ and $j=x_i$ not query, the adversary can make $g(\x)=0$ without violating the order preserving
property.

\item Proof of the Upper Bound:

We design an algorithm which always queries the componentwise maximum point of the lattice
$\x^{\max}=(N_1,N_2,\cdots, N_d)$. We should have $g(\x^{\max})\leq_c \bz$. We are done if it is zero. Otherwise,
there must exist some $i$, such that $g(\x^{\max})_i< 0$. The problem is reduced to a smaller lattice $L'= \{\x\in Z: \bz\leq_c \x\leq_c (N_1,N_2,\cdots, N_{i-1}, N_{i}-1, N_{i+1},\cdots, N_d)\}$ which has a total sum of
side lengths at most $N_1+N_2+\cdots+N_d-1$. The claim follows.
\end{enumerate}
\qed
\end{proof}

For the randomized lower bound, it follows in the same way as in the one-dimensional case for general lattice. We can
always set $f(\x)=0$ for all $\x$ with $i=maxindex(\x)$ and $j=x_i$ if none
of such $\x$ is queried.

\begin{corollary}
Given the componentwise lattice $L=N_1\times N_2 \times\cdots\times N_d$ of $d$ dimensions, an order preserving function $f$
and a fixed point $\x^0$, it takes time $\theta(N_1+N_2+\cdots + N_d)$ for any randomized algorithm to
decide whether there is a unique fixed point with probability at least $1/2$.
\end{corollary}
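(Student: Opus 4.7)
The plan is to combine the adversary construction used in the deterministic theorem above with the Yao-style averaging argument that was already used for the one-dimensional randomised lower bound earlier in this section. The upper bound is inherited for free from the deterministic algorithm (any deterministic algorithm is a randomised one that ignores its coins), so only the lower bound $\Omega(N_1+N_2+\cdots+N_d)$ against randomised algorithms requires work.

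For the lower bound I would re-use the adversary who, by default, answers $g(\x)=auxi(\x)=-\e_{maxindex(\x)}$ on every queried non-zero point $\x$. The deterministic proof already observed that, under these default answers, a hidden second fixed point can be planted at any point $\x$ whose index pair $(i,j)=(maxindex(\x),x_i)$ has not been queried, without violating order preservation. This exposes $N_1+N_2+\cdots+N_d$ disjoint \emph{hiding slots}, one per admissible pair $(i,j)$ with $1\le i\le d$ and $1\le j\le N_i$.

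Next I would run the averaging step. Fix any randomised algorithm $R$ making at most $k$ queries, and, against the default answers, let $q_{ij}$ denote the probability (over $R$'s coins) that at least one of $R$'s queries lands in slot $(i,j)$. Each query lands in exactly one slot, so $\sum_{(i,j)} q_{ij}\le k$, and averaging gives a slot $(i^*,j^*)$ with $q_{i^*j^*}\le k/(N_1+\cdots+N_d)$. The adversary now presents either the \emph{null} instance $I_0$ (no planted zero, so $\x^0$ is unique) or the \emph{hidden} instance $I_{i^*j^*}$ (zeros planted at every $\x$ in slot $(i^*,j^*)$, so $\x^0$ is not unique). Conditioned on $R$ not touching slot $(i^*,j^*)$, the transcripts in the two instances are identical, so $R$ returns the same answer on both and hence errs on at least one of them. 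Choosing $k<(N_1+\cdots+N_d)/2$ makes the probability of non-distinguishing strictly greater than $1/2$, so some instance forces error probability $>1/2$, contradicting correctness with probability at least $1/2$.

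The main obstacle I anticipate is the consistency check: one must verify that simultaneously flipping $g(\x)$ from $-\e_{i^*}$ to $\bz$ for every $\x$ in slot $(i^*,j^*)$ keeps $f(\x)=\x+g(\x)$ order preserving against its $auxi$-valued neighbours. The saving grace is the maxindex structure: if $\x$ lies in slot $(i^*,j^*)$ and $\y\ge_c\x$ lies outside it, then either $maxindex(\y)=i^*$ with $y_{i^*}>j^*$, or $maxindex(\y)>i^*$ with $y_k=0$ for $k=i^*$ forced to be strict in the max coordinate; in either case $\y-\e_{maxindex(\y)}\ge_c\x$, which is precisely the inequality needed, and the symmetric check for $\z\le_c\x$ is immediate since $\z-\e_{maxindex(\z)}\le_c\z\le_c\x$. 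Once this routine structural verification is in place, the averaging argument above yields the claimed randomised $\Omega(N_1+\cdots+N_d)$ bound.
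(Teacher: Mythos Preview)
Your proposal is correct and follows essentially the same route as the paper: the paper's proof of this corollary is literally one sentence (``it follows in the same way as in the one-dimensional case for general lattice; we can always set $g(\x)=0$ for all $\x$ with $i=maxindex(\x)$ and $j=x_i$ if none of such $\x$ is queried''), and you have spelled out exactly that argument --- re-using the $auxi(\cdot)$ adversary, counting the $N_1+\cdots+N_d$ hiding slots, and applying the averaging step from the 1D randomised theorem. Your write-up is in fact more careful than the paper's, since you also sketch the order-preservation verification when an entire slot is flipped to $\bz$; the phrasing in the case $maxindex(\y)>i^*$ is slightly garbled (it is not that $y_{i^*}=0$, but rather that $y_m>0=x_m$ for $m=maxindex(\y)$, whence $(\y-\e_m)_m\ge 0=x_m$), though the inequality you need does hold for the reason you indicate.
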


\end{section}

\begin{section}{Determining Uniqueness under Polynomial Function Model}

In this section, we consider the dimension as a part of the input size in unary
and develop a hardness proof for the polynomial function model for determining the uniqueness of a given fixed point.
We start with a polynomial-time reduction from the 3-SAT problem which is NP-complete to one of finding a second Tarski's fixed point, by deriving an order preserving mapping $f$ from a componentwise ordering lattice $L$ into itself, with a given fixed point.
Therefore, given $f$ as a polynomial time
function with a known fixed point, determining whether $f$ has another fixed point in $L$ is an NP-hard problem. In other words, determining the uniqueness
of a Tarski's fixed point is co-NP-hard.

Furthermore, even for the case when the dimension is one, the uniqueness problem is still co-NP-hard. This can be done by
 designing a polynomial-time reduction from the 3-SAT problem to the uniqueness of Tarski's fixed point in a lexicographic lattice. As the lexicographic order defines a total order, it can be reduced to a one dimensional problem by finding a polynomial time algorithm for the order function calculation.
It then follows that determining the uniqueness of Tarski's fixed point in a lexicographic lattice is Co-NP hard though there exists a polynomial-time algorithm for finding one Tarski's fixed point in a lexicographic lattice in any dimension.

We start with one of the NP-complete problems, 3-SAT, defined as follows.

\begin{definition}(3CNF-formula)
A literal is a boolean variable. A clause is several literals connected with $\vee$'s. A boolean formula is in conjuctive normal form (CNF) if it is made of clauses connected with $\wedge$'s. If every clause has exactly 3 literals, the CNF-formula is called 3CNF-formula.
\end{definition}
\begin{definition} (3-SAT Problem)

\
\ Input: \ $n$ boolean variables $x_1,x_2,\cdots, x_n$
          
             \  \  \ \ \  \ \ \ \ \ \ \  $m$ clauses $C_1,C_2,\cdots, C_m$, each consisting of three literals from
              
             \ \ \ \ \  \ \ \ \ \ \ \   the set $\{x_1,\bar{x}_1, x_2,\bar{x}_2,\cdots, x_n,\bar{x}_n\}$.
               
               Output: An assignment of $\{0,1\}$ to the boolean variables $x_1,x_2,\cdots, x_n$, 
               
               \
                \ \ \  \ \ \ \ \ \ \ \  such that the 3CNF-formula $F:C_1\wedge C_2\cdots\wedge C_m= \text{true}$, i.e., 
                
                 \ \ \ \  \ \ \ \ \ \ \ \ there is at least one true literal for every clause.

\end{definition}

\begin{theorem}{\cite{karp}}
3-SAT is NP-complete
\end{theorem}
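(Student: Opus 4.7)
The plan is to verify both parts of NP-completeness in turn: membership in NP and NP-hardness. Membership is straightforward: given a 3CNF formula $F$ over variables $x_1,\ldots,x_n$ and $m$ clauses, a truth assignment $\tau\in\{0,1\}^n$ serves as a polynomial-size certificate. A verifier simply evaluates each of the $m$ clauses under $\tau$ and accepts iff all evaluate to true; this runs in $O(m)$ time, so $\text{3-SAT}\in \text{NP}$.

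For NP-hardness I would build on the Cook--Levin theorem, which shows that (unrestricted) $\text{SAT}$ is NP-complete by encoding the accepting computation tableau of a nondeterministic polynomial-time Turing machine as a CNF formula. Assuming this, it suffices to give a polynomial-time reduction $\text{SAT}\le_p \text{3-SAT}$. Given a CNF instance $F=C_1\wedge\cdots\wedge C_m$, I process each clause $C=(\ell_1\vee\cdots\vee\ell_k)$ according to its width: a clause with $k=3$ is kept as is; a unit clause $(\ell_1)$ is replaced by the conjunction $(\ell_1\vee y\vee z)\wedge(\ell_1\vee y\vee\bar z)\wedge(\ell_1\vee\bar y\vee z)\wedge(\ell_1\vee\bar y\vee\bar z)$ using fresh auxiliary variables $y,z$; a $2$-clause $(\ell_1\vee\ell_2)$ is padded to $(\ell_1\vee\ell_2\vee y)\wedge(\ell_1\vee\ell_2\vee\bar y)$; and a wide clause with $k\ge 4$ is split using fresh variables $y_1,\ldots,y_{k-3}$ into
\[
(\ell_1\vee\ell_2\vee y_1)\wedge(\bar y_1\vee\ell_3\vee y_2)\wedge\cdots\wedge(\bar y_{k-4}\vee\ell_{k-2}\vee y_{k-3})\wedge(\bar y_{k-3}\vee\ell_{k-1}\vee\ell_k).
\]
I would then verify correctness: a satisfying assignment of $F$ extends to one of the new formula $F'$ by choosing the $y_i$'s along the chain so that every introduced clause contains a true literal, and, conversely, if all introduced clauses are satisfied but none of $\ell_1,\ldots,\ell_k$ is true, an easy induction on the chain yields a contradiction on the value of $y_{k-3}$. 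The total number of new variables and clauses is $O(\sum_i |C_i|)$, hence polynomial, giving the required reduction.

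Combining the two halves, $\text{3-SAT}$ is in NP and every NP language reduces to $\text{SAT}$ and thence to $\text{3-SAT}$, so $\text{3-SAT}$ is NP-complete. The main obstacle in a fully self-contained proof would be establishing Cook--Levin itself, since the tableau encoding is lengthy and notation-heavy; by contrast, the clause-splitting reduction above is routine. For the purposes of the present paper this theorem is used only as a black-box starting point for subsequent co-NP-hardness reductions to Tarski uniqueness in componentwise and lexicographic lattices, so citing Karp is justified.
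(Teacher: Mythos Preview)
Your proof is correct and follows the standard textbook route: membership in NP via a truth-assignment certificate, and NP-hardness via Cook--Levin followed by the clause-width normalization reduction $\text{SAT}\le_p\text{3-SAT}$. The equisatisfiability argument for the chain gadget in the $k\ge 4$ case is stated accurately.

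As for comparison with the paper: there is nothing to compare, because the paper does not prove this theorem. It simply states it with a citation to Karp and immediately uses it as a black box for the reductions in Corollaries~\ref{nphard} and~\ref{nphard1}. You correctly anticipated this in your final paragraph. So your write-up goes well beyond what the paper does; if anything, you could have stopped after the first sentence and the citation, since in context the statement functions purely as an imported result rather than something the authors argue for.
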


For both lexicographic ordering and componentwise ordering, the Co-NP-hardness results can be derived from a reduction from 3-SAT problem.

\subsection {Proof of Co-NP-hard in Lexicographic Ordering}
\begin{corollary} \label{nphard} Given lattice $(L,\leq_l)$ and an order preserving mapping $f$ as a polynomial function,
determining that $f$ has a unique fixed point in $L$ is a Co-NP hard
problem.
\end{corollary}

\begin{proof}
Consider a 3-SAT problem with a 3CNF-formula $F(x_1,x_2,\cdots, x_n)$.
We define the function $f$ as follows:
$f(-1)=-1$ and $\forall i\geq 0$, we rewrite $i$ in binary form $i_1i_2\cdots i_n$.
Let $f(i)=i$ if $F(i_1,i_2,\cdots, i_n)=true$, and $f(i)=i-1$ otherwise.

Then $f$ is an order preserving function on the lexicographic ordering lattice $L=\{-1,0,\cdots,2^n-1\}$. If we find a fixed point $f(i^*)=i^*$ and $i^*\neq -1$ on lattice $L$, we find an assignment  $(i^*_1,i^*_2,\cdots, i^*_n)$ such that $F(i^*_1,i^*_2,\cdots, i^*_n)=true$ for the 3-SAT problem. Since 3-SAT problem is NP-hard, find a second Tarski's fixed point in lexicographic ordering lattice is NP-hard. Therefore, determining the uniqueness is Co-NP-hard.
\qed
\end{proof}

\subsection{Proof of Co-NP-hard in Componentwise Ordering }

\begin{corollary} \label{nphard1} Given lattice $(L,\leq_c)$ and an order preserving mapping $f$ as a polynomial function,
determining that $f$ has a unique fixed point in $L$ is a Co-NP hard
problem.
\end{corollary}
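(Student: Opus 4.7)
My plan is to adapt the reduction from Corollary~\ref{nphard} to the componentwise setting, exploiting the fact that any one-dimensional lattice is simultaneously totally ordered and componentwise ordered. I would take $L=\{-1,0,1,\ldots,2^n-1\}$, viewed as a one-dimensional componentwise lattice, and define the same map $f$ as in the lexicographic case: $f(-1)=-1$, and for each $i\geq 0$ with $n$-bit binary expansion $(i_1,\ldots,i_n)$, set $f(i)=i$ if $F(i_1,\ldots,i_n)=\text{true}$ and $f(i)=i-1$ otherwise.

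First I would verify that $f$ is order-preserving under $\leq_c$. In one dimension this ordering coincides with the usual integer order, so the inequality chain $f(i)\leq i\leq j-1\leq f(j)$ used in Corollary~\ref{nphard} applies verbatim. Next, $f$ is polynomial-time computable: given $i$, I read off its $n$ bits and evaluate the 3CNF formula $F$ in time polynomial in the input size $\log|L|=n+O(1)$. Finally, $-1$ is a fixed point by construction, while any additional fixed point $i^*\geq 0$ must satisfy $f(i^*)=i^*$, which by the definition of $f$ happens exactly when $F(i^*_1,\ldots,i^*_n)=\text{true}$. Hence a second fixed point exists iff $F$ is satisfiable, and the Co-NP-hardness of uniqueness then follows from the NP-completeness of 3-SAT.

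If a genuinely $d$-dimensional componentwise instance is preferred to match the framing that the dimension is part of the input, I would pad the construction, taking the product lattice $L\times\{0\}^{d-1}$ and letting $f$ act as the identity on the trailing zero coordinates; this preserves monotonicity under $\leq_c$ and introduces no spurious fixed points, since the extra coordinates are pinned to a single value. The only thing that might look like an obstacle is the temptation to instead spread the 3-SAT instance genuinely across all $n$ coordinates of the Boolean cube $\{0,1\}^n$ under componentwise order. That route is delicate because satisfiability is not monotone in the assignment (negated literals reverse direction), and one would need extra positive/negative literal coordinates together with a careful handling of consistency constraints to force the fixed-point structure one wants. Because the statement only asks for Co-NP-hardness, the simpler one-dimensional reduction above suffices and is what I would submit.
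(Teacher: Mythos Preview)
Your argument is correct. The observation that in dimension one the componentwise order $\leq_c$ coincides with the usual integer order lets you recycle the lexicographic reduction verbatim, and the verification of monotonicity via $f(i)\leq i\leq j-1\leq f(j)$ for $i<j$ is fine. The correspondence between non-trivial fixed points and satisfying assignments is exactly as you state.

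The paper, however, takes a different route: it builds a genuinely $d$-dimensional instance on $\{-1,0,\ldots,2^n-1\}^d$ by collapsing every point $v$ to the diagonal via $i=\max\{v_1,\ldots,v_d\}$ and then applying the one-dimensional rule along the diagonal, i.e.\ $f(v)=(i,i,\ldots,i)$ or $(i-1,\ldots,i-1)$ according to $F$. Monotonicity under $\leq_c$ then follows because $\max$ is coordinatewise monotone, and one checks that the only fixed points lie on the diagonal. This buys a reduction in which the dimension $d$ is an actual parameter rather than being set to $1$, which matches the section's framing that ``the dimension is part of the input size in unary'' and also feeds directly into the later supermodular-game reduction (where $d$ is the number of players and must be at least $2$). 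Your one-dimensional reduction is cleaner for the bare Co-NP-hardness claim; the paper's diagonal construction is what generalises.

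One small caveat on your padding variant: the paper's standing convention is $L_d=\{\x\in Z^d:\ba\leq\x\leq\bb\}$ with $\ba<\bb$, so a product with singletons $\{0\}^{d-1}$ is formally outside that setup. If you want a $d$-dimensional instance, either pad with a nontrivial interval and send those coordinates to their minimum (so they contribute no extra fixed points), or simply adopt the paper's $\max$-to-diagonal trick.
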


\begin{proof}
Again we consider a 3-SAT problem with a 3CNF-formula $F(x_1,x_2,\cdots, x_n)$. For any node $v=(v_1,v_2,\cdots,v_d)$ in a d-dimensional componnentwise ordering lattice, we define the function $f(v)$ as follows:
\begin{itemize}
\item[1)] $f(v)=f((i,i,\cdots, i))$, where  $i=\max\{v_1,v_2,\cdots, v_d\}$. 
\item[2)] $f((-1,-1,\cdots,-1))=(-1,-1,\cdots,-1)$. 
\item[3)] $\forall i\geq 0$, we rewrite $i$ in binary form $i_1i_2\cdots i_n$. $f((i,i,\cdots, i))=(i,i,\cdots, i)$ if 
$F(i_1,i_2,\cdots, i_n)=true$, and $f((i,i,\cdots, i))=(i-1,i-1,\cdots, i-1)$ otherwise.
\end{itemize}

Then $f(v)$ is an order preserving function on the componnentwise ordering lattice $L=\{(v_1,\cdots,v_d): \forall i, v_i\in\{-1,0,\cdots,2^n-1\}\}$. If we find a fixed point $f(v^*)=v^*$ and $v^*\neq (-1,-1,\cdots, -1)$ on lattice $L$, we find an assignment  $(i^*_1,i^*_2,\cdots, i^*_n)$ such that $F(i^*_1,i^*_2,\cdots, i^*_n)=true$ for the 3-SAT problem. Since 3-SAT problem is NP-hard, find a second Tarski's fixed point in componentwise ordering lattice is NP-hard. Therefore, determining the uniqueness is Co-NP-hard.
\qed
\end{proof}

\end{section}

\section{Finding Equilibria in Supermodular Games}
In previous sections, we solve the computational problems of the Tarski's fixed point. We are still interested in how to find a pure Nash equilibrium  and how to determine the uniqueness of the pure Nash in a supermodular game.  As the strong connection between the equilibrium of supermodular games and the Tarski's fixed point, the question is whether the previous results hold for supermodular games.

In this section, we develop a polynomial time algorithm to find a pure Nash equilibirum which is more efficient than the algorithm we design for Tarski's fixed point before.  
But the Co-NP-hardness result still holds for supermodular games.

\subsection {Supermodular Games and Tarski's Fixed Points}

We will start with the formal definition of supermodular games.
\begin{definition} (Supermodular Games)\label{super}
Let $\Gamma=\{(S_i,u_i):i=1,\cdots,d\}$ be a finite supermodular game with $d$ players:
\begin{itemize}
\item $S_i$ is a finite subset of $\mathbf{R}$;
\item $u_i$ has increasing differences in $(s_i,s_{-i})$, where $s_{-i}$ is the strategy set of all other players except player $i$. I.e.,
$$u(s_i',s_{-i}')-u(s_i,s_{-i}')\ge u(s_i',s_{-i})-u(s_i,s_{-i}),\forall s_i'\ge s_i,s_{-i}'\ge s_{-i}$$
\end{itemize}
\end{definition}

In the following discussion, W.O.L.G, we assume $S_i=\{0,1,\cdots,N_i-1\}.$ The model can be viewed as a discretized version of a game with continuous strategy spaces, where each $S_i$ is an interval. Then $S=\times_{i=1}^dS_i$ is a componentwise lattice (see Example 1).

\begin{example} (Supermodular game and componentwise lattice)
Consider a supermodular game of two players: $S_1=\{0,1,2\}, S_2=\{0,1\}$. Then $S=S_1\times S_2$ is a componentwise ordering lattice as shown in {\bf Fig.} \ref{example of component order lattice}.
\begin{figure}[H]
\centering
{\includegraphics[scale=0.2]{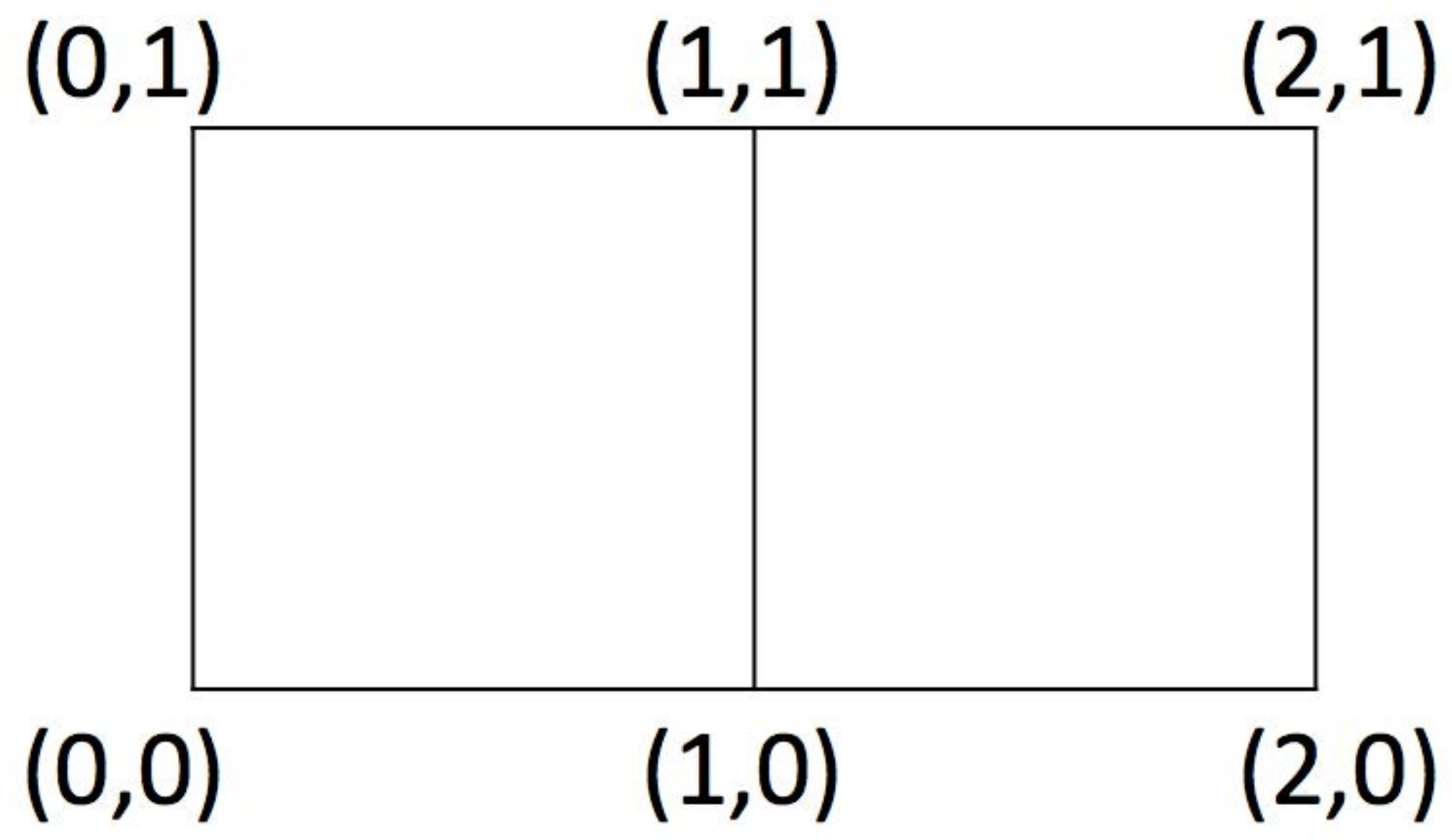}}
\caption{An example of componentwise ordering lattice\label{example of component order lattice}}
\end{figure}
\end{example}

Let $B_i$ denote $i$'s best-response function in $\Gamma$. $$B_i(s_{-i}):=arg\max_{{s_i}\in S_i}u_i({s_i},s_{-i}).$$

Denote $\overline{B}_i(s_{-i})$ as the greatest element  and $\underline{B}_i(s_{-i})$ as the least element in $B_i(s_{-i})$.

In supermodular games, by Topkis' theorem\cite{t2}, we have,
$$\overline{B}_i(s_{-i})\ge \overline{B}_i(s'_{-i}) \text{ and } \underline{B}_i(s_{-i})\ge \underline{B}_i(s'_{-i}) \text{ if }s_{-i}\ge s'_{-i}.$$

Let $\underline{B}(s)=\{\underline{B}_i(s_{-i}): i=1,\cdots,d\}$ be the least best-response function of the game, then $\underline{B}: S\rightarrow S$ is order preserving.



Tarski's fixed point theorem guarantees the existence of fixed points of any order preserving
function $f:L\rightarrow L$ on any nonempty complete lattice. A supermodular game $(S,\preceq)$ is a complete lattice and  the least best-response function $\underline{B}$ is order-preserving from $S$ to itself. Therefore, there exists an equilibrium point $x^*\in S$ such that $\underline{B}(x^*)=x^*$.

\subsection {Equilibrium Computation in Supermodular Games}

 Recall $\underline{B}(s)=\{\underline{B}_i(s_{-i}): i=1,\cdots,d\}$, where $s\in S$ and $S=\times_{i=1}^dS_i$ is the best-response function of the supermodular game. We assume the strategy set for each player $i$ is $S_i=\{0,1,\cdots, N_i-1\}$. In Tarski's fixed point theorem, the only requirement for function $f$ is order-preserving. In supermodular game, $\underline{B}$ is not only order-preserving but also need to be consistency, since for the same $s_{-i}$, the value of $\underline{B}_i(s_{-i})$ should be the same.
Therefore. if there exists an algorithm $A$ that finds a Tarski's fixed point for any componentwise lattice with order-preserving function $f$ in time T, $A$ can finds an equilibrium in supermodular game in time $T$. However, not vice versa.

W.L.O.G., assume $N_1\le N_2\le\cdots\le N_d$.

\begin{theorem}
When the best response function $\underline{B}$ is given as an oracle, a pure Nash equilibrium can be found in time $O(\log N_1\log N_2\cdots\log N_{d-1})$ in a supermodular game $\Gamma$.
\end{theorem}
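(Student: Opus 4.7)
The plan is to reduce the $d$-player supermodular equilibrium problem to a Tarski fixed-point problem on a $(d-1)$-dimensional componentwise ordering lattice, which saves the factor $\log N_d$ compared to directly applying Theorem \ref{poly-tarski}. The key observation is that in a supermodular game the least best response $\underline{B}_d(s_{-d})$ depends only on the strategies of the other players and is delivered by a single oracle query; no binary search over the largest strategy space $S_d$ is needed once $(s_1,\ldots,s_{d-1})$ is fixed.

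Concretely, let $S' = S_1\times\cdots\times S_{d-1}$ and define an auxiliary map $\phi : S' \to S_d$ by $\phi(s_1,\ldots,s_{d-1}) := \underline{B}_d(s_1,\ldots,s_{d-1})$. I then introduce the reduced best-response map $F : S' \to S'$ given coordinate-wise, for $i = 1,\ldots,d-1$, by
\begin{equation*}
F_i(s_1,\ldots,s_{d-1}) \;=\; \underline{B}_i\bigl(s_1,\ldots,s_{i-1},s_{i+1},\ldots,s_{d-1},\,\phi(s_1,\ldots,s_{d-1})\bigr).
\end{equation*}
The first step is to check that $F$ is order-preserving on $(S',\le_c)$. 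By Topkis' theorem each $\underline{B}_j$ is order-preserving in its argument, so $\phi$ is order-preserving, and each $F_i$ is a composition of order-preserving maps and therefore order-preserving. The second step is to apply the algorithm Fixed\_point of Theorem \ref{poly-tarski} to the componentwise lattice $(S',\le_c)$ with oracle $F$. Tracked dimension by dimension via the recursion $T(N_1\cdots N_{d-1}) \le \log N_{d-1}\cdot T(N_1\cdots N_{d-2}) + O(\log N_1\cdots \log N_{d-2})$, the algorithm returns a fixed point $\s'^* = (s_1^*,\ldots,s_{d-1}^*)$ using $O(\log N_1\cdots\log N_{d-1})$ queries to $F$; each such query costs $O(d)$ calls to the underlying oracle $\underline{B}$, absorbed as a constant factor.

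Finally, set $s_d^* := \underline{B}_d(\s'^*)$ and output $\s^* := (s_1^*,\ldots,s_d^*)$. For $i < d$ one has $s_i^* = F_i(\s'^*) = \underline{B}_i(s_1^*,\ldots,s_{i-1}^*,s_{i+1}^*,\ldots,s_{d-1}^*, s_d^*) = \underline{B}_i(s_{-i}^*)$ by the very definition of $F_i$ and the choice of $s_d^*$, while $s_d^* = \underline{B}_d(s_{-d}^*)$ holds by construction. Hence $\s^*$ is a fixed point of the least best-response map $\underline{B}$, and therefore a pure Nash equilibrium by the Topkis--Tarski correspondence recalled in Section~6.1.

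The main obstacle is ensuring that the complexity delivered by Theorem \ref{poly-tarski} really is the per-dimension bound $O(\log N_1 \cdots \log N_{d-1})$ rather than the coarser $O((\log|S'|)^{d-1})$; this requires revisiting the recursion in the proof of Theorem \ref{poly-tarski} and verifying that the halving in each recursive call acts on one specific side length $N_i$ at a time. A secondary subtlety is that the reduction consumes player $d$'s dimension exactly when $N_d$ is the largest, which is precisely why the saving is $\log N_d$ under the assumption $N_1 \le \cdots \le N_d$; if this ordering assumption were absent, one should reduce along whichever dimension maximizes $\log N_i$.
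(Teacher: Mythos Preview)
Your reduction is correct and constitutes a genuinely different proof from the paper's. The paper exploits the best-response consistency at the \emph{base} of the Tarski recursion: in the 2D case, instead of binary-searching along the middle row for a point with $f(\x)_1=x_1$, two oracle calls suffice (query at $(\lfloor N_1/2\rfloor,y)$ to read off $y'=\underline{B}_2(\lfloor N_1/2\rfloor)$, then query at $(\lfloor N_1/2\rfloor,y')$), so the 2D cost drops from $O(\log N_1\log N_2)$ to $O(\log N_1)$; higher dimensions are then handled by the same dimension-reduction recursion as in Theorem~\ref{poly-tarski}. You instead eliminate player $d$ \emph{globally} at the outset via the substitution $s_d=\underline{B}_d(s_{-d})$, obtaining an order-preserving map $F$ on the $(d{-}1)$-dimensional product $S_1\times\cdots\times S_{d-1}$ to which Theorem~\ref{poly-tarski} applies as a black box. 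Your route is more modular and makes it transparent that the saved factor is exactly $\log N_d$; the paper's route, as written, requires ordering the recursion so that dimension $d$ ends up as the ``free'' coordinate in the 2D base case, a point left implicit there. Two minor remarks on your write-up: evaluating $F(\s')$ costs only two calls to $\underline{B}$ rather than $O(d)$ (one call at $(\s',\text{anything})$ to extract $\phi(\s')$, then one call at $(\s',\phi(\s'))$ whose first $d{-}1$ coordinates are precisely $F_1(\s'),\ldots,F_{d-1}(\s')$); and the recursion for Algorithm~\ref{dD} is purely multiplicative, $T_k\le (\log N_k)\,T_{k-1}$, since each halving of dimension $k$ already invokes one full $(k{-}1)$-dimensional solve, so the additive $O(\log N_1\cdots\log N_{k-1})$ term you wrote is redundant.
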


\begin{proof} 
The algorithm is similar to the proof of Theorem \ref{poly-tarski} for finding one Tarski's fixed point. The only difference here is for the 2D case. On a $N_1\times N_2$ box, we start with the node $(\lfloor\frac{N_1}{2}\rfloor,y)$, where $1\le y\le N_2$ can be any integer. We query the value of $\underline{B}(\lfloor\frac{N_1}{2}\rfloor,y)$. Assume the value is $(x',y')$. Next we query the value of $\underline{B}(\lfloor\frac{N_1}{2}\rfloor,y')$. Because in the previous query we have already known that $\underline{B}_2(\lfloor\frac{N_1}{2}\rfloor)=y'$, we must have $\underline{B}(\lfloor\frac{N_1}{2}\rfloor,y')=(x'',y')$.
\begin{enumerate}
\item If $x''=\lfloor\frac{N_1}{2}\rfloor$, $(\lfloor\frac{N_1}{2}\rfloor,y')$ is a pure Nash equilibrium.
\item If $x''<\lfloor\frac{N_1}{2}\rfloor$, all nodes smaller than $(x'',y')$ form a complete lattice and the size is less than half of the original lattice.
\item If $x''>\lfloor\frac{N_1}{2}\rfloor$, all nodes greater than $(x'',y')$ form a complete lattice and the size is also less than half of the original lattice.
\end{enumerate}
Therefore, by using the property of the best response function, a pure Nash can be found in time $2\log N_1$ for 2D case. Recall that finding a Tarski's fixed point takes time $O(\log N_1\log N_2)$ for 2D case. The generalization for the higher dimensional cases is similar to what we do for finding one Tarski's fixed point. Thus we can find a pure Nash in time $O(\log N_1\log N_2\cdots\log N_{d-1})$.
\qed

\end{proof}

%


Again, by a reduction from 3-SAT problem, we obtain

\begin{theorem}
Given a $d$-player supermodular game $\Gamma$ with strategy set $S=\times_{i=1}^dS_i: \forall i, S_i=\{-1,0,\cdots, 2^n-1\}$ and a best response function $B$ 
as a polynomial function determining that $\Gamma$ has a unique Nash equilibrium is a Co-NP hard
problem.
\end{theorem}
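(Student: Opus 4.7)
The plan is to adapt the 3-SAT reduction of Corollary~\ref{nphard1} to the supermodular game setting. Given a 3CNF formula $F(x_1,\ldots,x_n)$, I will construct a $d$-player supermodular game (any fixed $d\ge 2$ works; for concreteness $d=2$) with strategy sets $S_i=\{-1,0,\ldots,2^n-1\}$ so that $(-1,\ldots,-1)$ is an obvious pure Nash equilibrium, and every other pure Nash equilibrium has the form $(k,k,\ldots,k)$ where $k\in\{0,\ldots,2^n-1\}$ has a binary expansion satisfying $F$. Then deciding uniqueness is equivalent to deciding unsatisfiability of $F$, which is Co-NP-hard; a satisfying $k$ serves as a polynomial-size certificate of non-uniqueness, so the problem sits in Co-NP.

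The game is built as follows. Define $h:\{-1,0,\ldots,2^n-1\}\to\{-1,0,\ldots,2^n-1\}$ by $h(-1)=-1$ and, for $k\ge 0$ with binary expansion $k_1k_2\cdots k_n$, by $h(k)=k$ if $F(k_1,\ldots,k_n)=\mathrm{true}$ and $h(k)=k-1$ otherwise. Give player $i$ the strictly concave payoff
\[
u_i(s_i,s_{-i})=-\bigl(s_i-h(\max_{j\ne i}s_j)\bigr)^2,
\]
so the unique best response is $B_i(s_{-i})=h(\max_{j\ne i}s_j)$. Both $u_i$ and $B_i$ are polynomial-time computable from $F$; the only nontrivial work is one evaluation of the 3CNF formula.

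To confirm that $\Gamma$ is supermodular in the sense of Definition~\ref{super}, I would verify that $h$ is monotone non-decreasing (a short case split shows $h(k)\ge h(k-1)$ since each value lies in $\{k-2,k-1,k\}$) and that $u_i$ has increasing differences. For $s_i'\ge s_i$ and $s_{-i}'\ge s_{-i}$, set $b=B_i(s_{-i})$ and $b'=B_i(s_{-i}')\ge b$; expanding the quadratic,
\[
\bigl[u_i(s_i',s_{-i}')-u_i(s_i,s_{-i}')\bigr]-\bigl[u_i(s_i',s_{-i})-u_i(s_i,s_{-i})\bigr]=2(s_i'-s_i)(b'-b)\ge 0.
\]

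The equilibrium analysis is then short. The profile $(-1,\ldots,-1)$ is an equilibrium because $h(-1)=-1$. Suppose $s^*$ is any other equilibrium with $k=\max_j s_j^*\ge 0$. The maximum cannot be attained uniquely: if only player $i^*$ played $k$, monotonicity of $h$ would give $s_{i^*}^*=h(\max_{j\ne i^*}s_j^*)\le h(k-1)\le k-1<k$, a contradiction. So at least two players play $k$, meaning $\max_{j\ne i}s_j^*=k$ for every $i$, and hence $s_i^*=h(k)$ for every $i$. This is consistent only when $h(k)=k$, i.e., when $F(k_1,\ldots,k_n)=\mathrm{true}$. Conversely, any satisfying $k$ produces the equilibrium $(k,\ldots,k)$. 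The main obstacle is ensuring supermodularity while still encoding 3-SAT inside a game rather than an abstract order-preserving map; the quadratic penalty above sidesteps this difficulty by pushing all of the combinatorics into the monotone map $h\circ\max$ and reducing the supermodularity check to the single monotonicity inequality $(s_i'-s_i)(b'-b)\ge 0$.
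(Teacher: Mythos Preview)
Your proposal is correct and uses the same core reduction as the paper: both encode a 3-SAT instance via the best-response map $B_i(s_{-i})=h(\max_{j\ne i}s_j)$ with $h$ defined exactly as you do, and both argue that the non-trivial equilibria are precisely the diagonal profiles $(k,\ldots,k)$ with $h(k)=k$. The one substantive difference is that you explicitly construct payoffs $u_i(s_i,s_{-i})=-(s_i-h(\max_{j\ne i}s_j))^2$ and verify increasing differences directly, whereas the paper specifies $B$ alone and only checks that $B$ is order-preserving; your version is therefore more complete with respect to Definition~\ref{super}, which asks for a supermodular \emph{game} (utilities with increasing differences), not merely a monotone best-response map. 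Your equilibrium analysis via the ``maximum must be attained by at least two players'' observation is also a bit cleaner than the paper's case split on whether $\max(x_2,\ldots,x_d)$ equals $x_1$ or $x_1+1$, though both reach the same conclusion.
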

\begin{proof}
Consider a 3-SAT problem with a 3CNF-formula $F(x_1,x_2,\cdots, x_n)$. ${B}(s)=\{B_i(s_{-i}): i=1,\cdots,d\}$, where $s\in S$ and $S=\times_{i=1}^dS_i$ is the best response function of the supermodular game. 

We define $\forall i, {B}_i(s_{-i})$ as follows:
\begin{itemize}
\item[1)] ${B}_i(s_{-i})={B}_i((j,j,\cdots, j))$, where  $j=\max\{s_{-i}\}$. 
\item[2)] ${B}_i((-1,-1,\cdots,-1))=-1$. 
\item[3)] $\forall j\geq 0$, we rewrite $j$ in binary form $j_1j_2\cdots j_n$. ${B}_i((j,j,\cdots, j))=j$ if 
$F(j_1,j_2,\cdots, j_n)=true$, and ${B}_i((j,j,\cdots, j))=j-1$ otherwise.
\end{itemize}

For $\forall s=\{s_1,s_2,\cdots, s_d\}>\{s_1',s_2',\cdots,s_d'\}=s'$, we have $B(s)\ge B(s')$, which implies B(s) is an order preserving function.  Then ${B}(s)=\{{B}_i(s_{-i}): i=1,\cdots,d\}$ is a best response function for the supermodular game $\Gamma$. 

Next we prove that if $s^*=(s_1^*,s_2^*,\cdots, s_d^*)$ is an equilibrium of the supermodular game $\Gamma$ with the above best response function $B$, we must have $s_1^*=s_2^*=\cdots=s_d^*$. I.e., all the elements of $s^*$ must be identical.  

Suppose $(x_1,x_2,\cdots,x_d)$ is an equilibrium.  By the definition of $B_i$, $$\max(x_2,x_3,\cdots, x_d)=x_1 \text{ or } x_1+1.$$
\begin{itemize}
\item[1)] Case I: $\max(x_2,x_3,\cdots, x_d)=x_1$. Then $\max(x_1,x_3,\cdots, x_d)=x_1$, so $B_{2}(x_{-2})=B_{2}(x_1,x_1,\cdots, x_1)=x_1$. Since $x$ is an equilibrium, $x_2=B_{2}(x_{-2})$. Therefore, $x_2=x_1$. Similarly, we can prove $\forall i, x_i=x_1$. 
\item[2)] Case 2: $\max(x_2,x_3,\cdots, x_d)=x_1+1$. We consider two cases, a)only one element of $\{x_2,x_3,\cdots, x_d\}$ is $x_1+1$  and b) at least two elements equal to $x_1+1$. 

\begin{itemize}
\item[a)] W.L.O.G, assume only $x_2=x_1+1$. Then $$B_{2}(x_{-2})=B_{2}(x_1,x_1,\cdots, x_1)=x_1\text{ or }x_1-1<x_2.$$ This contradicts to the assumption that $x$ is an equilibrium.
\item[b)] W.L.O.G, assume $x_2=x_3=x_1+1$. Then $$B_{2}(x_{-2})=B_{2}(x_1+1,x_1+1,\cdots, x_1+1)=B_{1}(x_{-1})=x_1<x_2.$$ Again it contradicts to the assumption that $x$ is an equilibrium.

\end{itemize}
\end{itemize}
Therefore, $s^*=(s_1^*,s_2^*,\cdots, s_d^*)$ is an equilibrium of the supermodular game $\Gamma$ with the above best response function $B$, we must have $s_1^*=s_2^*=\cdots=s_d^*$.
Hence, if we find an equilibrium $s^*$  and $s^*\neq (-1,-1,\cdots, -1)$ in $\Gamma$, we find an assignment  $(j^*_1,j^*_2,\cdots, j^*_n)$ such that $F(j^*_1,j^*_2,\cdots, j^*_n)=true$ for the 3-SAT problem. Since 3-SAT problem is NP-hard, find a second equilibrium is NP-hard. Therefore, determining the uniqueness is Co-NP-hard.

\qed
\end{proof}

\section{Conclusion and Open Problems}
Results on the Tarski's fixed points contrast with past results for the general fixed point computation in several ways. First in the oracle function model, several fixed point computational problems are known to be require an exponential number of queries for constant dimensions, including the two dimensional case (Chen and Deng; Hirsch et al.)\cite{CD2005,HPV}. Our results prove the Tarski's fixed point to be polynomial in the oracle model. It also follows that it is so for the polynomial function model, which is also different for those fixed point computational problems which are known to be PPAD-complete for constant dimensions, including the two dimensional case (Chen and Deng)\cite{ChenDeng2006}.

Recently, Mihalis, Kusha and Papadimitriou stated in a private communication that they proved a lower bound of $\Omega(\log^2|L|)$ in the oracle function model for finding a Tarski's fixed point in the two dimensional case. Together with our upper bound results, we conjecture a matching bound of  $\Theta(\log^d|L|)$ for general $d$. 

In the polynomial function model, we prove that determining the uniqueness is co-NP-complete. In comparison, the uniqueness of Nash equilibrium is known to be co-NP-complete but its existence is in PPAD.

The above comparisons with previous work leave the following outstanding open problem: Is it PPAD-complete to find a Tarski's fixed point in the variable dimension $n$ for the polynomial function model?
This problem is known to be true for finding a Sperner simplex in dimension $n$ when $n$ is a variable.
We conjecture that this is also true for finding a Tarski's fixed point.

%
%
%

\section{Appendix: Alternative proofs for Co-NP-hardness}

Let \[P=\{\x\in R^n\;|\;A\x\leq \bb\}\]  be a full-dimensional polytope,
where $A$ is an $m\times n$ rational matrix satisfying that each row
of $A$ has at most one positive entry and $\bb$ a rational vector of
$R^m$. It has been shown in (Lagarias)\cite{lagarias}  that
\begin{theorem}\cite{lagarias}\label{npcomplete} Determining whether there is an integer point
in $P$ is an NP-complete problem.
\end{theorem}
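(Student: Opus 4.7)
The plan is to prove both halves of the NP-completeness assertion separately, namely NP-membership and NP-hardness of integer feasibility for polytopes $P=\{\x\in R^n\;|\;A\x\leq \bb\}$ in which each row of $A$ carries at most one positive entry.

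For NP-membership the standard argument suffices. One invokes the classical a priori bounds on the bit-length of integer vertices of rational polyhedra: if $P$ contains any integer point, it contains one whose entries have bit-length polynomial in the encoding size of $A$ and $\bb$. That point itself is the polynomial-size certificate, and checking $A\x\leq \bb$ for an integer $\x$ is trivially polynomial.

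For NP-hardness I would reduce from a standard NP-hard problem such as Subset Sum: given positive integers $a_1,\ldots,a_n$ and $t$, decide whether some $\x\in\{0,1\}^n$ satisfies $\sum_i a_i x_i = t$. The box constraints $0\leq x_i\leq 1$ already fit the structural restriction, since $-x_i\leq 0$ has no positive entry and $x_i\leq 1$ has exactly one. The difficulty lies in encoding the single equality $\sum_i a_i x_i = t$, whose naive rendering has $n$ positive entries. My approach is to introduce a cascade of auxiliary integer variables $y_0,y_1,\ldots,y_n$ with $y_0=0$, $y_n=t$, and $y_i=y_{i-1}+a_i x_i$; each step of the recurrence involves only three variables, so by re-signing (e.g.\ working with $z_i=-y_{i-1}$ on alternating steps) and by splitting some variables into their positive and negative parts, the two inequalities encoding one step can be arranged so every row contains at most one positive entry. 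The delicate bookkeeping required to realize this re-signing at every step without introducing a row that violates the restriction is the principal technical obstacle; a full proof would devote most of its effort to this verification.

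Once the construction is in place, correctness is immediate: a satisfying $\x$ for the Subset Sum instance extends uniquely to an integer solution of the constructed system by setting $y_i=\sum_{j\leq i}a_j x_j$, while any integer point in $P$ projects via its $\x$-coordinates to a valid subset summing to $t$. Since the construction runs in polynomial time, this yields NP-hardness, and together with NP-membership completes the proof of NP-completeness.
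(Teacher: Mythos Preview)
The paper does not prove this theorem at all; it simply quotes it from Lagarias~\cite{lagarias} and uses it as a black box. So there is no ``paper's proof'' to compare against, and your NP-membership paragraph is the standard, correct argument.

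Your NP-hardness sketch, however, has a genuine gap that is more than bookkeeping. The cascade $y_i=y_{i-1}+a_i x_i$ is a three-term linear equality. Writing any three-term equality $\alpha u+\beta v+\gamma w=0$ (with $\alpha,\beta,\gamma\neq 0$) as a pair of opposite inequalities forces one of the two rows to carry at least two positive entries, since among three nonzero signed coefficients at least two share a sign. Your proposed fix---replace some variables by their negations or split them into positive and negative parts---just relocates the obstruction: to tie a variable $v$ to its negation $v'$ you need $v+v'\le 0$, which already has two positive entries, and the same happens if you try to link $y_{i-1}$ to a re-signed copy on alternating steps. So the ``delicate bookkeeping'' you defer is not a routine verification; as outlined, the construction cannot be completed, and the reduction from Subset~Sum in this direct form does not go through.

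Lagarias's original argument takes a quite different route: he proves NP-hardness of \emph{good simultaneous Diophantine approximation}, whose feasibility region is naturally described by constraints of the form $q\alpha_i-p_i\le\varepsilon$, $-q\alpha_i+p_i\le\varepsilon$, $1\le q\le N$, each of which has at most one positive coefficient. The hardness there comes from the number-theoretic encoding, not from trying to force a sum constraint into this row pattern. If you want a self-contained proof, you should either follow that Diophantine route or find a different NP-hard source problem whose defining inequalities already have the one-positive-entry structure.
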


\subsection {Proof of Co-NP-hard in Lexicographic Ordering}
\begin{corollary} \label{nphard} Given lattice $(L,\leq_l)$ and an order preserving mapping $f$ as a polynomial function,
determining that $f$ has a unique fixed point in $L$ is a Co-NP hard
problem.
\end{corollary}

We assume $n\geq 2$. Similarly, let
 $\x^{\max}=(x^{\max}_1,x^{\max}_2,\ldots,x_n^{\max})^{\top}$ with
$x^{\max}_j=\max_{\x\in P}x_j$, $j=1,2,\ldots,n$, and
$\x^{\min}=(x^{\min}_1,x^{\min}_2,\ldots,x^{\min}_n)^{\top}$ with
$\x^{\min}_j=\min_{\x\in P}x_j$, $j=1,2,\ldots,n$. Let $D(P)=\{\x\in Z^n\;|\;x^l\leq_l
\x\leq_l \x^u\}$, where $\x^u=\lfloor \x^{\max}\rfloor$ and $\x^l=\lfloor
\x^{\min}\rfloor$.

For $\y\in R^n$ and $k\in N\cup\{0\}$,
 let
\[P(\y,k)=\left\{\begin{array}{ll}
P & \mbox{if $k=0$,}\\
 \{\x\in P\;|\;x_i=y_i,\;i=1,2,\ldots,k\} & \mbox{otherwise.}
 \end{array}\right.\]
\begin{definition}\label{imd} For $\y\in D(P)$,
$h(\y)=(h_1(\y),h_2(\y),\ldots,h_n(\y))^{\top}\in D(P)$ is given as
follows:
\begin{description}
\item[Step 1:]
 If $y_1=x_1^l$, let $h(\y)=x^l$. If $\y\in P$, let $h(\y)=\y$. Otherwise, let $k=2$ and go to {\bf Step
 2}.
\item[Step 2:] Solve the linear program
\[\begin{array}{rl}
\min & x_k-v_k\\
\mbox{subject to} & \x\in P(\y,k-1)\mbox{ and }\bv\in P(\y,k-1),
\end{array}\]
to obtain its optimal solution $(\x^*,\bv^*)$. Let
$d_k^{\min}(\y)=x_k^*\mbox{ and } d^{\max}_k(\y)=v_k^*$. If
$y_k\geq\lceil d_k^{\min}(\y)\rceil$, go to {\bf Step 3}. Otherwise,
go to {\bf Step 4}.
\item[Step 3:] If $\lfloor d^{\max}_k(\y)\rfloor<\lceil
d_k^{\min}(\y)\rceil$, go to {\bf Step 4}. Otherwise, go to {\bf Step
5}.
 \item[Step 4:]Let $p(\y)=k$. If
$y_{k-1}\leq x^l_{k-1}+1$, let \[h_i(y)=\left\{\begin{array}{ll}
y_i & \mbox{if $1\leq i\leq k-2$,}\\
 x^l_i & \mbox{if $k-1\leq i\leq n$,}
\end{array}\right.\]
$i=1,2,\ldots,n$. Otherwise, let
\[h_i(\y)=\left\{\begin{array}{ll}
y_i & \mbox{if $1\leq i\leq k-2$,}\\
y_{k-1}-1 & \mbox{if $i=k-1$,}\\
 x_i^u & \mbox{if $k\leq i\leq n$,}
 \end{array}\right.\]
$i=1,2,\ldots,n$.
\item[Step 5:] If $y_k>\lfloor d^{\max}_k(\y)\rfloor$, let $p(\y)=k$ and
\[h_i(\y)=\left\{\begin{array}{ll}
y_i & \mbox{if $1\leq i\leq k-1$,}\\
 \lfloor d^{\max}_k(\y)\rfloor & \mbox{if $i=k$,}\\
x^u_i & \mbox{if $k+1\leq i\leq n$,}
\end{array}\right.\]
$i=1,2,\ldots,n$. Otherwise, let $k=k+1$ and go to {\bf Step 2}.
\end{description}
\end{definition}
\begin{lemma}
\label{lessy}
 $\x^l\leq h(\y)\leq_l \y$ and $h(\y)\neq \y$ for all $\y\in D(P)$ with
 $\y\neq \x^l$ and $\y\notin P$.
\end{lemma}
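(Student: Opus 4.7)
The plan is to walk through the terminating branches of Definition \ref{imd} and verify the two inequalities for each. A preliminary invariant I would establish first is that whenever the loop reaches Step 2 with index $k$, we have $y_i\ge x^l_i$ for every $i\le k-1$. For $i=1$ this follows from $\y\ge_l\x^l$ together with $\y\ne \x^l$, because $y_1<x^l_1$ would force $\y<_l\x^l$. For $2\le i\le k-1$, the continuation criterion of Step 5 at coordinate $i$ gives $y_i\ge \lceil d^{\min}_i(\y)\rceil$, and since $d^{\min}_i(\y)\ge x^{\min}_i$ this yields $y_i\ge \lceil x^{\min}_i\rceil\ge \lfloor x^{\min}_i\rfloor = x^l_i$.

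With the invariant in hand I would verify the two conclusions branch-by-branch. In the $y_1=x^l_1$ branch of Step 1, $h(\y)=\x^l$, and $\x^l<_l \y$ holds because $\y\ne \x^l$ while $\y\ge_l \x^l$. In the Step 5 assignment branch (where $y_k>\lfloor d^{\max}_k(\y)\rfloor$), $h$ agrees with $\y$ on coordinates $1,\ldots,k-1$, sets $h_k=\lfloor d^{\max}_k(\y)\rfloor$, and sets $h_j=x^u_j\ge x^l_j$ for $j>k$; the bound $h_k\ge x^l_k$ follows from $d^{\max}_k(\y)\ge d^{\min}_k(\y)\ge x^{\min}_k$, and $h_k<y_k$ gives $h(\y)<_l \y$. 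In the Step 4 branch with $y_{k-1}>x^l_{k-1}+1$, $h_{k-1}=y_{k-1}-1$ lies strictly between $x^l_{k-1}$ and $y_{k-1}$, the earlier coordinates of $h$ match $\y$, and the later ones are $x^u_j\ge x^l_j$, so both conclusions follow.

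The delicate case is Step 4 with $y_{k-1}\le x^l_{k-1}+1$, where $h=(y_1,\ldots,y_{k-2},x^l_{k-1},x^l_k,\ldots,x^l_n)$. The componentwise lower bound $h\ge \x^l$ is immediate from the invariant. If $y_{k-1}=x^l_{k-1}+1$ the lexicographic comparison is resolved in favour of $h$ at coordinate $k-1$. The main obstacle is the residual sub-case $y_{k-1}=x^l_{k-1}$, in which the first $k-1$ coordinates of $h$ and $\y$ agree, so the strict inequality has to be located at some $j\ge k$. I plan to dispose of this by splitting on how Step 4 was entered at coordinate $k$. If it was reached from Step 3 then $\lfloor d^{\max}_k(\y)\rfloor<\lceil d^{\min}_k(\y)\rceil\le y_k$, so $y_k>\lfloor d^{\max}_k(\y)\rfloor\ge \lfloor x^{\min}_k\rfloor=x^l_k$, giving the strict inequality at $j=k$. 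If it was reached from Step 2 with $y_k<\lceil d^{\min}_k(\y)\rceil$, then combined with $y_{k-1}=x^l_{k-1}$ — which, by the invariant analysis, forces $x^{\min}_{k-1}$ to be an integer equal to $x^l_{k-1}$ — I expect to use $\y\notin P$ (so $\y$ does not lie in $P(\y,k-1)\cap\{x_k=y_k\}$) to exhibit a coordinate $j\ge k$ at which $y_j>x^l_j$. Pinning down this final sub-case is where the technical heart of the proof lies.
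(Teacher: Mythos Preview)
Your outline is exactly the paper's: dispose of the $y_1=x_1^l$ branch, record an invariant on the first $k-1$ coordinates, and then walk through the terminating branches of Definition~\ref{imd}. The substantive difference is the strength of the invariant. The paper asserts it in \emph{strict} form,
\[
x_i^l \;<\; \lceil d_i^{\min}(\y)\rceil \;\le\; y_i \;\le\; \lfloor d_i^{\max}(\y)\rfloor,\qquad i=1,\dots,k-1,
\]
so that in particular $y_{k-1}>x_{k-1}^l$. With this in hand every Step~4 branch is a one-liner: in the sub-branch $y_{k-1}\le x_{k-1}^l+1$ one has $h_{k-1}(\y)=x_{k-1}^l<y_{k-1}$, and the lex drop is already secured at coordinate $k-1$. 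The ``residual sub-case $y_{k-1}=x_{k-1}^l$'' that you isolate is simply empty under the paper's invariant.

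Because you prove only $y_i\ge x_i^l$, you are forced into that sub-case, and the fallback you sketch does not close it. When Step~4 is entered from Step~2 with $y_k<\lceil d_k^{\min}(\y)\rceil$, the procedure has placed no lower bound on $y_k,\dots,y_n$, and the hypothesis $\y\notin P$ by itself does not produce a coordinate $j\ge k$ with $y_j>x_j^l$; so the plan ``use $\y\notin P$ to exhibit such a $j$'' is not a proof. This is a genuine gap at precisely the point you flag as the technical heart. The right move is to go after the strict inequality directly rather than trying to rescue the boundary case: for $i=1$ it is immediate from $y_1\ne x_1^l$, and for $i\ge 2$ it reduces to $\lceil d_i^{\min}(\y)\rceil>x_i^l$, which follows once one adopts the WLOG normalisation $\x^l<\x^{\min}$ (stated explicitly in the componentwise reduction and tacitly relied on here).
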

{\bf Proof.} Clearly, the lemma holds for all $\y\in D(P)$ with
$y_1=x_1^l$ and $\y\neq \x^l$. Let $\y$ be any given point in $D(P)$
with $y_1\neq x_1^l$ and $\y\notin P$ and $k=p(\y)$. From the
definition of $h(\y)$, we obtain that $k$ is well defined, $k\geq 2$,
and $x_i^l<\lceil d_k^{\min}(\y)\rceil\leq y_i\leq\lfloor
d^{\max}_k(\y)\rfloor$, $i=1,2,\ldots,k-1$. Furthermore, one of the
following five cases must occur.
\begin{description}
\item[Case 1:] $y_k\geq\lceil
d_k^{\min}(\y)\rceil$, $\lfloor d^{\max}_k(\y)\rfloor<\lceil
d_k^{\min}(\y)\rceil$ and  $y_{k-1}\leq x^l_{k-1}+1$. From {\bf Step
4}, we find that
\[h_i(\y)=\left\{\begin{array}{ll}
y_i & \mbox{if $1\leq i\leq k-2$,}\\
 x^l_i & \mbox{if $k-1\leq i\leq n$,}
\end{array}\right.\]
$i=1,2,\ldots,n$. Thus, it follows from $y_{k-1}>x_{k-1}^l$ that
$\x^l\leq h(\y)\leq_l\y$ and $h(\y)\neq \y$.
\item[Case 2:]  $y_k\geq\lceil
d_k^{\min}(\y)\rceil$, $\lfloor d^{\max}_k(\y)\rfloor<\lceil
d_k^{\min}(\y)\rceil$ and  $y_{k-1}> x^l_{k-1}+1$. From {\bf Step 4},
we find that
\[h_i(\y)=\left\{\begin{array}{ll}
y_i & \mbox{if $1\leq i\leq k-2$,}\\
y_{k-1}-1 & \mbox{if $i=k-1$,}\\
 x_i^u & \mbox{if $k\leq i\leq n$,}
 \end{array}\right.\]
$i=1,2,\ldots,n$. Thus, it follows from $y_{k-1}-1<y_{k-1}$ that
$\x^l\leq h(\y)\leq_l\y$ and $h(\y)\neq \y$.
\item[Case 3:] $y_k\geq\lceil
d_k^{\min}(\y)\rceil$, $\lfloor d^{\max}_k(\y)\rfloor\geq\lceil
d_k^{\min}(\y)\rceil$ and $y_k>\lfloor d^{\max}_k(\y)\rfloor$. From
{\bf Step 5}, we find that
\[h_i(\y)=\left\{\begin{array}{ll}
y_i & \mbox{if $1\leq i\leq k-1$,}\\
 \lfloor d^{\max}_k(\y)\rfloor & \mbox{if $i=k$,}\\
x^u_i & \mbox{if $k+1\leq i\leq n$,}
\end{array}\right.\]
$i=1,2,\ldots,n$. Thus, it follows from $y_k>\lfloor
d^{\max}_k(\y)\rfloor$ that $\x^l\leq h(\y)\leq_l\y$ and $h(\y)\neq \y$.
\item[Case 4:]$y_k<\lceil
d_k^{\min}(\y)\rceil$ and  $y_{k-1}\leq x^l_{k-1}+1$. From {\bf Step
4}, we find that
\[h_i(\y)=\left\{\begin{array}{ll}
y_i & \mbox{if $1\leq i\leq k-2$,}\\
 x^l_i & \mbox{if $k-1\leq i\leq n$,}
\end{array}\right.\]
$i=1,2,\ldots,n$. Thus, it follows from $y_{k-1}>x_{k-1}^l$ that
$\x^l\leq h(\y)\leq_l\y$ and $h(\y)\neq \y$.
\item[Case 5:]Consider the case that $y_k<\lceil
d_k^{\min}(\y)\rceil$ and  $y_{k-1}> x^l_{k-1}+1$. From {\bf Step 4},
we find that \[h_i(\y)=\left\{\begin{array}{ll}
y_i & \mbox{if $1\leq i\leq k-2$,}\\
y_{k-1}-1 & \mbox{if $i=k-1$,}\\
 x_i^u & \mbox{if $k\leq i\leq n$,}
 \end{array}\right.\]
$i=1,2,\ldots,n$. Thus, it follows from $y_{k-1}-1<y_{k-1}$ that
$\x^l\leq h(\y)\leq_l\y$ and $h(\y)\neq \y$.
\end{description}
Therefore, it always holds that $\x^l\leq h(\y)\leq_l\y$ and $h(\y)\neq
\y$. The proof is completed.\hfill\fbox{}

 As a corollary of
Lemma~\ref{lessy}, we obtain that
\begin{corollary} For any given $\x^*\in D(P)$, $\x^*\in P$ if and only if
$h(\x^*)=\x^*$ and $\x^*\neq \x^l$.
\end{corollary}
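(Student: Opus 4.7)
The plan is to obtain this corollary as a direct consequence of Lemma~\ref{lessy} together with Step 1 of Definition~\ref{imd}. The claim splits into two implications, and each reduces to reading off a piece of the construction.

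For the backward direction, I would assume $\x^*\in D(P)$ with $h(\x^*)=\x^*$ and $\x^*\neq \x^l$, and argue by contrapositive. If $\x^*\notin P$, then Lemma~\ref{lessy} applies to $\y=\x^*$, giving both $h(\x^*)\leq_l \x^*$ and $h(\x^*)\neq \x^*$, which contradicts the assumed fixed-point equation. Hence $\x^*\in P$. This is the substantive half of the corollary, and the work has already been done inside Lemma~\ref{lessy} by the exhaustive case analysis over the five outcomes of Definition~\ref{imd}.

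For the forward direction, I would use the second clause of Step 1 in Definition~\ref{imd}: whenever $\y\in P$ the mapping is defined by $h(\y)=\y$, so $\x^*\in P$ immediately yields $h(\x^*)=\x^*$. The inequality $\x^*\neq \x^l$ is then forced because $\x^l=\lfloor\x^{\min}\rfloor$ lies componentwise strictly below $\x^{\min}$ in at least one coordinate whenever $\x^{\min}$ is not integer-valued (which is the generic situation for the polytopes produced by the reduction from Theorem~\ref{npcomplete}), placing $\x^l$ outside $P$; hence a point in $P$ cannot coincide with $\x^l$.

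The only subtle point, and the place where the argument has to be handled with care, is the degenerate situation in which $\x^{\min}$ happens to be integer-valued and $\x^l$ itself lies in $P$. I would deal with this either by a one-shot preprocessing query that tests whether $\x^l\in P$ (in which case integer feasibility is already settled and the reduction terminates), or by a small rational perturbation of the right-hand side $\bb$ that destroys the coincidence without changing the integer feasibility problem. Neither of these affects the complexity of the reduction, so the corollary yields the desired co-NP hardness once it is combined with the NP-completeness of integer feasibility from Theorem~\ref{npcomplete}.
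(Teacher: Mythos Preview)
Your proposal is correct and follows the same route as the paper, which presents the corollary with no proof at all, merely flagging it ``as a corollary of Lemma~\ref{lessy}.'' Your backward direction via contrapositive from Lemma~\ref{lessy} and your forward direction via the second clause of Step~1 in Definition~\ref{imd} are exactly the intended reading. You are in fact more careful than the paper about the boundary issue: in the parallel componentwise construction the paper explicitly adopts the normalisation $\x^l<_c\x^{\min}$ (redefining $x_i^l=x_i^{\min}-1$ whenever equality occurs), and the same convention is tacitly in force in the lexicographic section; your preprocessing/perturbation remark supplies precisely this missing hypothesis. One small sharpening: what you actually need is $x_1^l<x_1^{\min}$ (so that the first clause of Step~1 never fires on a point of $P$), not merely that some coordinate of $\x^{\min}$ is non-integer.
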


\begin{theorem}
\label{im} Under the lexicographic ordering, $h$ is an order preserving
mapping from $D(P)$ into itself.
\end{theorem}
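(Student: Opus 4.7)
The plan is to prove $y \le_l y' \Rightarrow h(y) \le_l h(y')$ by carefully comparing how the algorithm in Definition \ref{imd} executes on the two inputs. First I would dispose of boundary cases: if $y_1 = x_1^l$ then $h(y) = x^l$ which is trivially lex-below everything; if $y \in P$ then $h(y) = y$ and I need $y \le_l h(y')$, which for $y' \in P$ is immediate and for $y' \notin P$ follows from the same coordinate-by-coordinate tracking used in the main case. So I may assume both $y,y' \notin P$ and $y_1,y'_1 > x^l_1$.

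The key structural observation is that the quantities $d_k^{\min}(\y)$ and $d_k^{\max}(\y)$ depend only on the first $k-1$ coordinates of $\y$, because the LP in Step 2 fixes precisely those. Consequently, if $y_i = y'_i$ for all $i < j$, the algorithm traces the same path on $y$ and $y'$ for the loop indices $k = 2, 3, \ldots, j$, producing identical $d_k^{\min}, d_k^{\max}$ values. Therefore $p(y), p(y') \ge j$, and the branching decisions in Steps 3--5 agree for $k < j$. Let $j$ denote the first index where $y_j < y'_j$, and set $k = p(y)$, $k' = p(y')$; by the observation above $k, k' \ge j$.

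I would then split into three cases. Case (a) $j < k$ and $j < k'$: the first $j-1$ coordinates of $h(y)$ and $h(y')$ coincide with those of $y,y'$ (by the copying rules in Steps 4--5), so $h(y)$ and $h(y')$ first differ no earlier than position $j$, and at position $j$ both equal $y_j$ and $y'_j$ respectively if $j \le k-2$ (giving $h(y)_j < h(y')_j$); the boundary situations $j = k-1$ or $j = k'-1$ are handled by checking that the replacement value ($x^l_{j}$, $y_{j}-1$, or $\lfloor d^{\max}_j \rfloor$) is consistent with lex ordering, using $y_j < y'_j$. Case (b) $j = k$ or $j = k'$: here the first $k-1$ coordinates coincide, so $d_k^{\min}, d_k^{\max}$ agree for $y$ and $y'$; the only remaining question is whether the same branch (Step 4 vs Step 5) is chosen, which amounts to comparing $y_k$ and $y'_k$ against the common thresholds, and straightforward bookkeeping shows $h(y)_k \le h(y')_k$ in every sub-branch. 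Case (c) $j > \min(k,k')$: the first $k-1$ (resp.\ $k'-1$) coordinates of $y$ and $y'$ coincide, the LP branch is identical, and $h(y)$ and $h(y')$ agree on all coordinates up through position $\min(k,k')-1$ and take identical extremal values ($x^l_i$ or $x^u_i$) on the tail, so the comparison reduces to tail positions that do not depend on $y$.

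The main obstacle will be Case (b), the ``threshold'' case, where I must argue that the Step 3 dichotomy ($\lfloor d^{\max}_k \rfloor \lessgtr \lceil d^{\min}_k \rceil$) is input-independent (which it is, since it depends only on the shared LP data), and that in the remaining three branches (Cases 1/2, Cases 4/5, Case 3 in the proof of Lemma \ref{lessy}) the mapping $y_k \mapsto h_k(\y)$ is monotone in $y_k$ once the higher-index coordinates are set to the common extremes $x^u$ or $x^l$. The hypothesis that every row of $A$ has at most one positive entry will likely enter precisely here, to ensure the LP values $d_k^{\min}, d_k^{\max}$ behave monotonically so that the branch selection cannot ``flip'' in a way that violates lex order between $h(y)$ and $h(y')$.
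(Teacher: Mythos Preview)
Your plan is essentially the paper's own proof: both hinge on the observation that $d_k^{\min}(\y),d_k^{\max}(\y)$ depend only on $y_1,\ldots,y_{k-1}$, so the algorithm runs identically on $\y$ and $\y'$ up to the first differing index $q$, after which one does a case split on how $p(\y),p(\y')$ sit relative to $q$ (the paper organizes by $k_2=p(\y^2)$ into the four cases $k_2\le q-1$, $k_2=q$, $k_2=q+1$, $k_2>q+1$, and treats $\y^1\in P$ inside those cases rather than as a separate boundary, but the content matches yours).

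Two small corrections. First, the sentence ``Therefore $p(y),p(y')\ge j$'' is false and in fact contradicts your own Case~(c): if the loop terminates at some $k<j$ then, since the branch taken at step $k$ depends only on $y_1,\ldots,y_k=y'_1,\ldots,y'_k$, both inputs terminate identically and $h(y)=h(y')$---this is precisely the paper's Case~1. Second, the hypothesis that each row of $A$ has at most one positive entry is \emph{not} used anywhere in the proof of Theorem~\ref{im}; the argument is pure bookkeeping on the algorithm's branching, and the threshold monotonicity you anticipate needing in Case~(b) is automatic because the thresholds $d_k^{\min},d_k^{\max}$ at $k\le j$ are shared between $y$ and $y'$. (That hypothesis on $A$ is used only in the componentwise construction, via Lemma~\ref{maxclosed}.)
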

{\bf Proof.} Let $\y^1$ and $\y^2$ be any given two points in $D(P)$
with $\y^1\leq_l\y^2$ and $\y^1\neq \y^2$. Let $q$ be the index in $N$
satisfying that $y^1_i=y^2_i$, $i=1,2,\ldots,q-1$, and
$y^1_q<y^2_q$. From the definition of $h$, we obtain that
$h(\y^1)=\x^l$  if
 $y^1_1=x_1^l$ and that $h(\y^2)=\y^2$ if $\y^2\in P$. Thus, when $y^1_1=x_1^l$ or $\y^2\in P$, it follows from Lemma~\ref{lessy} that
$h(\y^1)\leq_lh(\y^2)$.

Suppose that $y^1_1\neq x_1^l$ and $\y^2\notin P$. Let $k_1=p(\y^1)$
and $k_2=p(\y^2)$. From the definition of $h(\y^2)$, we obtain that
$k_2$ is well defined and $k_2\geq 2$.
\begin{description}
\item[Case 1:] $2\leq k_2\leq q-1$. From $y_i^1=y^2_i$,
$i=1,2,\ldots,q-1$, we derive that $k_1=k_2$. Thus, $h(\y^1)=h(\y^2)$.
Therefore, $h(\y^1)\leq_lh(\y^2)$.
\item[Case 2:] $2\leq k_2=q$. From the definition of $k_2=p(\y^2)$, we know that
\[x_i^l<\lceil
d_{i}^{\min}(\y^2)\rceil\leq y^2_i\leq\lfloor
d^{\max}_{i}(\y^2)\rfloor,\;i=1,2,\ldots,q-1.\] Since $y^1_i=y^2_i$,
$i=1,2,\ldots,q-1$, hence, $\lceil d_{i}^{\min}(\y^1)\rceil=\lceil
d_{i}^{\min}(\y^2)\rceil$ and $\lfloor
d^{\max}_{i}(\y^1)\rfloor=\lfloor d^{\max}_{i}(\y^2)\rfloor$,
$i=1,2,\ldots,q$, and $x_i^l<\lceil d_{i}^{\min}(\y^1)\rceil\leq
y^1_i\leq\lfloor d^{\max}_{i}(\y^1)\rfloor$, $i=1,2,\ldots,q-1$.
\begin{enumerate}
\item Suppose that $y^2_{q}\geq\lceil
d_{q}^{\min}(\y^2)\rceil$, $\lfloor d^{\max}_{q}(y^2)\rfloor<\lceil
d_{q}^{\min}(\y^2)\rceil$ and $y^2_{q-1}\leq x^l_{q-1}+1$. From {\bf
Step 4}, we find that
\[h_i(\y^2)=\left\{\begin{array}{ll}
y^2_i & \mbox{if $1\leq i\leq q-2$,}\\
 x^l_i & \mbox{if $q-1\leq i\leq n$,}
\end{array}\right.\]
$i=1,2,\ldots,n$. Since $\lfloor d^{\max}_{q}(\y^2)\rfloor<\lceil
d_{q}^{\min}(\y^2)\rceil$, $\lceil d_{q}^{\min}(\y^1)\rceil=\lceil
d_{q}^{\min}(\y^2)\rceil$ and $\lfloor
d^{\max}_{q}(\y^1)\rfloor=\lfloor d^{\max}_{q}(\y^2)\rfloor$, we
derive that $k_1=k_2=q$. Thus, it follows from
$y^1_{q-1}=y^2_{q-1}\leq x^l_{q-1}+1$ and {\bf Step 4} that
\[h_i(\y^1)=\left\{\begin{array}{ll}
y^1_i & \mbox{if $1\leq i\leq q-2$,}\\
 x^l_i & \mbox{if $q-1\leq i\leq n$,}
\end{array}\right.\]
$i=1,2,\ldots,n$. Therefore,
 $h(\y^1)=h(\y^2)$, and consequently $h(\y^1)\leq_l h(\y^2)$.

 \item Suppose that
$y^2_{q}\geq\lceil d_{q}^{\min}(\y^2)\rceil$, $\lfloor
d^{\max}_{q}(\y^2)\rfloor<\lceil d_{q}^{\min}(\y^2)\rceil$ and
$y^2_{q-1}> x^l_{q-1}+1$. From {\bf Step 4}, we find that
\[h_i(\y^2)=\left\{\begin{array}{ll}
y^2_i & \mbox{if $1\leq i\leq q-2$,}\\
y^2_{q-1}-1 & \mbox{if $i=q-1$,}\\
 x_i^u & \mbox{if $q\leq i\leq n$,}
 \end{array}\right.\]
$i=1,2,\ldots,n$. Since $\lfloor d^{\max}_{q}(\y^2)\rfloor<\lceil
d_{q}^{\min}(\y^2)\rceil$, $\lceil d_{q}^{\min}(\y^1)\rceil=\lceil
d_{q}^{\min}(\y^2)\rceil$ and $\lfloor
d^{\max}_{q}(\y^1)\rfloor=\lfloor d^{\max}_{q}(\y^2)\rfloor$, we
derive that $k_1=k_2=q$. Thus, it follows from
$y^1_{q-1}=y^2_{q-1}>x^l_{q-1}+1$ and {\bf Step 4} that
\[h_i(\y^1)=\left\{\begin{array}{ll}
y^1_i & \mbox{if $1\leq i\leq q-2$,}\\
y^1_{q-1}-1 & \mbox{if $i=q-1$,}\\
 x_i^u & \mbox{if $q\leq i\leq n$,}
 \end{array}\right.\]
$i=1,2,\ldots,n$. Therefore,
 $h(\y^1)=h(\y^2)$, and consequently $h(\y^1)\leq_l h(\y^2)$.

\item Suppose that $y^2_{q}\geq\lceil
d_{q}^{\min}(\y^2)\rceil$, $\lfloor
d^{\max}_{q}(\y^2)\rfloor\geq\lceil d_{q}^{\min}(\y^2)\rceil$ and
$y^2_{q}>\lfloor d^{\max}_{q}(\y^2)\rfloor$. From {\bf Step 5}, we
find that
\[h_i(\y^2)=\left\{\begin{array}{ll}
y^2_i & \mbox{if $1\leq i\leq q-1$,}\\
 \lfloor d^{\max}_{q}(\y^2)\rfloor & \mbox{if $i=q$,}\\
x^u_i & \mbox{if $q+1\leq i\leq n$,}
\end{array}\right.\]
$i=1,2,\ldots,n$.
\begin{itemize}
\item
Consider that $\y^1\in P$. Then, $h(\y^1)=\y^1$ and $\lceil
d_{i}^{\min}(\y^1)\rceil\leq y^1_{i}\leq \lfloor
d^{\max}_{i}(\y^1)\rfloor$, $i=1,2,\ldots,n$.
 Thus, from $\lfloor d^{\max}_{q}(\y^1)\rfloor=\lfloor
 d^{\max}_{q}(\y^2)\rfloor$, we obtain that $h_{q}(\y^1)=y^1_q\leq \lfloor
d^{\max}_{q}(\y^2)\rfloor=h_{q}(\y^2)$. Therefore,
 $h(\y^1)\leq_lh(\y^2)$ follows from $h_i(\y^1)=h_i(\y^2)$, $i=1,2,\ldots,q-1$, and $h_i(\y^1)\leq h_i(\y^2)$, $i=q,q+1,\ldots,n$.

\item
Consider that $\y^1\notin P$. From $y^1_i=y^2_i$, $i=1,2,\ldots,q-1$,
and $k_2=q$, we derive that $k_1\geq q$.
\begin{enumerate}
\item Assume that $k_1=q$. Since $\lfloor
d^{\max}_{q}(\y^1)\rfloor\geq\lceil d_{q}^{\min}(\y^1)\rceil$, hence,
either $y^1_q>\lfloor d^{\max}_{q}(\y^1)\rfloor$ or $y^1_q<\lceil
d_{q}^{\min}(\y^1)\rceil$. Thus, from the definition of $h(\y^1)$, we
obtain that, when $y^1_q>\lfloor d^{\max}_{q}(\y^1)\rfloor$,
\[h_i(\y^1)=\left\{\begin{array}{ll}
y^1_i & \mbox{if $1\leq i\leq q-1$,}\\
 \lfloor d^{\max}_{q}(\y^1)\rfloor & \mbox{if $i=q$,}\\
x^u_i & \mbox{if $q+1\leq i\leq n$,}
\end{array}\right.\]
$i=1,2,\ldots,n$; and when $y^1_q<\lceil d_{q}^{\min}(\y^1)\rceil$,
if $y^1_{q-1}\leq x_{q-1}^l+1$,
\[h_{i}(\y^1)=\left\{\begin{array}{ll}
y^1_i & \mbox{if $1\leq i\leq q-2$,}\\
x^l_i & \mbox{otheriwse,}
\end{array}\right.\]
$i=1,2,\ldots,n$, and if $y^1_{q-1}> x_{q-1}^l+1$,
\[h_i(\y^1)=\left\{\begin{array}{ll}
y^1_i & \mbox{if $1\leq i\leq q-2$,}\\
y^1_{q-1}-1 & \mbox{if $i=q-1$,}\\
 x_i^u & \mbox{if $q\leq i\leq n$,}
 \end{array}\right.\]
$i=1,2,\ldots,n$. Therefore, when $y^1_q>\lfloor
d^{\max}_{q}(\y^1)\rfloor$, $h(\y^1)\leq_lh(\y^2)$ follows from
$h(\y^1)=h(\y^2)$; and when $y^1_q<\lceil d_{q}^{\min}(\y^1)\rceil$, if
$y^1_{q-1}\leq x_{q-1}^l+1$, then $h(\y^1)\leq_lh(\y^2)$ follows from
$h_i(\y^1)=h_i(\y^2)$, $i=1,2,\ldots,q-2$, and
$h_{q-1}(\y^1)=x_{q-1}^l<y^2_{q-1}=h_{q-1}(\y^2)$, and if $y^1_{q-1}>
x_{q-1}^l+1$, then $h(\y^1)\leq_lh(\y^2)$ follows from
$h_i(\y^1)=h_i(\y^2)$, $i=1,2,\ldots,q-2$, and
$h_{q-1}(\y^1)=y^1_{q-1}-1<y^1_{q-1}=y^2_{q-1}=h_{q-1}(\y^2)$.

\item Assume that $k_1>q$. Then, $k_1-1\geq q$ and
$\lceil d_{i}^{\min}(\y^1)\rceil\leq y^1_{i}\leq \lfloor
d^{\max}_{i}(\y^1)\rfloor$, $i=1,2,\ldots,k_1-1$.
 Thus, from the definition of $h(\y^1)$, we obtain that
$h_i(\y^1)=y^1_i=y_i^2=h_i(\y^2)$, $i=1,2,\ldots,q-1$,  $h_q(\y^1)\leq
y^1_q\leq \lfloor d^{\max}_{q}(\y^1)\rfloor=\lfloor
d^{\max}_{q}(\y^2)\rfloor=h_q(\y^2)$, and $h_i(\y^1)\leq
x_i^u=h_i(\y^2)$, $i=q+1,q+2,\ldots,n$. Therefore,
$h(\y^1)\leq_lh(\y^2)$.
\end{enumerate}
\end{itemize}

\item Suppose that $y^2_{q}<\lceil
d_{q}^{\min}(\y^2)\rceil$ and  $y^2_{q-1}\leq x^l_{q-1}+1$. From {\bf
Step 4}, we find that
\[h_i(\y^2)=\left\{\begin{array}{ll}
y^2_i & \mbox{if $1\leq i\leq q-2$,}\\
 x^l_i & \mbox{if $q-1\leq i\leq n$,}
\end{array}\right.\]
$i=1,2,\ldots,n$. Since $y^1_i=y^2_i$, $i=1,2,\ldots,q-1$,
$y^1_{q}<y^2_q$, $\lceil d_{q}^{\min}(\y^1)\rceil=\lceil
d_{q}^{\min}(\y^2)\rceil$, and $k_2=q$, hence, we derive that
$k_1=k_2=q$ and $y^1_{q}<\lceil d_{q}^{\min}(\y^1)\rceil$. Thus, it
follows from $y^1_{q-1}=y^2_{q-1}$ and {\bf Step 4} that
\[h_i(\y^1)=\left\{\begin{array}{ll}
y^1_i & \mbox{if $1\leq i\leq q-2$,}\\
 x^l_i & \mbox{if $q-1\leq i\leq n$,}
\end{array}\right.\]
$i=1,2,\ldots,n$. Therefore, $h(\y^1)=h(\y^2)$ and
$h(\y^1)\leq_lh(\y^2)$.

\item Suppose that $y^2_{q}<\lceil
d_{q}^{\min}(\y^2)\rceil$ and  $y^2_{q-1}> x^l_{q-1}+1$. From {\bf
Step 4}, we find that
\[h_i(\y^2)=\left\{\begin{array}{ll}
y^2_i & \mbox{if $1\leq i\leq q-2$,}\\
y^2_{q-1}-1 & \mbox{if $i=q-1$,}\\
 x_i^u & \mbox{if $q\leq i\leq n$,}
 \end{array}\right.\]
$i=1,2,\ldots,n$. Since $y^1_i=y^2_i$, $i=1,2,\ldots,q-1$,
$y^1_{q}<y^2_q$, $\lceil d_{q}^{\min}(y^1)\rceil=\lceil
d_{q}^{\min}(y^2)\rceil$, and $k_2=q$, hence, we derive that
$k_1=k_2=q$ and $y^1_{q}<\lceil d_{q}^{\min}(y^1)\rceil$. Thus, it
follows from $y^1_{q-1}=y^2_{q-1}$ and {\bf Step 4} that
\[h_i(\y^1)=\left\{\begin{array}{ll}
y^1_i & \mbox{if $1\leq i\leq q-2$,}\\
y^1_{q-1}-1 & \mbox{if $i=q-1$,}\\
 x_i^u & \mbox{if $q\leq i\leq n$,}
 \end{array}\right.\]
$i=1,2,\ldots,n$. Therefore, $h(\y^1)=h(\y^2)$ and
$h(\y^1)\leq_lh(\y^2)$.
\end{enumerate}

\item[Case 3:] $2\leq k_2=q+1$. From the definition of $k_2$, we know that
$x_i^l<\lceil d_{i}^{\min}(\y^2)\rceil\leq y^2_i\leq\lfloor
d^{\max}_{i}(\y^2)\rfloor$, $i=1,2,\ldots,q$. Since $y^1_i=y^2_i$,
$i=1,2,\ldots,q-1$, hence, $\lceil d_{i}^{\min}(\y^1)\rceil=\lceil
d_{i}^{\min}(\y^2)\rceil$ and $\lfloor
d^{\max}_{i}(\y^1)\rfloor=\lfloor d^{\max}_{i}(\y^2)\rfloor$,
$i=1,2,\ldots,q$, $x_i^l<\lceil d_{i}^{\min}(\y^1)\rceil\leq
y^1_i\leq\lfloor d^{\max}_{i}(\y^1)\rfloor$, $i=1,2,\ldots,q-1$, and
$\lceil d_{q}^{\min}(\y^1)\rceil\leq\lfloor
d^{\max}_{q}(\y^1)\rfloor$.

\begin{enumerate}
\item Suppose that $y^2_{q+1}\geq\lceil
d_{q+1}^{\min}(\y^2)\rceil$, $\lfloor
d^{\max}_{q+1}(\y^2)\rfloor<\lceil d_{q+1}^{\min}(\y^2)\rceil$ and
$y^2_{q}\leq x^l_{q}+1$. From {\bf Step 4}, we find that
\[h_i(\y^2)=\left\{\begin{array}{ll}
y^2_i & \mbox{if $1\leq i\leq q-1$,}\\
 x^l_i & \mbox{if $q\leq i\leq n$,}
\end{array}\right.\]
$i=1,2,\ldots,n$. Since $y^1_q<y^2_q\leq x^l_{q}+1$,  we get that
$y^1_q=x^l_q$ and $k_1=q\geq 2$. Thus, it follows from
$y_q^1=x^l_q<\lceil d_{q}^{\min}(\y^1)\rceil$ and {\bf Step 4} that,
if $y^1_{q-1}\leq x_{q-1}^l+1$,
\[h_i(\y^1)=\left\{\begin{array}{ll}
y^1_i & \mbox{if $1\leq i\leq q-2$,}\\
 x^l_i & \mbox{if $q-1\leq i\leq n$,}
\end{array}\right.\]
$i=1,2,\ldots,n$, and if $y^1_{q-1}>x_{q-1}^l+1$,
\[h_i(\y^1)=\left\{\begin{array}{ll}
y^1_i & \mbox{if $1\leq i\leq q-2$,}\\
y^1_{q-1}-1 & \mbox{if $i=q-1$,}\\
 x_i^u & \mbox{if $q\leq i\leq n$,}
 \end{array}\right.\]
$i=1,2,\ldots,n$. Therefore, if $y^1_{q-1}\leq x_{q-1}^l+1$, then
 $h(\y^1)\leq_l h(\y^2)$ follows from $h_i(\y^1)=h_i(\y^2)$, $i=1,2,\ldots,q-2$, and $h_{q-1}(\y^1)=x_{q-1}^l<y^2_{q-1}=h_{q-1}(\y^2)$,
 and if $y^1_{q-1}>x_{q-1}^l+1$, then $h(\y^1)\leq_l h(\y^2)$ follows from $h_i(\y^1)=h_i(\y^2)$, $i=1,2,\ldots,q-2$, and
  $h_{q-1}(\y^1)=y_{q-1}^1-1<y^2_{q-1}=h_{q-1}(\y^2)$.

 \item Suppose that
$y^2_{q+1}\geq\lceil d_{q+1}^{\min}(\y^2)\rceil$, $\lfloor
d^{\max}_{q+1}(\y^2)\rfloor<\lceil d_{q+1}^{\min}(\y^2)\rceil$ and
$y^2_{q}> x^l_{q}+1$. From {\bf Step 4}, we find that
\[h_i(\y^2)=\left\{\begin{array}{ll}
y^2_i & \mbox{if $1\leq i\leq q-1$,}\\
y^2_{q}-1 & \mbox{if $i=q$,}\\
 x_i^u & \mbox{if $q+1\leq i\leq n$,}
 \end{array}\right.\]
$i=1,2,\ldots,n$.
\begin{itemize}
\item Assume that $\y^1\in P$. Thus, $h(\y^1)=\y^1$. Therefore,  $h(\y^1)\leq_l h(\y^2)$ follows
from $h_i(\y^1)=h_i(\y^2)$, $i=1,2,\ldots,q-1$, $h_q(\y^1)=y^1_{q}\leq
y^2_{q}-1=h_q(\y^2)$, and $h_i(\y^1)\leq x_i^u=h_i(\y^2)$,
$i=q+1,q+2,\ldots,n$.

\item Assume that $\y^1\notin P$. Then, we must have $k_1\geq
q$.

Consider that $k_1=q$. Since $\lceil
d_{q}^{\min}(\y^1)\rceil\leq\lfloor d^{\max}_{q}(\y^1)\rfloor$, hence,
 either $y^1_q>\lfloor d^{\max}_{q}(\y^1)\rfloor$ or $y^1_q<\lceil
d_{q}^{\min}(\y^1)\rceil$.
\begin{enumerate}
\item Suppose that $y^1_q>\lfloor d^{\max}_{q}(\y^1)\rfloor$. From {\bf Step 5}, we obtain that
\[h_i(\y^1)=\left\{\begin{array}{ll}
y^1_i & \mbox{if $1\leq i\leq q-1$,}\\
 \lfloor d^{\max}_{q}(\y^1)\rfloor & \mbox{if $i=q$,}\\
x^u_i & \mbox{if $q+1\leq i\leq n$,}
\end{array}\right.\]
$i=1,2,\ldots,n$. Thus, $h_q(\y^1)<y^1_q$. Therefore,
$h(\y^1)\leq_lh(\y^2)$ follows from $h_i(\y^1)=h_i(\y^2)$,
$i=1,2,\ldots,q-1$, and $h_q(\y^1)<y^1_q\leq y^2_q-1=h_q(\y^2)$.

\item Suppose that $y^1_q<\lceil
d_{q}^{\min}(\y^1)\rceil$. From {\bf Step 4}, we obtain that, if
$y^1_{q-1}\leq x_{q-1}^l+1$,
\[h_{i}(\y^1)=\left\{\begin{array}{ll}
y^1_i & \mbox{if $1\leq i\leq q-2$,}\\
x^l_i & \mbox{otheriwse,}
\end{array}\right.\]
$i=1,2,\ldots,n$, and if $y^1_{q-1}> x_{q-1}^l+1$,
\[h_i(\y^1)=\left\{\begin{array}{ll}
y^1_i & \mbox{if $1\leq i\leq q-2$,}\\
y^1_{q-1}-1 & \mbox{if $i=q-1$,}\\
 x_i^u & \mbox{if $q\leq i\leq n$,}
 \end{array}\right.\]
$i=1,2,\ldots,n$.  Therefore, if $y^1_{q-1}\leq x_{q-1}^l+1$, then
 $h(\y^1)\leq_l h(\y^2)$ follows from $h_i(\y^1)=h_i(\y^2)$,
$i=1,2,\ldots,q-2$, and
 $h_{q-1}(\y^1)=x^l_{q-1}<y^2_{q-1}=h_{q-1}(\y^2)$, and if $y^1_{q-1}>
 x_{q-1}^l+1$, then $h(\y^1)\leq_l h(\y^2)$ follows from $h_i(\y^1)=h_i(\y^2)$,
$i=1,2,\ldots,q-2$, and
 $h_{q-1}(\y^1)=y^1_{q-1}-1<y^2_{q-1}=h_{q-1}(\y^2)$.
\end{enumerate}

Consider that $k_1>q$. From the definition of $h(\y^1)$, we derive
that $h_i(\y^1)=y^1_i$, $i=1,2,\ldots,q-1$, and $h_q(\y^1)\leq y^1_q$.
Thus, $h(\y^1)\leq_l h(\y^2)$ follows immediately from
$h_i(\y^1)=h_i(\y^2)$, $i=1,2,\ldots,q-1$, $h_q(\y^1)\leq y^1_q\leq
y^2_q-1=h_q(\y^2)$, and $h_i(\y^1)\leq x_i^u=h_i(\y^2)$,
$i=q+1,q+2,\ldots,n$.

\end{itemize}

\item Suppose that $y^2_{q+1}\geq\lceil
d_{q+1}^{\min}(\y^2)\rceil$, $\lfloor
d^{\max}_{q+1}(\y^2)\rfloor\geq\lceil d_{q+1}^{\min}(\y^2)\rceil$ and
$y^2_{q+1}>\lfloor d^{\max}_{q+1}(\y^2)\rfloor$. From {\bf Step 5},
we find that
\[h_i(\y^2)=\left\{\begin{array}{ll}
y^2_i & \mbox{if $1\leq i\leq q$,}\\
 \lfloor d^{\max}_{q+1}(\y^2)\rfloor & \mbox{if $i=q+1$,}\\
x^u_i & \mbox{if $q+2\leq i\leq n$,}
\end{array}\right.\]
$i=1,2,\ldots,n$.
\begin{itemize}
\item
Assume that $\y^1\in P$. Then, $h(\y^1)=\y^1$.
 Thus, from $y^1_q<y^2_q$, we obtain that $h_{q}(\y^1)<y^2_q=h_{q}(\y^2)$. Therefore,
 $h(\y^1)\leq_lh(\y^2)$ follows immediately from $h_i(\y^1)=h_i(\y^2)$, $i=1,2,\ldots,q-1$, and $h_q(\y^1)<h_q(\y^2)$.

\item Assume that $\y^1\notin P$. Then, we must have $k_1\geq q$.

Consider that $k_1=q$. Since $\lceil
d_{q}^{\min}(\y^1)\rceil\leq\lfloor d^{\max}_{q}(\y^1)\rfloor$, hence,
 either $y^1_q>\lfloor d^{\max}_{q}(\y^1)\rfloor$ or $y^1_q<\lceil
d_{q}^{\min}(\y^1)\rceil$.
\begin{enumerate}
\item Suppose that $y^1_q>\lfloor d^{\max}_{q}(\y^1)\rfloor$. From {\bf Step 5}, we obtain that
\[h_i(\y^1)=\left\{\begin{array}{ll}
y^1_i & \mbox{if $1\leq i\leq q-1$,}\\
 \lfloor d^{\max}_{q}(\y^1)\rfloor & \mbox{if $i=q$,}\\
x^u_i & \mbox{if $q+1\leq i\leq n$,}
\end{array}\right.\]
$i=1,2,\ldots,n$. Thus, $h_q(\y^1)<y^1_q$. Therefore,
$h(\y^1)\leq_lh(\y^2)$ follows from $h_i(\y^1)=h_i(\y^2)$,
$i=1,2,\ldots,q-1$, and $h_q(\y^1)<y^1_q<y^2_q=h_q(\y^2)$.

\item Suppose that $y^1_q<\lceil
d_{q}^{\min}(\y^1)\rceil$. From {\bf Step 4}, we obtain that, if
$y^1_{q-1}\leq x_{q-1}^l+1$,
\[h_{i}(\y^1)=\left\{\begin{array}{ll}
y^1_i & \mbox{if $1\leq i\leq q-2$,}\\
x^l_i & \mbox{otheriwse,}
\end{array}\right.\]
$i=1,2,\ldots,n$, and if $y^1_{q-1}> x_{q-1}^l+1$,
\[h_i(\y^1)=\left\{\begin{array}{ll}
y^1_i & \mbox{if $1\leq i\leq q-2$,}\\
y^1_{q-1}-1 & \mbox{if $i=q-1$,}\\
 x_i^u & \mbox{if $q\leq i\leq n$,}
 \end{array}\right.\]
$i=1,2,\ldots,n$.  Therefore, if $y^1_{q-1}\leq x_{q-1}^l+1$, then
 $h(\y^1)\leq_l h(\y^2)$ follows from $h_i(\y^1)=h_i(\y^2)$,
 $i=1,2,\ldots,q-2$, and
 $h_{q-1}(\y^1)=x_{q-1}^l<y^1_{q-1}=y^2_{q-1}=h_{q-1}(\y^2)$, and if $y^1_{q-1}> x_{q-1}^l+1$, then $h(\y^1)\leq_l h(\y^2)$ follows
 from $h_i(\y^1)=h_i(\y^2)$,
 $i=1,2,\ldots,q-2$, and
 $h_{q-1}(\y^1)=y_{q-1}^1-1<y^1_{q-1}=y^2_{q-1}=h_{q-1}(\y^2)$.
\end{enumerate}
Consider that $k_1>q$. From the definition of $h(\y^1)$, we derive
that $h_i(\y^1)=y^1_i$, $i=1,2,\ldots,q-1$, and $h_q(\y^1)\leq y^1_q$.
Thus, $h(\y^1)\leq_l h(\y^2)$ follows immediately from
$h_i(\y^1)=h_i(\y^2)$, $i=1,2,\ldots,q-1$, and $h_q(\y^1)\leq
y^1_q<y^2_q=h_q(\y^2)$.
\end{itemize}

\item Suppose that $y^2_{q+1}<\lceil
d_{q+1}^{\min}(\y^2)\rceil$ and  $y^2_{q}\leq x^l_{q}+1$. From {\bf
Step 4}, we find that
\[h_i(y^2)=\left\{\begin{array}{ll}
y^2_i & \mbox{if $1\leq i\leq q-1$,}\\
 x^l_i & \mbox{if $q\leq i\leq n$,}
\end{array}\right.\]
$i=1,2,\ldots,n$. Since $y^1_q<y^2_q\leq x^l_{q}+1$,  we get that
$y^1_q=x^l_q$ and $k_1=q\geq 2$. Thus, we obtain from {\bf Step 4}
that, if $y^1_{q-1}\leq x_{q-1}^l+1$,
\[h_i(\y^1)=\left\{\begin{array}{ll}
y^1_i & \mbox{if $1\leq i\leq q-2$,}\\
 x^l_i & \mbox{if $q-1\leq i\leq n$,}
\end{array}\right.\]
$i=1,2,\ldots,n$, and if $y^1_{q-1}>x_{q-1}^l+1$,
\[h_i(\y^1)=\left\{\begin{array}{ll}
y^1_i & \mbox{if $1\leq i\leq q-2$,}\\
y^1_{q-1}-1 & \mbox{if $i=q-1$,}\\
 x_i^u & \mbox{if $q\leq i\leq n$,}
 \end{array}\right.\]
$i=1,2,\ldots,n$. Therefore, if $y^1_{q-1}\leq x_{q-1}^l+1$, then
 $h(\y^1)\leq_l h(\y^2)$ follows from $h_i(\y^1)=h_i(\y^2)$, $i=1,2,\ldots,q-2$, and
 $h_{q-1}(\y^1)=x_{q-1}^l<y^1_{q-1}=y^2_{q-1}=h_{q-1}(\y^2)$, and if
 $y^1_{q-1}>x_{q-1}^l+1$, then $h(\y^1)\leq_l h(\y^2)$ follows from $h_i(\y^1)=h_i(\y^2)$, $i=1,2,\ldots,q-2$, and
 $h_{q-1}(\y^1)=y_{q-1}^1-1<y^1_{q-1}=y^2_{q-1}=h_{q-1}(\y^2)$.

\item Suppose that $y^2_{q+1}<\lceil
d_{q+1}^{\min}(\y^2)\rceil$ and  $y^2_{q}> x^l_{k_2-1}+1$. From {\bf
Step 4}, we find that
\[h_i(\y^2)=\left\{\begin{array}{ll}
y^2_i & \mbox{if $1\leq i\leq q-1$,}\\
y^2_{q}-1 & \mbox{if $i=q$,}\\
 x_i^u & \mbox{if $q+1\leq i\leq n$,}
 \end{array}\right.\]
$i=1,2,\ldots,n$.

\begin{itemize}
\item Assume that $\y^1\in P$. Thus, $h(\y^1)=\y^1$. Therefore, $h(\y^1)\leq_l h(\y^2)$ follows
from $h_i(\y^1)=h_i(\y^2)$, $i=1,2,\ldots,q-1$, $h_q(\y^1)=y^1_{q}\leq
y^2_{q}-1=h_q(\y^2)$, and $h_i(\y^1)\leq x_i^u=h_i(\y^2)$,
$i=q+1,q+2,\ldots,n$.

\item Assume that $\y^1\notin P$. Then, we must have that $k_1\geq
q$.

Consider that $k_1=q$. Since $\lceil
d_{q}^{\min}(\y^1)\rceil\leq\lfloor d^{\max}_{q}(\y^1)\rfloor$, hence,
 either $y^1_q>\lfloor d^{\max}_{q}(\y^1)\rfloor$ or $y^1_q<\lceil
d_{q}^{\min}(\y^1)\rceil$.
\begin{enumerate}
\item Suppose that $y^1_q>\lfloor d^{\max}_{q}(\y^1)\rfloor$. From {\bf Step 5}, we obtain that
\[h_i(\y^1)=\left\{\begin{array}{ll}
y^1_i & \mbox{if $1\leq i\leq q-1$,}\\
 \lfloor d^{\max}_{q}(\y^1)\rfloor & \mbox{if $i=q$,}\\
x^u_i & \mbox{if $q+1\leq i\leq n$,}
\end{array}\right.\]
$i=1,2,\ldots,n$. Thus, $h_q(\y^1)<y^1_q$. Therefore,
$h(\y^1)\leq_lh(\y^2)$ follows from $h_i(\y^1)=h_i(\y^2)$,
$i=1,2,\ldots,q-1$, and $h_q(\y^1)<y^1_q\leq y^2_q-1=h_q(\y^2)$.

\item Suppose that $y^1_q<\lceil
d_{q}^{\min}(\y^1)\rceil$. From {\bf Step 4}, we obtain that, if
$y^1_{q-1}\leq x_{q-1}^l+1$,
\[h_{i}(\y^1)=\left\{\begin{array}{ll}
y^1_i & \mbox{if $1\leq i\leq q-2$,}\\
x^l_i & \mbox{otheriwse,}
\end{array}\right.\]
$i=1,2,\ldots,n$, and if $y^1_{q-1}> x_{q-1}^l+1$,
\[h_i(\y^1)=\left\{\begin{array}{ll}
y^1_i & \mbox{if $1\leq i\leq q-2$,}\\
y^1_{q-1}-1 & \mbox{if $i=q-1$,}\\
 x_i^u & \mbox{if $q\leq i\leq n$,}
 \end{array}\right.\]
$i=1,2,\ldots,n$.  Therefore, if $y^1_{q-1}\leq x_{q-1}^l+1$, then
 $h(\y^1)\leq_l h(\y^2)$ follows from $h_i(\y^1)=h_i(\y^2)$,
$i=1,2,\ldots,q-2$, and
 $h_{q-1}(\y^1)=x_{q-1}^l<y^1_{q-1}=y^2_{q-1}=h_{q-1}(\y^2)$, and if $y^1_{q-1}>
 x_{q-1}^l+1$, then
 $h(\y^1)\leq_l h(\y^2)$ follows from $h_i(\y^1)=h_i(\y^2)$,
$i=1,2,\ldots,q-2$, and
 $h_{q-1}(\y^1)=y_{q-1}^1-1<y^1_{q-1}=y^2_{q-1}=h_{q-1}(\y^2)$.
\end{enumerate}

Consider that $k_1>q$. From the definition of $h(\y^1)$, we derive
that $h_i(\y^1)=y^1_i$, $i=1,2,\ldots,q-1$, and $h_q(\y^1)\leq y^1_q$.
Thus, $h(\y^1)\leq_l h(\y^2)$ follows immediately from
$h_i(\y^1)=h_i(\y^2)$, $i=1,2,\ldots,q-1$, $h_q(\y^1)\leq y^1_q\leq
y^2_q-1=h_q(\y^2)$, and $h_i(\y^1)\leq x_i^u=h_i(\y^2)$,
$i=q+1,q+2,\ldots,n$.

\end{itemize}
\end{enumerate}

\item[Case 4:] $k_2>q+1$. From $k_2-1>q$, we obtain that
$h_i(\y^2)=y^2_i$, $i=1,2,\ldots,q$. Thus, $\y^1\leq_lh(\y^2)$ since
$y^1_i=y^2_i$, $i=1,2,\ldots,q-1$, and $y^1_q<y^2_q$. Therefore, it
follows immediately from Lemma~\ref{lessy} that
$h(\y^1)\leq_lh(\y^2)$.
\end{description} Since $2\leq k_2\leq
n$, hence, one of the above four cases must occur. The above results
show that, for every case, it always holds that
$h(\y^1)\leq_lh(\y^2)$. This completes the proof.\hfill\fbox{}

From Definition~\ref{imd}, one can see that, for each $\y\in D(P)$,
it takes at most $2n$ linear programs to compute $h(\y)$. Therefore,
$h(\y)$ is polynomial-time defined for any given $\y\in D(P)$.

As a corollary of Theorem~\ref{im} and Theorem~\ref{npcomplete}, we obtain that
\begin{corollary}
Given lattice $(L,\leq_l)$ and an order preserving mapping $f$ as a polynomial function,
determining that $f$ has a unique fixed point in $L$ is a Co-NP hard
problem.
\end{corollary}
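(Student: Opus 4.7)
The plan is to reduce the NP-complete problem of Theorem~\ref{npcomplete} to the problem of deciding whether $h$ has a \emph{second} fixed point in $(D(P),\leq_l)$, thereby showing that the uniqueness problem is co-NP hard. Given an input polytope $P=\{\x\in R^n\;|\;A\x\leq \bb\}$ of the type in Theorem~\ref{npcomplete}, I would produce as output the lattice $D(P)=\{\x\in Z^n\;|\;\x^l\leq_l \x\leq_l \x^u\}$ together with the mapping $h$ of Definition~\ref{imd}. Since $D(P)$ is specified by the componentwise bounds $\x^l,\x^u$ (computable by $2n$ linear programs) and each evaluation of $h(\y)$ again costs at most $2n$ linear programs, the entire instance together with $h$ is produced in polynomial time and $h$ qualifies as an order preserving mapping on $(D(P),\leq_l)$ by Theorem~\ref{im}.

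Next I would assemble the two key facts about $h$ already established. First, by the very first branch of Definition~\ref{imd} (the ``$y_1=x_1^l$'' case), $h(\x^l)=\x^l$, so $\x^l$ is always a fixed point; thus the ``at least one fixed point'' half of uniqueness is automatic. Second, by the corollary following Lemma~\ref{lessy}, for any $\x^*\in D(P)$ we have $\x^*\in P$ if and only if $h(\x^*)=\x^*$ and $\x^*\neq \x^l$. Together these give the crucial equivalence: $h$ has a fixed point in $D(P)$ distinct from $\x^l$ if and only if $P$ contains an integer point. Equivalently, $h$ has a \emph{unique} fixed point in $D(P)$ if and only if $P\cap Z^n=\emptyset$.

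Combining these, the reduction is complete: an oracle that decides uniqueness of the fixed point of $h$ immediately decides the complement of the integer feasibility problem of Theorem~\ref{npcomplete}. Since the latter is NP-complete, deciding uniqueness is co-NP hard.

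The substantive work — verifying that $h$ is order preserving across the various branches of Definition~\ref{imd} — has already been done in Theorem~\ref{im}, and the ``$h(\y)=\y$ iff $\y\in P$ or $\y=\x^l$'' characterization was recorded as a corollary to Lemma~\ref{lessy}. Consequently there is no new obstacle for this corollary: the only ingredients needed are (i) that the reduction is polynomial-time (a bookkeeping check on Definition~\ref{imd}), (ii) that $\x^l$ is always a fixed point, and (iii) the equivalence between non-trivial fixed points of $h$ and integer points of $P$. The mildly subtle point is simply making explicit that the ``trivial'' fixed point $\x^l$ is present in every instance, so that uniqueness corresponds precisely to the \emph{absence} of a second fixed point, i.e., to the \emph{absence} of an integer point in $P$; that is what turns the NP-hardness of integer feasibility into co-NP hardness of uniqueness.
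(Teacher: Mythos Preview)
Your proposal is correct and follows exactly the route the paper takes: the paper's own argument for this corollary is simply the one-line remark ``As a corollary of Theorem~\ref{im} and Theorem~\ref{npcomplete}, we obtain that\ldots,'' and you have done nothing more than spell out the assembly of those ingredients --- polynomial-time computability of $h$, Theorem~\ref{im} for order preservation, the observation $h(\x^l)=\x^l$, and the corollary to Lemma~\ref{lessy} identifying the remaining fixed points with integer points of $P$ --- in explicit detail. There is no divergence in method.
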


\subsection{Poof of Co-NP-hard in Componentwise Ordering}

Let $N=\{1,2,\ldots,n\}$ and $N_0=\{0,1,\ldots,n\}$.  For any real number
$\alpha$, let $\lfloor\alpha\rfloor$ denote the greatest integer
less than or equal to $\alpha$ and $\lceil\alpha\rceil$ the smallest
integer greater than or equal to $\alpha$. For any vector
$\x=(x_1,x_2,\ldots,x_n)^{\top}\in R^n$, let $\lfloor
\x\rfloor=(\lfloor x_1\rfloor, \lfloor x_2\rfloor,\ldots,\lfloor
x_n\rfloor)^{\top}$ and $\lceil \x\rceil=(\lceil x_1\rceil,\lceil
x_2\rceil,\ldots,\lceil x_n\rceil)^{\top}$. Given these notations,
we present a polynomial-time reduction of integer programming, which
is as follows.

For any $\x\in R^n$, let \[P(\x)=\{\y\in P\;|\;\y\leq_c \x\}.\]
Then, as a direct result of the property of the matrix $A$, one can easily
obtain that
\begin{lemma} \label{maxclosed}For any given $\x\in R^n$, if
$\x^{1}=(x_{1}^{1},x^{1}_{2},\ldots,x_{n}^{1})^{\top}\in P(\x)$ and
$\x^{2}=(x_{1}^{2},x^{2 }_{2},\ldots,x_{n}^{2})^{\top}\in P(\x)$, then
\[\bar{\x}=\max(\x^1,\x^2)=(\max\{x^{1}_{1},x^{2}_{1}\},\max\{x^{1}_{2},x^{2}_{2}\},
\ldots,\max\{x^{1}_{n},x^{2}_{n}\})^{\top} \in P(\x).\]
\end{lemma}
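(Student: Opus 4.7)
The plan is to verify the two conditions for $\bar{\x} \in P(\x)$ separately: (i) $\bar{\x} \leq_c \x$, and (ii) $A\bar{\x} \leq \bb$. The first is immediate, since $\bar{x}_j = \max\{x_j^1, x_j^2\} \leq x_j$ follows directly from $\x^1, \x^2 \in P(\x) \subseteq \{\y : \y \leq_c \x\}$. So the real content is (ii), and the whole argument will ride on the structural assumption that each row of $A$ has \emph{at most one} positive entry.

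To show $A\bar{\x} \leq \bb$, I would proceed row by row. Fix a row index $i$ and let $a_i^\top$ denote the $i$-th row of $A$. I split into two cases. If row $i$ has no positive entry, then $a_{ij} \leq 0$ for every $j$; since $\bar{x}_j \geq x_j^1$ and $a_{ij} \leq 0$, we get $a_{ij}\bar{x}_j \leq a_{ij} x_j^1$, summing over $j$ gives $a_i^\top \bar{\x} \leq a_i^\top \x^1 \leq b_i$. If row $i$ has exactly one positive entry, say $a_{ik} > 0$ with $a_{ij} \leq 0$ for $j \neq k$, then choose $t \in \{1,2\}$ so that $x_k^t = \bar{x}_k$ (such a $t$ exists by definition of the componentwise max). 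For this $t$, the positive term satisfies $a_{ik}\bar{x}_k = a_{ik} x_k^t$ with equality, while for every $j \neq k$ the inequality $\bar{x}_j \geq x_j^t$ combined with $a_{ij}\leq 0$ gives $a_{ij}\bar{x}_j \leq a_{ij}x_j^t$. Summing,
\[
a_i^\top \bar{\x} \leq a_i^\top \x^t \leq b_i.
\]
Since $i$ was arbitrary, $A\bar{\x} \leq \bb$, i.e., $\bar{\x} \in P$. Combined with $\bar{\x} \leq_c \x$, this gives $\bar{\x} \in P(\x)$.

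I do not expect a serious obstacle here: the lemma is essentially the standard observation that polyhedra defined by such "at-most-one-positive-entry-per-row" constraint systems are sublattices under coordinatewise $\max$, and the proof is a one-line case split per row once the right "witness index" $t$ is chosen. The only subtlety worth flagging is that choosing $t$ depends on the row $i$ (different rows may be certified by different witnesses among $\x^1, \x^2$), which is exactly why the structural hypothesis on $A$ cannot be weakened to allow two positive entries per row, since then the positive contributions from two different coordinates would generally need to be certified by the same witness.
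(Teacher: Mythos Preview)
Your proof is correct and is exactly the standard argument the paper has in mind: the paper does not give a proof at all, merely stating that the lemma follows ``as a direct result of the property of the matrix $A$'' (namely that each row has at most one positive entry), and your row-by-row case split with the witness index $t$ is precisely the natural way to unpack that remark.
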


Let $\e=(1,1,\ldots,1)^{\top}\in R^n$. For any given $\bv\in R^n$, if
$P(\bv)\neq\emptyset$, Lemma~\ref{maxclosed} implies that $\max_{\x\in
P(\bv)}\e^{\top}\x$ has a unique solution, which we denote by
$\x^\bv=(x_1^\bv,x_2^\bv,\ldots,x_n^\bv)^{\top}$.
\begin{lemma}\label{max} $\x\leq_c \x^\bv$ for all $\x\in P(\bv)$.
\end{lemma}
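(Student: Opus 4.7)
The plan is to prove this lemma directly from Lemma~\ref{maxclosed}, which established that $P(\bv)$ is closed under the componentwise maximum operation. The uniqueness of $\x^{\bv}$ claimed just before the lemma (which rests on the same closure property) suggests the right structural observation: the componentwise maximum is available and stays inside the feasible region.

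First, I would take an arbitrary $\x \in P(\bv)$ and form $\bar{\x} = \max(\x,\x^{\bv})$ componentwise. By Lemma~\ref{maxclosed}, applied to the two points $\x$ and $\x^{\bv}$ in $P(\bv)$, we have $\bar{\x} \in P(\bv)$. By construction, $\bar{\x} \geq_c \x^{\bv}$ componentwise, so $\e^{\top}\bar{\x} \geq \e^{\top}\x^{\bv}$, with the inequality being strict unless $\bar{\x}=\x^{\bv}$.

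Next, since $\x^{\bv}$ is the unique maximizer of $\e^{\top}\x$ over $P(\bv)$, and $\bar{\x}\in P(\bv)$ achieves at least this maximum value, we must have $\bar{\x}=\x^{\bv}$. By the definition of $\bar{\x}=\max(\x,\x^{\bv})$, this equality forces $x_i \leq x_i^{\bv}$ for every coordinate $i$, that is, $\x\leq_c \x^{\bv}$. Since $\x$ was arbitrary, the conclusion follows.

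There is essentially no obstacle here: the entire argument is a two-line consequence of the lattice-closure property in Lemma~\ref{maxclosed} combined with the uniqueness of the maximizer. The only subtle point worth double-checking is the uniqueness assertion itself, but that is already invoked in the sentence introducing $\x^{\bv}$ and likewise follows from Lemma~\ref{maxclosed} (if two distinct optimizers existed, their componentwise max would lie in $P(\bv)$ and strictly dominate both, giving a strictly larger objective value).
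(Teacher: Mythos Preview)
Your proof is correct and takes essentially the same approach as the paper: both form the componentwise maximum with $\x^{\bv}$, invoke Lemma~\ref{maxclosed} to keep it in $P(\bv)$, and use optimality of $\x^{\bv}$ to force equality. The only cosmetic difference is that the paper phrases it as a proof by contradiction (assuming some coordinate $x_k^0 > x_k^{\bv}$) whereas you argue directly, but the content is identical.
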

{\bf Proof.} Suppose that there is a point
$\x^0=(x_1^0,x_2^0,\ldots,x_n^0)^{\top}\in P(\bv)$ with $x^0_k>x^\bv_k$
for some $k\in N$. Then, Lemma~\ref{maxclosed} implies that
\[\x^{\bv0}=(\max\{x^0_1,x^\bv_1\},\max\{x^0_2,x^\bv_2\},\ldots,\max\{x^0_n,x^\bv_n\})^{\top}\in
P(\bv).\] Thus, $e^{\top}\x^{0\bv}>e^{\top}\x^\bv=\max_{\x\in
P(\bv)}\e^{\top}\x$. A contradiction arises. This completes the
proof.\hfill\fbox{}

Let $\x^{\max}=(x^{\max}_1,x^{\max}_2,\ldots,x_n^{\max})^{\top}$ be
the unique solution of $\max_{\x\in P}\e^{\top}\x$ and $\x^{\min}=(x^{\min}_1,x^{\min}_2,$\\$\ldots,x^{\min}_n)^{\top}$ with
$x^{\min}_j=\min_{\x\in P}x_j$, $j=1,2,\ldots,n$. Then, $\x^{\min}\leq_c
\x\leq_c \x^{\max}$ for all $\x\in P$.  Let
\[D(P)=\{\x\in {Z}^n\;|\;\x^l\leq_c \x\leq_c \x^u\},\]
where \(\x^u=\lfloor \x^{\max}\rfloor\) and \(\x^l=\lfloor
\x^{\min}\rfloor.\) Thus, $D(P)$ contains all integer points in $P$.
Without loss of generality, we assume that $\x^l<_c\x^{\min}$ (Let
$x_i^l=x_i^{\min}-1$ if $x_i^l=x_i^{\min}$ for some $i\in N$).
Obviously, the sizes of both $\x^l$ and $\x^u$ are bounded by
polynomials of the sizes of the matrix $A$ and the vector $\bb$ since $\x^l$ and $\x^u$ are
obtained from the solutions of linear programs with rational data.

For $\x\in R^n$, we define \[h(\x)=\lfloor d(\x)\rfloor\] with
\[d(\x)=\left\{\begin{array}{ll}
\x^{l} & \mbox{if $P(\x)=\emptyset$,}\\
\\
\mbox{argmax}_{\y\in P(\x)}\e^{\top}\y & \mbox{otherwise.}
\end{array}\right.\]
It follows  from Lemma~\ref{max} that $d(\x)$ is well defined.
\begin{example}
\label{example1} Consider $P=\{\x\in R^3\;|\; A\x\leq_c \bb\}$, where
\[A=\left(\begin{array}{ccc}
     2  &  -1 &    0\\
    -1   &  3  &   0\\
     0  &   0  &   2\\
     0  &  -1  &  -1
\end{array}\right)\]
and $\bb =(
     0,
   -10,
    10,
     0
     )^{\top}$. For $\y=(-3,
    -4,
    5)^{\top}$, $h(\y)=(-3, -5, 5)^{\top}$. An illustration
of $h$ can be found in Fig.\ref{cfig1}.

\begin{figure}[H]
\begin{center}
\hspace{0mm} \epsfig{figure=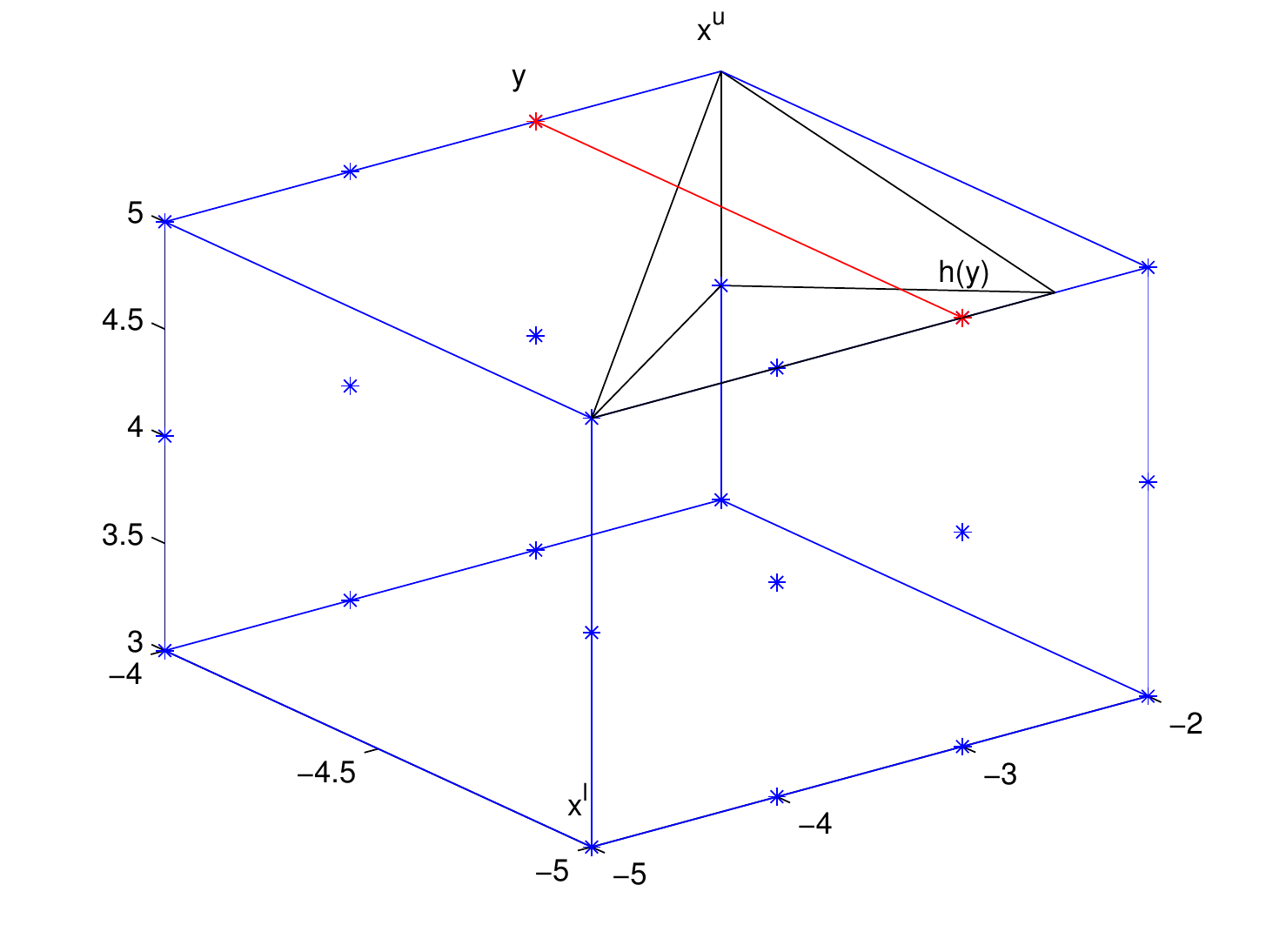,width=12cm,height=12cm}
\end{center}
\caption{\label{cfig1}An Illustration of $h$}
\end{figure}
\end{example}

\begin{lemma}
\label{ip} $h$ is an order preserving mapping from $R^n$ to $D(P)$.
Moreover, $h(\x^*)=\x^*\neq \x^{l}$ if and only if $\x^*$ is an integer
point in $P$.
\end{lemma}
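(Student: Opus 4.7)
The plan is to prove the three assertions in order: (i) $h(R^n)\subseteq D(P)$, (ii) $h$ is order preserving, (iii) the fixed-point characterization. The central tool throughout is Lemma~\ref{max}, which tells us that whenever $P(\x)\neq\emptyset$ the point $d(\x)$ is the componentwise \emph{maximum} of $P(\x)$, not merely an $\e^\top$-maximizer. This single monotonicity-of-the-max fact drives everything.

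For (i), I would split on whether $P(\x)=\emptyset$. If it is, then $h(\x)=\lfloor\x^l\rfloor=\x^l\in D(P)$ directly, since $\x^l$ is already an integer vector in $D(P)$. If $P(\x)\neq\emptyset$, then $d(\x)\in P$, so $\x^{\min}\leq_c d(\x)\leq_c\x^{\max}$; applying $\lfloor\cdot\rfloor$ componentwise (a monotone operation) gives $\x^l=\lfloor\x^{\min}\rfloor\leq_c\lfloor d(\x)\rfloor\leq_c\lfloor\x^{\max}\rfloor=\x^u$, which is exactly $h(\x)\in D(P)$. For (ii), let $\x^1\leq_c\x^2$. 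If $P(\x^1)=\emptyset$, then $h(\x^1)=\x^l\leq_c h(\x^2)$ by (i). Otherwise $P(\x^1)\subseteq P(\x^2)$ (so both are nonempty), and I would argue $d(\x^1)\leq_c d(\x^2)$ as follows: $d(\x^1)\in P(\x^1)\subseteq P(\x^2)$, so by Lemma~\ref{max} applied with $\bv=\x^2$, every point of $P(\x^2)$ is $\leq_c\x^{\x^2}=d(\x^2)$; in particular $d(\x^1)\leq_c d(\x^2)$. Taking floors, again a monotone operation, preserves this inequality, giving $h(\x^1)\leq_c h(\x^2)$.

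For (iii), I would prove both directions. For $(\Leftarrow)$, suppose $\x^*$ is an integer point of $P$. Then $\x^*\in P(\x^*)$, so $P(\x^*)\neq\emptyset$, and by Lemma~\ref{max} we have $\x^*\leq_c d(\x^*)$; but $d(\x^*)\in P(\x^*)$ forces $d(\x^*)\leq_c\x^*$, hence $d(\x^*)=\x^*$, and since $\x^*$ is integer, $h(\x^*)=\lfloor\x^*\rfloor=\x^*$. The inequality $\x^*\neq\x^l$ follows from $\x^l<_c\x^{\min}\leq_c\x^*$ (the strict inequality $\x^l<\x^{\min}$ is built into the construction of $\x^l$). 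For $(\Rightarrow)$, suppose $h(\x^*)=\x^*$ and $\x^*\neq\x^l$. If $P(\x^*)=\emptyset$ then $h(\x^*)=\x^l$, contradicting $\x^*\neq\x^l$; hence $P(\x^*)\neq\emptyset$. Then $\lfloor d(\x^*)\rfloor=\x^*$ yields $d(\x^*)\geq_c\x^*$, while $d(\x^*)\in P(\x^*)$ yields $d(\x^*)\leq_c\x^*$. Therefore $d(\x^*)=\x^*\in P$, and integrality of $\x^*$ comes for free from $\x^*=h(\x^*)\in D(P)\subseteq\mathbb{Z}^n$.

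I do not anticipate a genuinely hard step. The only place that requires care is the inequality $d(\x^1)\leq_c d(\x^2)$ in (ii): one must resist the temptation to argue it from the $\e^\top$-optimality of $d$, which would only give a sum comparison. The right argument is the componentwise one supplied by Lemma~\ref{max} applied to the larger set $P(\x^2)$, and this is the key conceptual move. Everything else is bookkeeping about floors, about the trivial case $P(\x)=\emptyset$, and about the strict gap $\x^l<\x^{\min}$ that was engineered into the definition of $\x^l$ precisely to make the "$\neq\x^l$" clause automatic on $P$.
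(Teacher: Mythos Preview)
Your proof is correct and follows essentially the same route as the paper's own argument: the containment $P(\x^1)\subseteq P(\x^2)$ yields $d(\x^1)\leq_c d(\x^2)$, and the fixed-point equivalence is handled by the same contradiction on $P(\x^*)=\emptyset$ together with the sandwich $\x^*\geq_c d(\x^*)\geq_c\lfloor d(\x^*)\rfloor=\x^*$. Your write-up is in fact more careful than the paper's at two points: you explicitly separate the empty case $P(\x^1)=\emptyset$, and you correctly flag that the componentwise inequality $d(\x^1)\leq_c d(\x^2)$ rests on Lemma~\ref{max} (the componentwise-maximum characterization) rather than on mere $\e^\top$-optimality, which the paper leaves implicit.
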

{\bf Proof.} Let $\x^1$ and $\x^2$ be two different points of $R^n$
with $\x^1\leq_c \x^2$. Then, $P(\x^1)\subseteq P(\x^2)$. Thus, from the
definition of $d(\x)$, we obtain that $\x^{\min}\leq_c d(\x^1)\leq_c
d(\x^2)\leq_c \x^{\max}$. The first part of the lemma follows
immediately.

Let $\x^*$ be an integer point in $P$. Then,
\[d(\x^*)=\mbox{argmax}_{\y\in P(\x^*)}\e^{\top}\y=\x^*.\] Thus, $h(\x^*)=\x^*\neq \x^l$.

 Let $\x^*$ be a
point in $R^n$ satisfying that $h(\x^*)=\x^*\neq \x^{l}$. Suppose that
$P(\x^*)=\emptyset$. Then, $d(\x^*)=\x^l$. Thus,
\[\x^*=h(\x^*)=\lfloor \x^l\rfloor=\x^l.\] A contradiction
occurs. Therefore, $P(\x^*)\neq\emptyset$ and, consequently,
$d(\x^*)\in P$. Since
\[\x^*\geq_c d(\x^*)\geq_c\lfloor d(\x^*)\rfloor = h(\x^*)=\x^*,\]hence,
$d(\x^*)=\lfloor d(\x^*)\rfloor=\x^*$. This completes the
proof.\hfill\fbox{}

Let $(L,\leq_c)$ be a finite lattice and $f$ an order preserving mapping
from $L$ into itself.
  As a corollary of
Theorem~\ref{npcomplete} and Lemma~\ref{ip}, we obtain that
\begin{corollary} \label{nphard} Given lattice $(L,\leq_c)$ and an order preserving mapping $f$ as a polynomial function,
determining that $f$ has a unique fixed point in $L$ is a Co-NP hard
problem.
\end{corollary}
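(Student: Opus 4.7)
The plan is to give a polynomial-time reduction from the NP-complete integer feasibility problem of Theorem~\ref{npcomplete} to the problem of deciding whether an order-preserving self-map on a componentwise lattice has a second fixed point. The target lattice will be $D(P)$ and the target function will be $h$, both of which are already defined in the excerpt; the machinery of Lemma~\ref{ip} does most of the work.

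\medskip

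\noindent\textbf{Step 1 (construct the lattice and the function).} Given an instance $P=\{\x\in R^n\;|\;A\x\leq \bb\}$ of integer feasibility from Theorem~\ref{npcomplete}, I would first compute $\x^u=\lfloor \x^{\max}\rfloor$ and $\x^l=\lfloor \x^{\min}\rfloor$, which takes $2n$ linear programs of polynomially bounded size (LP is in P; also $\x^l<_c \x^{\min}$ is enforced by decrementing any coordinate where equality holds, as the excerpt allows). Then I form the componentwise lattice $(D(P),\leq_c)$ with $D(P)=\{\x\in Z^n\;|\;\x^l\leq_c \x\leq_c \x^u\}$ and take $f:=h|_{D(P)}$. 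By Lemma~\ref{ip}, $h$ maps $R^n$ into $D(P)$ and is order-preserving, so the restriction $f:D(P)\to D(P)$ is order-preserving. Crucially, evaluating $h(\x)$ requires solving one LP over $P(\x)$ (plus componentwise flooring), so $h$ is given as a polynomial-time function, exactly as required by the statement.

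\medskip

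\noindent\textbf{Step 2 (exhibit a known fixed point).} I would point out that $\x^l$ is always a fixed point of $h$. Since we arranged $\x^l<_c \x^{\min}$, for every $\y\in P$ we have $\y\geq_c \x^{\min}>_c \x^l$, so $P(\x^l)=\emptyset$. By the definition of $d$, this gives $d(\x^l)=\x^l$ and hence $h(\x^l)=\lfloor \x^l\rfloor=\x^l$. So $\x^l$ is a known fixed point that we hand to the uniqueness oracle.

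\medskip

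\noindent\textbf{Step 3 (relate second fixed points to integer feasibility).} By Lemma~\ref{ip}, $h(\x^*)=\x^*$ with $\x^*\neq \x^l$ if and only if $\x^*$ is an integer point of $P$. Consequently, $f$ has a unique fixed point in $D(P)$ (necessarily $\x^l$) if and only if $P$ contains no integer point. A polynomial-time algorithm for deciding uniqueness of a Tarski fixed point in the componentwise case would therefore decide emptiness of the integer hull of $P$ in polynomial time; by Theorem~\ref{npcomplete} this is Co-NP-hard, so the uniqueness problem is Co-NP-hard as well.

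\medskip

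\noindent\textbf{Main obstacle.} The non-routine parts are not in the reduction itself but in checking that the chosen $f$ satisfies the hypotheses of the statement: that $f$ genuinely lands in $D(P)$ (handled by Lemma~\ref{ip}, whose monotonicity proof is the main technical content of the appendix), that $\x^l$ is genuinely a fixed point (handled by the $\x^l<_c\x^{\min}$ normalization), and that $h$ is truly a polynomial-time function, which follows because each query reduces to one LP over a polytope whose description is polynomially bounded in the input. Once these three points are verified, the reduction itself is a one-line consequence of Theorem~\ref{npcomplete} and the ``if and only if'' half of Lemma~\ref{ip}.
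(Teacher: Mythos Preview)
Your proposal is correct and follows essentially the same route as the paper: the appendix derives this corollary directly ``as a corollary of Theorem~\ref{npcomplete} and Lemma~\ref{ip}'' without further argument, and your Steps~1--3 simply spell out that derivation (including the verification that $\x^l$ is always a fixed point via $P(\x^l)=\emptyset$, which the paper leaves implicit).
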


\end{document}